\newcommand{\forget}[1]{}
\newcommand{\realnum}{{r}}  
\newcommand{\reals}{{\mathbb R}}
\newcommand{\realsnull}{\reals_{\geq 0}}
\newcommand{\Nat}{{\mathbb N}}
\newcommand{\Eff}{{\mathcal E}}       
\renewcommand{\SS}{\hat{S}}          
\newcommand{\semantics}[2]{\llbracket #1 \rrbracket_{#2}}
\newcommand{\unpsemantics}[1]{\semantics{#1}{}}
\newcommand{\dsemantics}[1]{\semantics{#1}{B}}
\newcommand{\densesemantics}[1]{\semantics{#1}{\realsnull}}
\newcommand{\sampledsemantics}[1]{\semantics{#1}{\epsilon}}
\newcommand{\actions}{\Sigma}
\newcommand{\fract}{{\sf fr}}
\newcommand{\intg}{{\sf int}}
\newcommand{\clocks}{{\mathcal C}}
\newcommand{\arrow}[1]{\stackrel{#1}\longrightarrow}
\newcommand{\Dom}{{\mathbb T}}        
\newcommand{\ppr}{{\rightarrow_p}}    
\newcommand{\Null}{{\sf null}}        
\newcommand{\rank}{{\sf rank}}        
\newcommand{\numodels}[1]{\overline{#1}_{\nu}}                
\newcommand{\nupmodels}[1]{\overline{#1}_{\nu'}}              
\newcommand{\Dmodels}[3]{#1#2_D#3}        
\newcommand{\Dpmodels}[3]{#1#2_{D'}#3}      
\newcommand{\epsmodels}[2]{#1 \models_\epsilon #2}        
\newcommand{\twoepsmodels}[2]{#1 \models_{2\epsilon} #2}        
\newcommand{\plus}{\oplus}             
\theoremstyle{plain}\newtheorem{lemma}[thm]{Lemma}
\newcommand{\note}[1]{\forget{#1}}
\renewcommand{\revision}[1]{{\forget{#1}}}
\def\doi{6 (3:14) 2010}
\begin{document}

\title[Sampled Semantics of Timed Automata]{Sampled Semantics of Timed Automata}

\author[P.~A.~Abdulla]{Parosh Aziz Abdulla}	
\address{Department of Information Technology, Uppsala University, Sweden}	 \email{\{parosh,pavelk,yi\}@it.uu.se}  

\author[P.~Krcal]{Pavel Krcal}	

\author[W.~Yi]{Wang Yi}	



\keywords{Timed automata, sampling, limitedness, decidability}
\subjclass{F.1.1, F.4.3}



\begin{abstract}
  Sampled semantics of timed automata is a finite approximation of their dense time behavior. While the former is closer to the actual software or hardware systems with a fixed granularity of time, the abstract character of the latter makes it appealing for system modeling and verification. We study one aspect of the relation between these two semantics, namely checking whether the system exhibits some qualitative (untimed) behaviors in the dense time which cannot be reproduced by any implementation with a fixed sampling rate. More formally, the \emph{sampling problem} is to decide whether there is a sampling rate such that all qualitative behaviors (the untimed language) accepted by a given timed automaton in dense time semantics can be also accepted in sampled semantics. We show that this problem is decidable.
\end{abstract}

\maketitle

\section{Introduction}
\label{Sec:Introduction}

\note{Notational conventions: $n$ is the number of counters, $D$ is a region, $B$ is a bound. $\delta$ is the label for time successor, $t$ is a letter for effects. We use the term 'instruction' for $0,1,r,*j$, 'effect' for $n$-tuples of instructions, 'operation' is a synonym for 'function'.}
\note{Put somewhere that a part of this paper has been published as~\cite{aky08rc-automata}?}

Dense time semantics allows timed automata~\cite{alur94theory} to delay for arbitrary real valued amounts of time. This includes also arbitrarily small delays and delays which differ from each other by arbitrarily small values. Neither of these behaviors can be enforced by an implementation operating on a concrete hardware. Each such implementation necessarily includes some (hardware) digital clock which determines the least time delay measurable or enforceable by the system.

This observation motivates \emph{sampled semantics} of timed automata, which is a discrete time semantics with the smallest time step fixed to some fraction of $1$. In other words, the time delays in a sampled semantics with the smallest step $\epsilon$ can be only multiples of $\epsilon$. There are infinitely many different sampled semantics, but any of them allows fewer behaviors of the system than dense time semantics. On the other hand, all of the allowed behaviors in a sampled semantics with the sampling rate (the smallest step) $\epsilon$ will be preserved in an implementation on a platform with the clock rate $\epsilon$ (and all fractions of $\epsilon$).

One of the arguments in favor of using dense time semantics is that one does not have to consider a concrete sampling rate of an implementation in the modeling and analysis phase. Dense time semantics abstracts away from concrete sampling rates by including all of them. Also, it seems adequate to assume that the environment stimuli come at any real time point without our control. 

If a concrete timed automaton serves as a system description for later implementation, one might try to find a sampling rate which preserves all qualitative behaviors (untimed words). The restriction to qualitative behaviors is necessary, because any sampling rate excludes infinitely many dense time behaviors. By this we lose the explicit timing information, but many important properties, including implicit timing, are preserved. For instance, if we know that the letter $b$ cannot appear later than $5$ time units after an occurrence of the letter $a$ in the dense time model and then there is an untimed word accepted by this automaton where $a$ is followed by $b$ then we know that there is a run where $b$ comes within $5$ time units after $a$.

The problem of our interest can be formalized as follows: decide whether for a given timed automaton there is a sampling rate such that all untimed words accepted by the automaton in dense time semantics are also accepted in sampled semantics with the fixed sampling rate. We call this the \emph{sampling problem} for timed automata. 

There are timed automata with qualitative behaviors which are not accepted in any sampled semantics. This relies on the fact that timed automata can force differences between the fractional parts of the clock values to grow. In sampled semantics with the smallest time step fixed to $\epsilon$, the distance can only be increased in multiples of $\epsilon$, which implies that the distance between a pair of clocks can grow at most $1/\epsilon$ times. One more increase would make the fractional parts equal again. A sampling rate ensuring acceptance of an untimed word must induce enough valuations within each clock region in order to accommodate increases of the distances between the fractional parts of clock values along some accepting run. If there is a sequence of untimed words which require smaller and smaller time steps in order to be accepted then any fixed sampling necessarily loses some of these words.

To enforce clock difference growth, a timed automaton has to use strict inequalities $<$ and $>$ in its clock guards. Closed timed automata, i.e., timed automata with only non-strict inequalities $\leq$ and $\geq$ in the guards, can be always sampled with the sampling rate $1$. Closed timed automata possess one important property -- they are {\it closed under digitization}~\cite{ouaknine03universality}. The property {\it "closed under digitization"} has been defined in~\cite{henzinger92what} and it is connected to our problem in the following sense: if the timed language of a timed automaton is closed under digitization then all (untimed) behaviors of this timed automaton are preserved with $\epsilon = 1$. Also, closure under digitization was shown to be decidable in~\cite{ow-lics03}.  
\note{\cite{bmt99efficient} states this in a bit strange way. It is not clear to me if it is really the same thing, but they refer to the same sources.}

The growth of clock value differences corresponds to a special type of memory. When a clock value difference grows three times then there must be at least three different clock value differences smaller than the current one. We show that this memory can be characterized by a new type of counter automata -- with finite state control and a finite number of unbounded counters taking values from the natural numbers. The counters can be updated along the transitions by the following instructions:
\begin{enumerate}[$\bullet$]
  \item $0$: the counter keeps its value unchanged,
  \item $1$: the counter value is incremented,
  \item $r$: the counter value is reset to $0$,
  \item copy: the counter value is set to the value of another counter,
  \item $\max$: under some conditions, the counter value can be set to the maximum of sums of pairs of counters.
\end{enumerate}

\noindent The sampling problem can be reformulated for our counter automata as follows. We want to decide whether there is a bound such that all words accepted by the automaton can be accepted also by runs along which all counters are bounded by this bound. This problem was studied earlier as the \emph{limitedness problem} for various types of automata with counters. We show that this problem is decidable for our automata by reducing it to the limitedness problem of a simpler type of automata, R-automata~\cite{aky08r-automata}.

\paragraph{\bf Related work.} The problem of asking for a sampling rate which satisfies given desirable properties has been studied in~\cite{amp98disc,cassez02comparison,kp05on}. In~\cite{amp98disc}, the authors identify subclasses of timed automata (or, digital circuits which can be translated to timed automata) such that there is always an $\epsilon$ which preserves all qualitative behaviors. The problem of deciding whether there is a sampling rate ensuring  language non-emptiness is studied in~\cite{cassez02comparison,kp05on}.
Work on digitization of timed languages~\cite{henzinger92what} identifies systems for which verification results obtained in discrete time transfer also to the dense time setting. Digitization takes timing properties into account more explicitly, while we consider only qualitative behaviors. A different approach to discretization has been developed in~\cite{GPV94}. This discretization scheme preserves all qualitative behaviors for the price of skewing the time passage.
Implementability of systems modeled by timed automata on a digital hardware has been  studied in~\cite{wdr04aasap,kmty-concur04,altisen05implementation}. The papers~\cite{wdr04aasap,kmty-concur04} propose a new semantics of timed automata with which one can implement a given system on a sufficiently fast platform. On the other hand,~\cite{altisen05implementation} suggests a methodology in which the hardware platform is modeled by timed automata in order to allow checking whether the system satisfies the required properties on the given platform.

The limitedness problem has been studied for various types of finite automata with counters. First, it has been introduced by Hashiguchi~\cite{hashiguchi82limitedness} for distance automata (automata with one counter which can be only incremented). Different proofs of the decidability of the limitedness problem for distance automata are reported in~\cite{hashiguchi90improved,leung91limitedness,simon94semigroups}. Distance automata were extended in~\cite{kirsten05distance} with additional counters which can be reset following a hierarchical discipline resembling parity acceptance conditions. Our automata relax this discipline and allow the counters to be reset arbitrarily. Universality of a similar type of automata for tree languages is studied in~\cite{ICALP08:colcombet-loeding-mostowski,CSL08:colcombet-loeding-depth-mu-calculus}. A model with counters which can be incremented and reset in the same way as in R-automata, called B-automata, is presented in~\cite{bc06bounds}. B-automata accept infinite words such that the counters are bounded along an infinite accepting computation.

\paragraph{\bf Structure of the Paper.} The rest of the paper is organized as follows. In Section~\ref{Sec:Preliminaries}, we introduce timed automata, dense time and sampled semantics, and our problem. Moreover, we define some technical concepts. Section~\ref{Sec:Results} states the result and sketches the structure of the proof. The model of automata with counters is presented in Section~\ref{Sec:extended-R-automata}, where also the important properties of these automata are shown. The main step of the proof, the construction of a counter automaton from a given timed automaton, together with the correspondence proofs is in Section~\ref{Sec:Encoding-TA-RC}. The proof is completed in Section~\ref{Sec:proof}.

\section{Preliminaries}
\label{Sec:Preliminaries}

In this section, we define syntax and two types of semantics (standard real time and sampled semantics) of timed automata and our problem. We also define region graphs for timed automata and a new notation which simplifies talking about clock differences and clock regions. Let $\Nat$ denote the set of non-negative integers.

\paragraph{\bf Syntax.} Let $\clocks$ be a finite set of non-negative real-valued variables called \emph{clocks}. The set of guards $G(\clocks)$ is defined by the grammar $g := x \bowtie c \mid g \wedge g$ where $x \in \clocks, c \in \Nat$ and $\bowtie\ \in \{ <, \leq, \geq, > \}$. A \emph{timed automaton} is a tuple $A = (Q, \actions, \clocks, q_0, E, F)$, where:
\begin{enumerate}[$\bullet$]
\item $Q$ is a finite set of locations,
\item $\Sigma$ is a finite alphabet,
\item $\clocks$ is a finite set of clocks,
\item $q_0 \in Q$ is an initial location,
\item $E \subseteq Q \times \actions \times G(\clocks) \times 2^\clocks \times Q$ is a finite transition relation, and
\item $F\subseteq Q$ is a set of accepting locations.
\end{enumerate}


\paragraph{\bf Semantics.}
Semantics is defined with respect to a given time domain $\Dom$. We suppose that a time domain is a subset of real numbers which contains $0$ and is closed under addition. Also, we suppose that $\Dom \cap \actions = \emptyset$. A \emph{clock valuation} is a function $\nu: \clocks \rightarrow \Dom$. If $\realnum \in \Dom$ then a valuation $\nu + \realnum$ is such that for each clock $x \in \clocks$, $(\nu + \realnum)(x) = \nu(x) + \realnum$. If $Y \subseteq \clocks$ then a valuation $\nu[Y:=0]$ is such that for each clock $x \in \clocks \smallsetminus Y$, $\nu[Y:=0](x) = \nu(x)$ and for each clock $x \in Y$, $\nu[Y:=0](x) = 0$. The satisfaction relation $\nu \models g$ for $g \in G(\clocks)$ is defined in the natural way.

The semantics of a timed automaton $A = (Q, \actions, \clocks, q_0, E, F)$ with respect to the time domain $\Dom$ is a labeled transition system (LTS) $ \semantics{A}{\Dom} = (\hat{Q}, \actions \cup \Dom, \rightarrow, \hat{q_0})$ where $\hat{Q} = Q \times \Dom^\clocks$ is the set of states, $\hat{q_0} = \langle q_0, \nu_0 \rangle$ is the initial state, $\nu_0(x) = 0$ for all $x \in \clocks$. The transition relation is defined as follows: $\langle q, \nu \rangle \arrow{\alpha} \langle q', \nu' \rangle$ if and only if
\begin{enumerate}[$\bullet$]
\item {\bf time step:} $\alpha \in \Dom$, $q = q'$, and $\nu' = \nu + \alpha$, or
\item {\bf discrete step:} $\alpha \in \actions$, there is $(q,a, g,Y,q') \in  E$, $\nu \models g, \nu' = \nu[Y:=0]$.

\end{enumerate}

We call paths in the semantics LTS \emph{runs}. For a finite run $\rho$ let $l(\rho)\in (\actions \cup \Dom)^*$ be the sequence of labels along this path. Let $l(\rho)\upharpoonright \Dom \in \actions^*$ be the sequence of labels with all numbers projected out. We use the same notation also for infinite (countable) runs containing infinitely many discrete steps. Namely, $l(\rho)\upharpoonright \Dom \in \actions^\omega$ if $\rho$ is such a run.

\paragraph{\bf Language.} A finite run $\rho = \langle q_0, \nu_0 \rangle \arrow{}^* \langle q, \nu \rangle$ is accepting if $q \in F$. The (untimed finite word) language of a timed automaton $A$ parameterized with the time domain $\Dom$, denoted $L_{\Dom}(A)$ is the set of words which can be read along the accepting runs of the semantics LTS.  Formally, $L_{\Dom}(A) = \{l(\rho)\upharpoonright \Dom\ |\ \rho$ is a finite accepting run in $\semantics{A}{\Dom} \}$.

An infinite (countable) run with infinitely many discrete steps is accepting if it contains an infinite set of states $\{\langle q,\nu_i \rangle | i\in\Nat \}$ such that $q\in F$ (standard B\"{u}chi acceptance condition). The (untimed) $\omega$-language of a timed automaton $A$ parameterized with the time domain $\Dom$, denoted $L_{\Dom}^\omega(A)$ is the set of words which can be read along the infinite countable accepting runs of the semantics LTS.  Formally, $L_{\Dom}^\omega(A) = \{l(\rho)\upharpoonright \Dom\ |\ \rho$ is an infinite countable accepting run in $\semantics{A}{\Dom} \}$.

Let $\realsnull$ be the set of all non-negative real numbers. Let the time domain $\Dom_\epsilon$ for an $\epsilon = 1/k$ for some $k \in \Nat$ be the set $\Dom_\epsilon = \{ l\cdot \epsilon \mid l\in \Nat\}$. We consider the time domains $\realsnull$ and $\Dom_\epsilon$ for all $\epsilon$. The semantics induced by $\realsnull$ is called \emph{dense time semantics} and the semantics induced by a $T_\epsilon$ is called \emph{$\epsilon$-sampled semantics}. We  use the following shortcut notation: $\sampledsemantics{A} = \semantics{A}{\Dom_\epsilon}, L(A) = L_{\realsnull}(A), L^\omega(A) = L^\omega_{\realsnull}(A), L_\epsilon(A) = L_{\Dom_\epsilon}(A), L_\epsilon^\omega(A) = L^\omega_{\Dom_{\epsilon}}(A)$.

\paragraph{\bf Problems.} We deal with the following problems. Decide for a timed automaton $A$ whether there is an $\epsilon = 1/k$ for some $k\in\Nat$ such that
\begin{enumerate}[$\bullet$]
  \item $L_\epsilon(A) = L(A)$, (sampling)
  \item $L^\omega_\epsilon(A) = L^\omega(A)$ ($\omega$-sampling).
\end{enumerate}

There are timed automata such that no matter how small $\epsilon$ we choose, $L_\epsilon(A) \neq L(A)$ and (or) $L_\epsilon^\omega(A) \neq L^\omega(A)$. As an example, consider the timed automaton in Figure~\ref{Fig:no-sampling}. It enforces the difference between clock values to shrink while being strictly greater than $0$. If the values of $x,y$ are $0.1, 0.6$, respectively, in the location $q_1$ then the difference between the clock values in the location $q_1$ after reading $ba$ will be strictly smaller than $0.5$.

\begin{figure}[htbp]
\centering
\begin{tikzpicture}[>=stealth]

\draw(0,0) node [circle,draw,name=s0]{$q_0$};
\draw[xshift= 4cm](0,0) node [circle,double,draw,name=s1]{$q_1$};
\draw[xshift= 8cm](0,0) node [circle,draw,name=s2]{$q_2$};

\draw[->] (s0) -- (s1) node[midway,above]{$a, x<1 \wedge y<1$}
                       node[midway,below]{$x:=0$};
\draw[->] (s1) .. controls (6,1) .. (s2) node[midway,above]{$b, y = 1$}
node[midway,below]{$y := 0$};
\draw[->] (s2) .. controls (6,-1) .. (s1) node[midway,below]{$a, y>0 \wedge x < 1$}
node[midway,above]{$x:=0$};

\draw[->] (-0.7, 0) -> (s0);

\end{tikzpicture}

\caption{A timed automaton which does not preserve qualitative behaviors in sampled semantics. This example is adapted from~\cite{alur94theory}.}
\label{Fig:no-sampling}
\end{figure}
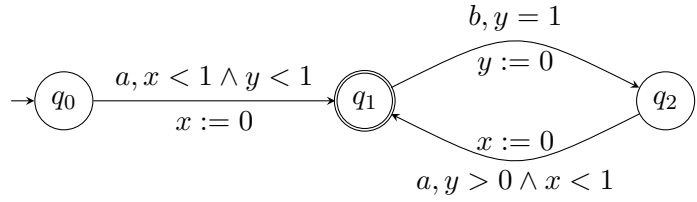

\forget{
\begin{figure}[tbhp]
  \begin{center}

    \[ \xymatrix{
      \ar[r] & *++[o][F=]{l_0} \ar@/^.5cm/[rrrr]^{a,y=1, y:=0} & & & & *++[o][F-]{l_1}
      \ar@/^.5cm/[llll]^{b,1>y \wedge x>1, x:=0} \\
      }
      \]

    \caption{A timed automaton which does not preserve qualitative behaviors in sampled semantics. This example is adapted from~\cite{alur94theory}.}
    \label{Fig:no-sampling-xy}
  \end{center}
\end{figure}
}

\paragraph{\bf Region graph.} We introduce the region equivalence and the standard notion of region graph. Our concept of region equivalence differs from the standard definition in the following technical detail: we consider also the fractional parts of the clocks with the integral part greater than the maximal constant (but we consider only integral parts smaller than or equal to the maximal constant). The important properties of the standard region equivalence (untimed bisimilarity of the equivalent valuations and finite index) are preserved in our definition.

Let for any $\realnum \in \realsnull$, $\intg(\realnum)$ denote the integral part of $\realnum$ and $\fract(\realnum)$ denote the fractional\note{Is this called like this when $\realnum$ is a real?} part of $\realnum$. Let $k$ be an integer constant. For a set of clocks $\clocks$, the relation $\cong_k$ on the set of clock valuations is defined as follows:

\begin{enumerate}[$\bullet$]
\item $\nu \cong_k \nu'$ if and only if all the following conditions hold:

\begin{enumerate}[$-$]
\item for all $x \in \clocks: \intg(\nu(x)) = \intg(\nu'(x))$  or ($\nu(x)>k
  \wedge \nu'(x)>k$),
\item for all $x,y\in \clocks:$
  $\fract(\nu(x)) < \fract(\nu(y))$ if and only if $\fract(\nu'(x)) <  \fract(\nu'(y))$ and   $\fract(\nu(x)) = \fract(\nu(y))$ if and only if $\fract(\nu'(x)) = \fract(\nu'(y))$,
\item for all $x \in \clocks: \fract(\nu(x)) = 0$ if and only if
  $\fract(\nu'(x))= 0$.
\end{enumerate}
\end{enumerate}

\noindent Let $A$ be a timed automaton and $K$ be the maximal constant which occurs in some guard in $A$. For each location $q \in Q$ and two valuations $\nu \cong_K \nu'$ it holds that $(q,\nu)$ is untimed bisimilar to $(q,\nu')$. Also, $\cong_k$ has a finite index for all semantics. We call  equivalence classes of the region equivalence $\cong_K$ \emph{regions} of $A$ and denote them by $D, D', D_1, \dots$. 
For a region $D$ the region $D'$ is the \emph{immediate time successor} if $D' \neq D$, there is $\nu \in D, \realnum\in \realsnull$ such that $\nu+\realnum \in D'$, and for all $\nu \in D, \realnum\in \realsnull$ such that $\nu + \realnum \in D'$ it holds that $\nu+\realnum' \in D\cup D'$ for all $\realnum'\leq \realnum$.

Let $\delta$ be a letter such that $\delta\notin\Sigma$. Given a timed automaton $A = (Q, \actions, \clocks, q_0$, $E$, $F)$, its \emph{region graph} $G = \langle N, \Sigma\cup\{\delta\}, \arrow{} \rangle$ is a labeled directed graph where the set of nodes $N$ contains pairs $\langle q, D \rangle$, where $q$ is a location of $A$ and $D$ is a region of $A$ and $\arrow{} \subseteq N \times \Sigma \cup \{\delta\} \times N$ is a set of labeled edges. Informally, the edges lead to an immediate time successor (labeled by $\delta$) or a discrete successor (labeled by a letter from $\actions$). Formally, $\langle q,D \rangle \arrow{\delta} \langle q, D' \rangle$ if $D'$ is the immediate time successor of $D$ and $\langle q,D\rangle \arrow{a} \langle q', D' \rangle$ if $(l, a, g, Y, l') \in E$, $\nu \models g$ for all $\nu \in D$ and $D' = \{\nu[Y:=0] | \nu \in D\}$.
\note{Say that this definition is standard + reference and that is why we do not show that it is well-defined.}

For a path in the region graph $\sigma = \langle q_1,D_1 \rangle \arrow{w} \langle q_k, D_k \rangle$ we say that a run of the timed automaton in the real or $\epsilon$-sampled semantics (a path in $\densesemantics{A}$ or $\sampledsemantics{A}$, respectively) $\rho = \langle \bar{q_1},\nu_1 \rangle \arrow{w} \langle \bar{q_l},\nu_l \rangle$ is \emph{along this path} if $k = l$ and for all $1\leq i \leq k$, $\langle q_i,D_i \rangle$ is the $i$-th node in $\sigma$, $\langle \bar{q_i},\nu_i \rangle$ is the $i$-th state in $\rho$, $q_i = \bar{q_i}$ and $\nu_i \in D_i$. We denote this by $\rho \models \sigma$.

By $D_\epsilon$, where $\epsilon = 1/k$ for some $k\in\Nat$, we denote the region $D$ restricted to the valuations from the $\epsilon$-sampled semantics. I.e., for all $\nu\in D_\epsilon$, we have that $\nu\in D$ and for all clocks $x$, $\nu(x) = l_x \cdot \epsilon$, where $l_x \in \Nat$.

\subsection{Notation for Clock Differences and Regions}
\label{Subsec:Notation}

We introduce the following notation frequently used in Section~\ref{Sec:Encoding-TA-RC}. For two clocks $b$ and $d$ and a clock valuation $\nu$, we write $\numodels{bd}$ to denote the difference between the fractional parts of the clocks $b,d$ in the valuation $\nu$.
%
%
The distance says how much to the right do we have to move the left clock ($b$ in our case), where the movement to the right wraps at $1$ back to $0$, to reach the right clocks ($d$ in our case). The concept is demonstrated in Figure~\ref{Fig:region-distances}. This figure depicts a valuation of clocks $a,b,c,d$, whose integral values are zero (but they are irrelevant for this definition) and whose fractional parts are set according to the figure ($\nu(a) = 0, \nu(b) = 0.25, \nu(c) = 0.55, \nu(d) = 0.75$). The fractional part of $d$ is greater than that of $b$ and hence to compute $\numodels{bd}$ we simply record how much do we need to move $b$ to the right to reach $d$. This distance is depicted by the (green) dashed arrow above the solid horizontal line. The fractional part of $c$ is greater than that of $b$ and hence to compute $\numodels{cb}$ we need to move $c$ to the right until it reaches $1$, then it wraps (jumps) to $0$, and then we move it further to the right to reach $b$. This distance is depicted by the (red) dashed arrow(s) below the solid horizontal line.

  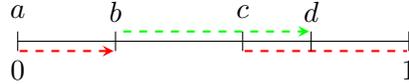
\begin{figure}[htbp]
    \centering
    \begin{tikzpicture}[>=stealth, scale=1.3]

      \path(-2,0) coordinate (0);
      \path(2,0) coordinate (1);
      \path(-2,.1) coordinate (0u);
      \path(-2,-.1) coordinate (0l);
        \path(-1.98,-.1) coordinate (0lr);
      \path(2,.1) coordinate (1u);
      \path(2,-.1) coordinate (1l);
        \path(1.98,-.1) coordinate (1ll);
      \path(-1,.1) coordinate (bu);
        \path(-.92,.1) coordinate (bur);
      \path(-1,-.1) coordinate (bl);
        \path(-1.02,-.1) coordinate (bll);
      \path(-.2,.3) coordinate (xu);
      \path(-.4,0) coordinate (xl);
      \path(.3,.1) coordinate (cu);
      \path(.3,-.1) coordinate (cl);
        \path(.32,-.1) coordinate (clr);
      \path(1,.1) coordinate (du);
        \path(.98,.1) coordinate (dul);
      \path(1,-.1) coordinate (dl);
      \draw(-2,-.3) node [name=zero]{$0$};
      \draw(2,-.3) node [name=one]{$1$};
      \draw(-2,.3) node [name=x]{$a$};
      \draw(-1,.3) node [name=b]{$b$};
      \draw(.3,.3) node [name=c]{$c$};
      \draw(1,.3) node [name=d]{$d$};

      \draw (0)--(1);
      \draw (0u)--(0l);
      \draw (1u)--(1l);
      \draw (bu)--(bl);
      \draw[->,dashed,thick,green] (bur)--(dul);
      \draw[dashed,thick,red] (clr)--(1ll);
      \draw[->,dashed,thick,red] (0lr)--(bll);
      \draw (cu)--(cl);
      \draw (du)--(dl);

    \end{tikzpicture}
    \caption{Illustration of a valuation $\nu$ and the distances between the fractional parts of the clocks. The values of the clocks are $\nu(a) = 0, \nu(b) = 0.25, \nu(c) = 0.55, \nu(d) = 0.75$. The distance between $b$ and $d$ is $\numodels{bd} = 0.5$, the distance between $c$ and $b$ is $\numodels{cb} = 0.7$. Later on, we use this type of diagram only for regions and not for valuations.}
    \label{Fig:region-distances}
  \end{figure}

%
\noindent Formally, for clocks $x,y$ and a clock valuation $\nu$, $\numodels{xy}$ is defined as follows.
    \[ \numodels{xy} = \left\{ \begin{array}{lp{0.3cm}l}
        \fract(\nu(y)) - \fract(\nu(x)) && \mbox{if\ } \fract(\nu(y)) \geq \fract(\nu(x)) \\
        1 - (\fract(\nu(x)) - \fract(\nu(y)))  && \mbox{otherwise}\end{array} \right. \]

We also need to talk about the order of the clocks in a region (an equivalence class of a region equivalence). We say that a region $D$ satisfies an (in)equality $x \bowtie y$ or $x = 0$ (written $\Dmodels{x}{\bowtie}{y}, \Dmodels{x}{=}{0}$) where $\bowtie \in \{<, >, =, \leq, \geq, \neq\}$ if it is true of the fractional parts of x and y in all valuations in the region. Formally, $\Dmodels{x}{\bowtie}{y}$ if for all $\nu\in D$,  $\fract(\nu(x)) \bowtie \fract(\nu(y))$ and $\Dmodels{x}{=}{0}$ if for all $\nu\in D$,  $\fract(\nu(x)) = 0$. Note, that for a given region $D$, either $\fract(\nu(x)) \bowtie \fract(\nu(y))$ holds for all the valuations $\nu\in D$ or it holds for none. Therefore, we adopt the graphical illustration of regions shown in Figure~\ref{Fig:region-example}. Here, a region $D$ is depicted, where $\fract(\nu(a)) = 0$, $\fract(\nu(a)) < \fract(\nu(b)) = \fract(\nu(e)) < \fract(\nu(c)) < \fract(\nu(d))$ for all $\nu\in D$.

  \begin{figure}[htbp]
    \centering
    \begin{tikzpicture}[>=stealth, scale=1.3]

      \path(-2,0) coordinate (0);
      \path(2,0) coordinate (1);
      \path(-2,.1) coordinate (0u);
      \path(-2,-.1) coordinate (0l);
        \path(-1.98,-.05) coordinate (0lr);
      \path(2,.1) coordinate (1u);
      \path(2,-.1) coordinate (1l);
        \path(1.98,-.05) coordinate (1ll);
      \path(-1,.1) coordinate (bu);
        \path(-.92,.05) coordinate (bur);
      \path(-1,-.1) coordinate (bl);
        \path(-1.05,-.05) coordinate (bll);
      \path(-.2,.3) coordinate (xu);
      \path(-.4,0) coordinate (xl);
      \path(.3,.1) coordinate (cu);
      \path(.3,-.1) coordinate (cl);
        \path(.32,-.05) coordinate (clr);
      \path(1,.1) coordinate (du);
        \path(.98,.05) coordinate (dul);
      \path(1,-.1) coordinate (dl);
      \draw(-2,-.3) node [name=zero]{$0$};
      \draw(2,-.3) node [name=one]{$1$};
      \draw(-2,.3) node [name=x]{$a$};
      \draw(-1,.3) node [name=b]{$b,e$};
      \draw(.3,.3) node [name=c]{$c$};
      \draw(1,.3) node [name=d]{$d$};

      \draw (0)--(1);
      \draw (0u)--(0l);
      \draw (1u)--(1l);
      \draw (bu)--(bl);
      \draw (cu)--(cl);
      \draw (du)--(dl);

    \end{tikzpicture}
    \caption{Illustration of the order of the fractional parts of the clocks in a region $D$.}
    \label{Fig:region-example}
  \end{figure}
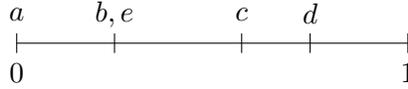

The last concept defined here relates the position of three clocks in a region. For clocks $x,y,z$ and a region $D$, $D \models \overline{xyz}$ tells us that if we start from $x$ and move to the right (and possibly wrap at $1$ back to $0$), we meet $y$ before we meet $z$. Formally, $D \models \overline{xyz}$ if there is a time successor $D'$ of $D$ such that $\Dpmodels{x}{<}{y}$ and $\Dpmodels{y}{<}{z}$. In Figure~\ref{Fig:region-example}, $D \models \overline{bcd}, D
\models \overline{cdb}$, holds, but it is not true that, e.g., $D \models \overline{dcb}$.

\section{Results}
\label{Sec:Results}

We state the main result of this paper -- that our problems are decidable -- and sketch the scheme of a proof of this result.

\begin{thm}
\label{Thm:Decidability}
Given a timed automaton $A$, it is decidable whether there is an $\epsilon = 1/k$ for some $k\in\Nat$ such that
\begin{enumerate}[$\bullet$]
  \item $L_\epsilon(A) = L(A)$ and
  \item $L^\omega_\epsilon(A) = L^\omega(A)$.\qed
\end{enumerate}
\end{thm}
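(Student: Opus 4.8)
The plan is to reduce the sampling problems for timed automata to the limitedness problem for the counter automata introduced in Section~\ref{Sec:extended-R-automata}, which in turn is reduced to the limitedness problem for R-automata, known to be decidable~\cite{aky08r-automata}. The overall structure follows the roadmap already announced in the introduction, so I would carry it out in four stages.

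\emph{Stage 1: Isolate the obstruction to sampling.} First I would make precise the informal observation that the only way a timed automaton can fail to be sampled is by forcing differences between fractional parts of clocks to grow unboundedly along accepting runs. Concretely, I would show that for a fixed $\epsilon = 1/k$, a path $\sigma$ in the region graph admits an $\epsilon$-sampled run $\rho \models \sigma$ if and only if the ``budget'' $1/\epsilon$ is large enough to accommodate all the clock-difference growth demanded along $\sigma$: each region $D$ restricted to $D_\epsilon$ contains only finitely many valuations, and moving through a $\delta$-edge that increases some $\numodels{bd}$ consumes part of that finite budget. This gives a combinatorial reformulation: $A$ is samplable (for finite words) iff there is a uniform bound $N$ such that every untimed word in $L(A)$ is witnessed by a region-graph path along which the required clock-difference growth is at most $N$.

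\emph{Stage 2: Encode the growth as counters.} Next I would carry out the construction of Section~\ref{Sec:Encoding-TA-RC}: from $A$ build a counter automaton $\mathcal A$ over the same alphabet whose states record a location together with a region (or the relevant ordering data on fractional parts), and whose counters track, for each relevant pair of clocks, how many distinct fractional-part values currently lie strictly between them — equivalently, how many times the corresponding difference has been increased since the last reset. The five instruction types ($0$, $1$, $r$, copy, $\max$) are exactly what is needed: a $\delta$-step that pushes one clock's fractional part past another increments a counter; a clock reset sets a fractional part to $0$ and triggers resets/copies; and the $\max$ operation captures the situation where a newly created gap inherits the accumulated width of two adjacent gaps it is built from. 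The correctness lemma to prove here is that $L(A) = L(\mathcal A)$ and, more importantly, that $A$ is samplable iff $\mathcal A$ is limited, i.e. there is a bound $B$ such that every word of $L(\mathcal A)$ is accepted by some run keeping all counters $\le B$; and likewise $L^\omega(A) = L^\omega(\mathcal A)$ with the Büchi variant of limitedness for the $\omega$-case.

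\emph{Stage 3: Decide limitedness.} Finally I would invoke the reduction (also in Section~\ref{Sec:extended-R-automata}) from the limitedness problem for these counter automata to the limitedness problem for plain R-automata~\cite{aky08r-automata}, whose decidability is known, handling the finite-word and $\omega$-word cases in parallel; since both sampling and $\omega$-sampling reduce to (their respective flavours of) limitedness, and limitedness is decidable, the conjunction in the theorem statement is decidable by deciding the two instances separately. \textbf{The main obstacle} I expect is Stage 2, and within it the soundness direction of the encoding: one must show that whenever the counter automaton has a $B$-bounded accepting run on a word $w$, there is a genuine $\epsilon$-sampled run of $A$ on $w$ for an $\epsilon$ depending only on $B$ (not on $w$) — this requires realizing an abstract region-graph path by an explicit assignment of fractional parts that are multiples of $\epsilon$, consistent across all the intertwined guards, resets, and the $\max$-induced gap sharing. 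The $\max$ instruction is the delicate point: it is the device that makes the counter abstraction tight rather than merely sound, and verifying that it neither over- nor under-counts the achievable gap growth (so that limitedness of $\mathcal A$ is \emph{equivalent} to, not just sufficient for, samplability of $A$) is where the real work lies.
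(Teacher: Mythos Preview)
Your overall plan matches the paper's proof: build an extended R-automaton from the region graph whose counters encode lower bounds on fractional-part distances, establish the two-way correspondence between samplability and limitedness (Lemmas~\ref{Lemma:TA-RC-correspondence} and~\ref{Lemma:cost-is-real}), and decide limitedness via the reduction to R-automata. You have also correctly identified the hard direction and the role of the $\max$ operation.

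There is, however, a concrete error in Stage~2 that would derail the construction if carried out as written. You say that ``a $\delta$-step that pushes one clock's fractional part past another increments a counter.'' But time-elapse does \emph{not} change any $\numodels{xy}$: all fractional parts move in lockstep, so $\delta$-transitions are labeled $(0,\dots,0)$ in the paper's automaton. What changes distances between fractional parts is a \emph{reset} of some clock $x$ with nonzero fractional part; this is where all the increments, copies, and resets of counters occur (Items~\ref{Constr:regA}--\ref{Constr:regD} of the construction). Relatedly, the counters do not record ``how many distinct fractional-part values lie between'' two clocks; $C_{xy}$ is a lower bound on $\numodels{xy}/\epsilon$, i.e., the distance measured in sampling units. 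Getting this right is essential for both correspondence lemmas.

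Two smaller omissions that the paper handles explicitly and that you should not skip: (i) a preprocessing step that replaces $A$ by an equivalent automaton resetting at most one clock per transition (Lemma~\ref{Lemma:removing-multiple-resets}), without which the case analysis for counter updates becomes unmanageable; and (ii) removal of the auxiliary $\delta$-letter from the extended R-automaton before testing limitedness (Lemma~\ref{Lemma:removing-t-transitions}), which is straightforward precisely because $\delta$-edges carry $(0,\dots,0)$. The paper also treats $|\clocks|\le 1$ separately (Lemma~\ref{Lemma:one-clock}); the main construction assumes $|\clocks|\ge 2$.
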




\noindent First, we claim that this theorem is true for timed automata with less than two clocks. It is trivially true for timed automata without clocks ($|\clocks| = 0$). In Section~\ref{Sec:proof} we show that for a timed automaton $A$ with only one clock ($|\clocks| = 1$), $L_{1/2}(A) = L(A)$ and $L_{1/2}^\omega(A) = L^\omega(A)$. We assume that $|\clocks| \geq 2$ in the rest of the paper.

In Section~\ref{Sec:extended-R-automata} we develop a tool of independent interest -- a non-trivial extension of R-automata. These automata contain unbounded counters which can be incremented, reset to zero, copied into each other, and updated by a special type of $\max$ operations. We show that the limitedness problem, i.e., whether there is a bound such that all accepted words can be also accepted by runs along which the counters are smaller than this bound, is decidable for these automata.

The proof of decidability of the sampling problem for timed automata with more than one clock consists of several steps depicted in Figure~\ref{Fig:big-picture}. We start with a given timed automaton $A$. The first step is of a technical character. We transform the timed automaton $A$ into an equivalent timed automaton $A'$ with respect to sampling which never resets more than one clock along each transition. In the second step, we build the region graph $G$ for this timed automaton $A'$. The essential part of the proof is then the third step. Here we transform the region graph $G$ into an extended R-automaton $R$ such that each run in $R$ has a corresponding path in $G$ and vice versa. Moreover, for each run in $R$ and the corresponding path in $G$, there is a relation between the sampling rate which allows for a concrete run along the path and the maximal counter value along the run. The automaton $R$ operates on an extended alphabet -- we have inherited one additional letter $\delta$ for time pass transitions from the region graph. In the last step, we remove the transitions labeled by $\delta$ and build another extended R-automaton $R'$ such that the timed automaton $A'$ can be sampled if and only if $R$ is limited. This step makes use of the fact that the transitions labeled by $\delta$ do not change the counter values, which allows us to use the standard algorithm for removing $\epsilon$-transitions in finite automata.

\begin{figure}[htbp]
\centering
\begin{tikzpicture}[>=stealth]

\draw(0,0) node [circle,double,draw,name=s0]{TA $A$};
\draw[xshift= 6cm](0,0) node [circle,double,draw,name=s1]{TA $A'$};
\draw[xshift= 12cm,yshift=0cm](0,0) node [circle,double,draw,name=s2]{RG $G$};
\draw[xshift= 6cm,yshift=-3cm](0,0) node [circle,double,draw,name=s3]{ERA $R$};
\draw[xshift= 12cm,yshift=-3cm](0,0) node [circle,double,draw,name=s4]{ERA $R'$};

\draw[->] (s0) -- (s1) node[midway,above]{remove multiple resets}
                       node[midway,below]{{\it sampling equivalent}};
\draw[->] (s1) -- (s2) node[midway,above]{standard construction};
\draw[->] (s2) -- (s3) node[midway,above,sloped]{our new construction};
\draw[->] (s3) -- (s4) node[midway,above,sloped]{remove $\delta$ transitions}
                       node[midway,below,sloped]{{\it preserves counter values}};

\draw[dashed] (6cm,-1.5cm) ellipse (1.1cm and 2.5cm);
\draw (0.2cm,-1.8cm) node[right] {{\it Relation between counter}};
\draw (0.2cm,-2.3cm) node[right] {{\it values and $\epsilon$ along all runs}};

\end{tikzpicture}
   \caption{An overview of the proof structure. The abbreviations $TA$, $RG$, and $ERA$ stand for Timed Automaton, Region Graph, and Extended R-Automaton, respectively.}
   \label{Fig:big-picture}
\end{figure}
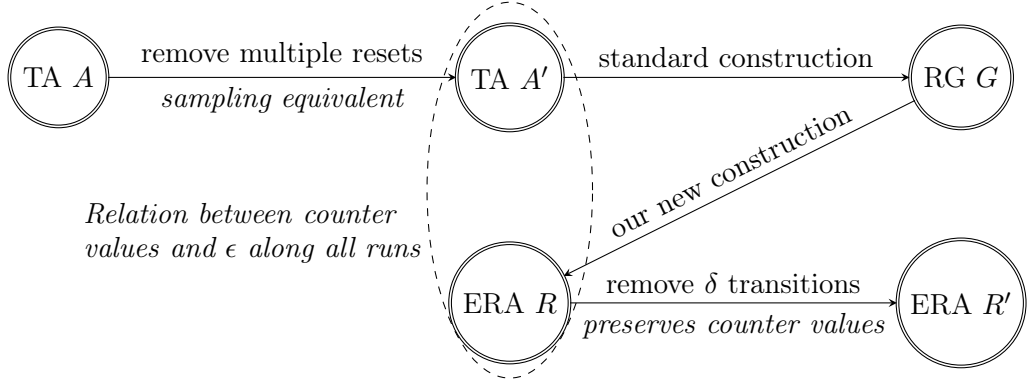

The first and the last step are rather straightforward and we show them in Section~\ref{Sec:proof}. The new model of extended R-automata is presented in Section~\ref{Sec:extended-R-automata}. Section~\ref{Sec:Reduction-maxplus-R} shows how to reduce the limitedness problem for extended R-automata to the limitedness problem of R-automata, which was shown decidable in~\cite{aky08r-automata}. Finally, the main reduction step, the translation of a region graph (induced by $A'$) into an extended R-automaton $R$ and the proof of relation between $A'$ and $R$, together with an informal overview is shown in Section~\ref{Sec:Encoding-TA-RC}.

\section{Extended R-automata}
\label{Sec:extended-R-automata}

In this section, we present an extension of R-automata. R-automata are finite state machines with counters which can be updated by the following instructions: no update, increment and reset to zero ($0,1,r$, respectively). We extend the set of instructions by a copy of one counter value into another counter and taking a maximum of the counters and sums of pairs of counters under specific conditions. For this extension, we show that the limitedness problem is decidable by a reduction to the universality problem of R-automata, shown decidable in~\cite{aky08r-automata}. 

\subsection{Extensions of R-automata}

Before we define syntax and semantics of extended R-automata, we give some informal introduction. The first extension is adding the ability to copy the value of one counter into another counter. The instruction set is extended by instructions $*j$, where $j$ is a counter name and applying this instruction to a counter $i$ results in the counter $i$ having the value of the counter $j$.

The other extension we need in order to reduce our problems for timed automata to limitedness of counter automata (taking maxima of counters and counter sums) is rather semantical than syntactical. The only syntactical change is that the reset instruction is equipped with a subset of counters, i.e., if $n$ is the number of counters, reset instructions are $r(A), A \subseteq\{1, \dots, n\}$. The semantics maintains three values for each counter ($P, M, N$) and a preorder $\lesssim$ on the counters. This rather nonstandard terminology -- a counter containing three values -- makes the definitions in this section and proofs in Section~\ref{Sec:Reduction-maxplus-R} simpler. One can see this as if for a counter $i$ we now have three new counters $P_i$, $M_i$, and $N_i$.

The values $N_i$ behave in the same way as for R-automata with copying. The preorder tells us how to apply the $\max$ operation to the values $P$ and $M$. These values of a counter $j$ are always greater than these values of a counter $i$ such that $i \lnsim j$. More concretely, if $i \lnsim j$ then $M_j \geq M_i + 1$ and if $k,l \lnsim j$ then $P_j \geq P_k + P_l$. The way in which we update the preorder $\lesssim$ along the transitions ensures that, informally, for all counters $i$, the values $P_i$ and $M_i$ cannot grow unbounded along a run where $N_i$ is bounded.

\paragraph{\bf Syntax.} Let for a given number $n$ of counters, $\Eff = \{0,1\} \cup \{r(A) | A \subseteq \{1, \dots, n\}\} \cup \{*m | 1\leq m \leq n\}$ be the set of instructions on a counter.
An \emph{extended R-automaton} with $n$ counters is a $5$-tuple $R = (S, \Sigma, \Delta, s_0, F)$ where
\begin{enumerate}[$\bullet$]
\item $S$ is a finite set of states,
\item $\Sigma$ is a finite alphabet,
\item $\Delta \subseteq S\times \Sigma \times \Eff^n \times S$ is a transition relation,
\item $s_0 \in S$ is an initial state, and
\item $F \subseteq S$ is a set of final states.
\end{enumerate}

Transitions are labeled (together with a letter) by an effect on the counters. The symbol $0$ corresponds to leaving the counter value unchanged, the symbol $1$ represents an increment, the symbol $r(A)$ represents a reset (the function of $A$ will be explained later), and a symbol $*j$ means that the value of this counter is set to the value of the counter $j$. The instructions $0,1,$ and $r(A)$ take place first and after that the values are copied. An automaton which does not contain any copy instruction and all resets contain an empty set is called an R-automaton (effects contain only $0,1,r(\emptyset)$). We skip the subset of counters $A$ and write $r$ instead of $r(A)$ when the set does not play any role (e.g., in the whole of Section~\ref{Sec:Reduction-copy-R}).

We use $t, t', t_1, \dots$ to denote elements of $\Eff^n$ which we call \emph{effects}.
By $\pi_i(t)$ we denote the $i$-th projection of $t$. Without loss of generality, we assume that the value of a counter is never directly copied into itself ($\pi_i(t) \neq *i$). A \emph{path} is a sequence of transitions $(s_1, a_1, t_1, s_2)$,$(s_2, a_2, t_2, s_3),\dots$, $(s_m,a_m,t_m, s_{m+1})$, such that $\forall 1\leq i \leq m. (s_i,a_i,t_i,s_{i+1})\in\Delta$. We use $s_i$ to refer to the $i$-th state of the path. An example of an extended R-automaton is given in Figure~\ref{Fig:automaton}.

\begin{figure}[htbp]
\centering
\begin{tikzpicture}[>=stealth]

\draw(0,0) node [circle,draw,name=s0]{$s_0$};
\draw[xshift= 3cm](0,0) node [circle,double,draw,name=s1]{$s_1$};
\draw[xshift= 3cm,yshift=-2cm](0,0) node [circle,double,draw,name=s2]{$s_2$};

\draw[->] (s0) -- (s1) node[midway,above]{$a,(1,0)$};
\draw[->] (s1) -- (s2) node[midway,right]{$b,(r(\emptyset),*1)$};
\draw[->] (s0) -- (s2) node[midway,below,sloped]{$a,(0,1)$};

\draw[->] (s1) .. controls (2.5,1) and (3.5,1) .. (s1) node[midway,above]{$b,(0,1)$};
\draw[->] (s2) .. controls (4,-1.5) and (4,-2.5) .. (s2) node[midway,right]{$a,(0,r(\{1\}))$};

\draw[->] (-0.7, 0) -> (s0);

\end{tikzpicture}
   \caption{An R-automaton with two counters.}
   \label{Fig:automaton}
\end{figure}
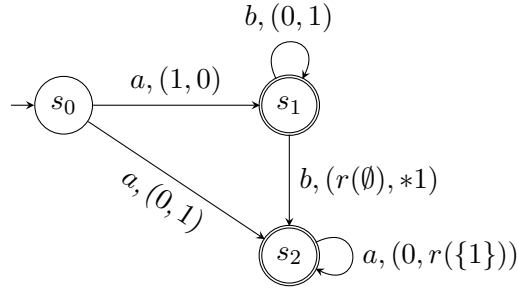

\paragraph{\bf Unparameterized semantics.} We define an operation $\plus$ on the counter values: for any $k\in \Nat$, $k\plus 0 = k$, $k\plus 1 = k+1$, and $k\plus r = 0$. We extend this operation to $n$-tuples and copy instructions as follows. For a $t\in\Eff^n$, let $\hat{t}$ be an effect with all copy instructions replaced by $0$, i.e., $\pi_i(\hat{t}) = \pi_i(t)$ if $\pi_i(t)\in \{0,1,r(A)\}$ and $\pi_i(\hat{t}) = 0$ otherwise. For a $t\in\Eff^n$ and $(c_1, \dots, c_n) \in \Nat^n$, $(c_1, \dots, c_n) \plus t = (c_1', \dots, c_n')$, where $c_i' = c_j \plus \pi_j(\hat{t})$ if $\pi_i(t) = *j$ for some $j$ and $c_i' = c_i \plus \pi_i(t)$ otherwise. For example, $(1,5,7) \plus (1, *1, *2) = (2, 2, 5)$ -- first we increment the first counter and then we copy the values of the first and the second counter into the second and the third counter, respectively.

The operational semantics of an extended R-automaton $R = (S, \Sigma, \Delta, s_0, F)$ is given by an LTS $\unpsemantics{R} = ( \SS, \Sigma, T, \hat{s_0} )$, where the set of states $\SS$ contains triples $\langle s, \bar{C}, \lesssim \rangle$, $s\in S, \bar{C} \in \Nat^n \times \Nat^n \times \Nat^n$, $\lesssim$ is a preorder on $\{1, \dots, n\}$, with the initial state $\hat{s_0} = \langle s_0, \bar{C_0}, \emptyset \rangle$, where $\bar{C_0} = (0^n, 0^n, 0^n)$.
For a $\bar{C} \in \Nat^n \times \Nat^n \times \Nat^n$, we denote the first projection by $\bar{P}$, the second projection by $\bar{M}$, and the third projection by $\bar{N}$. I.e., $\bar{P}, \bar{M}, \bar{N} \in \Nat^n$ and $\bar{C} = (\bar{P}, \bar{M}, \bar{N})$.
For $1\leq i \leq n$, we denote by $P_i$, $M_i$, or $N_i$ the $i$-th projection of $\bar{P}$, $\bar{M}$, or $\bar{N}$, respectively.
The role of the preorder $\lesssim$ and of the counter valuation is informally explained below the formal definition of the transition relation. We introduce a shorthand $i \simeq j$ for $(i \lesssim j \wedge j \lesssim i)$ and $i \lnsim j$ for $(i \lesssim j \wedge \neg(j \lesssim i))$.

The transition relation is defined as follows: $(\langle s, \bar{C}, \lesssim \rangle, a$, $\langle s'$, $\bar{C'}, \lesssim' \rangle) \in T$ if and only if $(s, a, t, s') \in \Delta$ and $\bar{C'}, \lesssim'$ are constructed by the following three steps (executed in this order):

\begin{enumerate}[(1)]
  \item \label{Sem:effects} $\bar{P}' = \bar{P} \plus t, \bar{M}' = \bar{M} \plus t$, and $\bar{N}' = \bar{N} \plus t$ 
  \item \label{Sem:preorder} The preorder $\lesssim'$ is constructed in two steps. First, $i \lesssim' j$ if and only if either:
  \begin{enumerate}[(a)]
    \item \label{Pre:plus} $i \lesssim j$ and $\pi_i(t) \in \{0,1\}, \pi_j(t) \in \{0,1\}$ and it is not true that $j \lesssim i$, $\pi_i(t) = 1$, and $\pi_j(t) = 0$, or
    \item \label{Pre:reset} $\pi_i(t) = r(\{j\} \cup A)$ and $N_j' > 0$, or 
    \item \label{Pre:copy} $\pi_i(t) = *j$ or $\pi_j(t) = *i$.
  \end{enumerate}
  Secondly, add the transitive and reflexive closure to $\lesssim'$.
  \item \label{Sem:max} Repeat the following until a fixed point is reached: if $i \lnsim j$ then set $M_j' = \max\{M_j',$ $M_i' + 1\}$ and if $k,l \lnsim j$ then $P_j' = \max\{P_j', P_k' + P_l'\}$.
\end{enumerate}

\noindent We shall call the states of the LTS \emph{configurations}.
We write $\langle s, \bar{C}, \lesssim \rangle \arrow{a} \langle s', \bar{C'}, \lesssim' \rangle$ if $(\langle s, \bar{C}, \lesssim \rangle, a, \langle s', \bar{C'}, \lesssim \rangle) \in T$. We extend this notation to words, $\langle s, \bar{C}, \lesssim \rangle \arrow{w} \langle s', \bar{C'}, \lesssim' \rangle$, where $w\in\Sigma^+$.

Note that the values $\bar{N}$ of the counters 
are updated only by the instructions $0,1,r(A)$ and $*j$ (Step~\ref{Sem:effects}). The values $\bar{M}$ and $\bar{P}$ of the counters are updated by these effects as well (Step~\ref{Sem:effects}), but they can also be increased by the $\max$ operation (Step~\ref{Sem:max}). Namely, $N_i > 0$ implies that $M_i > 0$ and $P_i > 0$. Clearly, there is always a fixed point reached after at most $n$ iterations of Step~\ref{Sem:max}.

The preorder $\lesssim$ in a reachable state $\langle s, \bar{C}, \lesssim \rangle$ relates counters $i, j$ only if the values $M_i, P_i$ are smaller than or equal to $M_j, P_j$ ($i \lesssim j$ implies $M_i \leq M_j, P_i \leq P_j$). Especially, $i \simeq j$ implies $M_i = M_j, P_i = P_j$. This is satisfied in the initial state (trivially) and preserved by updates in Step~\ref{Sem:preorder}. There, the effects influence the preorder $\lesssim$ in the following way: an equality is broken if one counter is incremented and the other one is left unchanged (Step~\ref{Pre:plus}), a reset removes the counter from the preorder and puts it below non-zero counters indicated in the reset (Step~\ref{Pre:reset}), and a copy instruction sets the counter equal to the counter whose value it copied (Step~\ref{Pre:copy}). In other cases, the relation is preserved (Step~\ref{Pre:plus}). An example of the effect of Step~\ref{Sem:preorder} on a preorder is in Figure~\ref{Fig:preorder-example}.

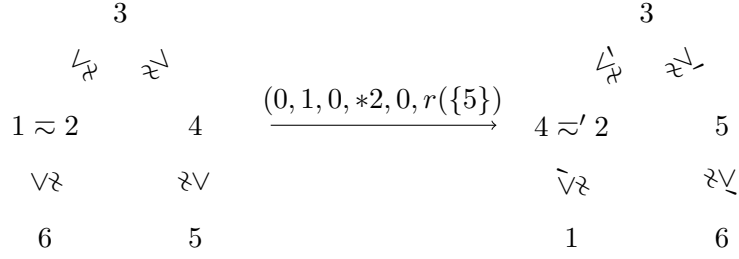
\begin{figure}[htbp]
\centering
\begin{tikzpicture}

\draw(0,0) node [name=s6]{$6$};
\draw[xshift= 0cm,yshift=0.75cm](0,0) node [rotate=90]{$\lnsim$};
\draw[xshift= 2cm](0,0) node [name=s5]{$5$};
\draw[xshift= 2cm,yshift=0.75cm](0,0) node [rotate=-90]{$\gnsim$};
\draw[xshift= 0cm,yshift=1.5cm](0,0) node [name=s12]{$1\eqsim 2$};
\draw[xshift= 0.5cm,yshift=2.3cm](0,0) node [rotate=45]{$\lnsim$};
\draw[xshift= 2cm,yshift=1.5cm](0,0) node [name=s4]{$4$};
\draw[xshift= 1.5cm,yshift=2.3cm](0,0) node [rotate=-45]{$\gnsim$};
\draw[xshift= 1cm,yshift=3cm](0,0) node [name=s3]{$3$};

\draw[->] (3cm,1.5cm) -- (6cm,1.5cm) node[midway,above]{$(0,1,0,*2,0,r(\{5\})$};

\draw(7,0) node [name=sp6]{$1$};
\draw[xshift= 7cm,yshift=0.75cm](0,0) node [rotate=90]{$\lnsim'$};
\draw[xshift= 9cm](0,0) node [name=sp5]{$6$};
\draw[xshift= 9cm,yshift=0.75cm](0,0) node [rotate=-90]{$\gnsim'$};
\draw[xshift= 7cm,yshift=1.5cm](0,0) node [name=sp12]{$4\eqsim' 2$};
\draw[xshift= 7.5cm,yshift=2.3cm](0,0) node [rotate=45]{$\lnsim'$};
\draw[xshift= 9cm,yshift=1.5cm](0,0) node [name=sp4]{$5$};
\draw[xshift= 8.5cm,yshift=2.3cm](0,0) node [rotate=-45]{$\gnsim'$};
\draw[xshift= 8cm,yshift=3cm](0,0) node [name=sp3]{$3$};


\end{tikzpicture}
  \caption{An example of updates of $\lesssim$ after applying the effect $(0,1,0,*2,0,r(\{5\}))$. The diagram on the left side depicts $\lesssim$ and the diagram on the right side depicts $\lesssim'$. Step~\ref{Pre:plus} sets $1\lesssim' 3, 2\lesssim' 3, 5\lesssim' 3, 1\lesssim' 2$. It does not set $2\lesssim' 1$, because the counter $2$ was incremented while the counter $1$ was left unchanged. Step~\ref{Pre:reset} sets $6\lesssim' 5$ (we assume that $N'_5 > 0$). Step~\ref{Pre:copy} sets $4\lesssim' 2, 2\lesssim' 4$. The transitive and reflexive closure completes $\lesssim'$ to a preorder.}
  \label{Fig:preorder-example}
\end{figure}

\noindent Another view on the preorder is what sequence of effects results in $i\lnsim j$. This can happen only in the following three ways. First, when $i$ is reset with $j$ in the set, i.e., by $r(\{j\} \cup A)$, and $M_j > 0$. Second, $i$ is copied to $j$ or $j$ is copied to $i$ and then $j$ is incremented by $1$ while $i$ stays unchanged (the instruction is $0$). 
Third, the relation $i\lnsim j$ can also be a result of the transitive closure. If already $i\lnsim j$ holds then it can be broken only by a reset or a copy of one of these two counters.

The preorder $\lesssim$ influences only the values $\bar{P}$ and $\bar{M}$. If we skip Step~\ref{Sem:preorder} in the semantics (which would result in $\lesssim$ to be empty  in all the reachable states) then $\bar{P} = \bar{M}  = \bar{N}$ in all the reachable states. Also, changes of the values $\bar{N}$ along a transition depend only on the effect and not on $\lesssim$ in the starting state.

We could also view our extension as R-automata which can perform $\max$ operations on the counters along the transitions. The motivation for introducing the preorder $\lesssim$ instead of allowing explicit $\max$ operations as instructions on the transitions is to restrict the usage of $\max$ operations so that Lemma~\ref{Lemma:counter-maxs} and Lemma~\ref{Lemma:counter-sums} hold. Unrestricted usage of $\max$ operation is equivalent to alternation. Limitedness has been shown decidable for alternating cost tree automata in~\cite{CSL08:colcombet-loeding-depth-mu-calculus}, but resets have to follow a hierarchical (parity-like) discipline in these automata and copying in not allowed.

Paths in an LTS are called \emph{runs} to distinguish them from paths in the underlying extended R-automaton. Observe that the LTS contains infinitely many states, but the counter values do not influence the computations, since they are not tested anywhere. In fact, for any extended R-automaton $R$, $\unpsemantics{R}$ is bisimilar to $R$ considered as a finite automaton (without counters and effects).

\paragraph{\bf Parameterized Semantics.} Next, we define $B$-semantics of extended R-auto\-ma\-ta. The parameter $B$ is a bound on the counter values $\bar{N}$ which can occur along any run. For a given $B\in\Nat$, let $\SS_B$ be the set of configurations restricted to the configurations which do not contain a counter whose $\bar{N}$ values exceed $B$, i.e., $\SS_B = \{ \langle s, \bar{C}, \lesssim \rangle~|$ $\langle s, \bar{C}, \lesssim \rangle \in \SS \wedge \bar{C}= (\bar{P}, \bar{M}, \bar{N}) \wedge \forall 1\leq i \leq n. N_i \leq B\}$. For an extended R-automaton $R$, the \emph{$B$-semantics} of $R$, denoted by $\dsemantics{R}$, is $\unpsemantics{R}$ restricted to $\SS_B$. We write $\langle s, \bar{C}, \lesssim \rangle \arrow{a}_B \langle s', \bar{C'}, \lesssim' \rangle$ to denote the transition relation of $\dsemantics{R}$. We extend this notation to words, $\langle s, \bar{C}, \lesssim \rangle \arrow{w}_B \langle s', \bar{C'}, \lesssim \rangle$, where $w\in\Sigma^+$.

\forget{
\begin{defi}
 Given a $B\in\Nat$, a $B$-semantics of an R-automaton $A = \langle S, \Sigma, \Delta, s_0, F\rangle$ is a labeled transition system $\dsemantics{A} = \langle \SS_B, \Sigma, T, \hat{s_0} \rangle$, where $\SS$ is the set of configurations, $\Sigma$ is the set of labels, $\hat{s_0}$ is the initial state, and the transitions are given by the following rule:

$\langle s, (c_1, \dots, c_n)\rangle \arrow{a} \langle s, (c_1', \dots, c_n')\rangle$ if and only if $\langle s, a, t, s'\rangle \in \Delta$ and $(c_1', \dots, c_n') = (c_1, \dots, c_n)\plus t$.
\end{defi}

In other words, a $B$-semantics of an R-automaton is a restriction of its unparameterized semantics to $\SS_B$.
}

\forget{

The $2$-semantics of the R-automaton from Figure~\ref{Fig:automaton} is in Figure~\ref{Fig:2-lts}.

\begin{figure}[htbp]
\centering
\begin{tikzpicture}[rounded corners,>=stealth, scale = 1.5]

\draw(0,0) node [rectangle,draw,name=s0]{\tiny $s_0,(0,0)$};
\draw[xshift= 1.5cm](0,0) node [rectangle,draw,name=s1]{\tiny $s_1,(1,0)$};
\draw[->] (s0) -- (s1) node[midway,above]{\tiny $a$};
\draw[xshift= 3cm](0,0) node [rectangle,draw,name=s2]{\tiny $s_1,(1,1)$};
\draw[->] (s1) -- (s2) node[midway,above]{\tiny $b$};
\draw[xshift= 4.5cm](0,0) node [rectangle,draw,name=s3]{\tiny $s_1,(1,2)$};
\draw[->] (s2) -- (s3) node[midway,above]{\tiny $b$};

\draw(0,0)[xshift= 1.5cm,yshift= -1cm] node [rectangle,draw,name=t0]{\tiny $s_2,(0,1)$};
\draw[xshift= 3cm,yshift= -1cm](0,0) node [rectangle,draw,name=t1]{\tiny $s_2,(0,0)$};
\draw[->] (t0) -- (t1) node[midway,below]{\tiny $a$};

\draw[->] (s0) -- (t0) node[midway,left]{\tiny $a$};
\draw[->] (s1) -- (t1) node[midway,left]{\tiny $b$};
\draw[->] (s2) -- (t1) node[midway,left]{\tiny $b$};
\draw[->] (s3) -- (t1) node[midway,left]{\tiny $b$};

\draw[->] (t1) .. controls (2.7,-1.6) and (3.3,-1.6) .. (t1) node[midway,below]{\tiny$a$};
\end{tikzpicture}
  \caption{The $2$-semantics of the R-automaton in Figure~\ref{Fig:automaton}.}
  \label{Fig:2-lts}
\end{figure}

}



\paragraph{\bf Language.} The (unparameterized or $B$-) language of an extended R-automaton is the set of words which can be read along the runs in the corresponding LTS ending in an accepting state (a configuration whose first component is an accepting state). Formally, for a run $\rho$ in $\unpsemantics{R}$, let $l(\rho)$ denote the concatenation of the labels along this run. A run $\rho = \langle s_0, \bar{C_0}, \emptyset \rangle \arrow{}^* \langle s, \bar{C}, \lesssim \rangle$ is accepting if $s\in F$. The \emph{unparameterized language} accepted by an extended R-automaton $R$ is $L(R) = \{ l(\rho) | \rho$ is an accepting run in $\unpsemantics{R} \}$. For a given $B\in\Nat$, the \emph{$B$-language} accepted by an extended R-automaton $R$ is $L_B(A) = \{ l(\rho) | \rho$ is an accepting run in $\dsemantics{R} \}$. The unparameterized language of the extended R-automaton from Figure~\ref{Fig:automaton} is $ab^*a^*$. The $2$-language of this automaton is $a(\epsilon+b+bb+bbb)a^*$. We also in the standard way define the language of infinite words for R-automata with B\"{u}chi acceptance conditions, denoted by $L^\omega(R), L_B^\omega(R)$.

\paragraph{\bf Limitedness/Universality.} The language of an extended R-automaton $R$ is \emph{limited} or \emph{universal} if there is a natural number $B$ such that $L_B(R) = L(R)$ or $L_B(R) = \Sigma ^*$, respectively. The definition of these problems for $\omega$-languages is analogous.
We show in Lemma~\ref{Lemma:RC-automata} that it is decidable whether a given extended R-automaton is limited or universal and in Lemma~\ref{Lemma:counter-maxs} and Lemma~\ref{Lemma:counter-sums} that this concept would not change even if we limit the $\bar{P}$ or $\bar{M}$ values in the definition of $B$-semantics.

We could split an extended R-automaton into three different automata which would maintain only one of the values $\bar{P}, \bar{M}, \bar{N}$. Later on, in the reduction from timed automata to these automata, we use only $\bar{P}$ values. The presentation which we chose (all values together in one automaton) simplifies the notation for the proofs of Lemma~\ref{Lemma:counter-maxs} and Lemma~\ref{Lemma:counter-sums}.

\subsection{Limitedness of Extended R-automata -- Copy Operations}
\label{Sec:Reduction-copy-R}

First, we show that the limitedness problem for extended R-automata is decidable. In this section, we deal only with the $N$ values of extended R-automata. We ignore the preorder $\lesssim$ (as it is not needed for calculating the $N$ values) and when we say that a counter $i$ has a value $k$ then we mean that $N_i = k$. We also write only $r$ instead of $r(A)$. The decidability proof reduces the limitedness problem for extended R-automata to the limitedness problem of R-automata. It has been shown in~\cite{aky08r-automata} that the universality problem of R-automata is decidable, but it is easy to see that this procedure can be used also to decide the limitedness problem. We create a disjoint union of the R-automaton in question and its complement (where the automaton is considered without effects, as a standard finite automaton). We add effects $(0, \dots, 0)$ on all transitions of the complement. This automaton is universal if and only if the original R-automaton is limited.

\begin{lemma}
\label{Lemma:RC-automata}
  For a given extended R-automaton $R$, the questions whether there is $B\in\Nat$ such that $L_B(R) = L(R)$ (and $L_B^\omega(R) = L^\omega(R)$) is decidable.
  \qed
\end{lemma}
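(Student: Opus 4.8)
The plan is to reduce the limitedness problem for extended R-automata to the limitedness problem for ordinary R-automata (automata whose effects use only $0$, $1$, $r(\emptyset)$), and then invoke the decidability of the latter, which follows from the decidability of the R-automaton universality problem of~\cite{aky08r-automata} via the disjoint-union trick sketched just above the statement. Since this section is concerned only with the $\bar{N}$ values --- the preorder $\lesssim$ and the $\bar{P},\bar{M}$ values play no role in computing $\bar{N}$ --- it suffices to show that copy instructions $*j$ can be simulated by an ordinary R-automaton (up to boundedness of all counters), i.e. that copying adds no essential power with respect to limitedness.

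First I would analyze how the $\bar{N}$-value of a counter evolves along a path: between two consecutive resets of counter $i$, its value is obtained by starting from some counter's value at a "copy-source" point and then counting increments. Because $\pi_i(t)\neq *i$ and copies happen after the $0/1/r$ updates, tracing a counter backwards through the copy instructions along a path yields, for each counter at each step, a well-defined "origin": the most recent position at which that counter's current value was freshly reset (its $\bar{N}$ value set to $0$), together with the number of increments accumulated since. The key combinatorial observation is that the value of counter $i$ at step $m$ equals the number of increments applied to a single "thread" that can be followed back through copies to its last reset; crucially this thread visits each time step at most once, so its length is bounded by the path length, and its value is exactly the increment-count along it. This suggests building an R-automaton $R'$ whose state records, in addition to the state of $R$, a bounded amount of bookkeeping that tracks, for the one counter we ultimately care about, which "thread" is currently carrying the relevant value. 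Concretely, I would let $R'$ have $n$ (or a bounded function of $n$) counters, guess nondeterministically which source thread to follow, and use $r(\emptyset)$ and $1$ to replay exactly the resets and increments along that thread, while transitions labelled by a copy are handled either by a no-op $0$ (if we are not tracking the target of the copy) or by a state change recording that the tracked value has migrated to a new counter index. One then proves $L_B(R')=L(R')$ iff $L_{B'}(R)=L(R)$ for suitably related bounds $B,B'$; since only finitely many counters and a finite amount of index bookkeeping are involved, the construction is effective and $R'$ is a genuine R-automaton.

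The main obstacle I expect is making the "thread-tracking" construction precise without blowing up the number of counters uncontrollably: a naive encoding might need to remember the full permutation/partial-function describing where every counter's value came from, which is a finite amount of state but must be shown to correctly and bijectively relate bounded runs of $R$ to bounded runs of $R'$ in both directions. The subtlety is the back-direction: given a bounded run of the R-automaton simulation, I must reconstruct a bounded run of $R$ realizing the same word, which requires that the guessing of which thread to track can always be resolved consistently with the actual copy structure of the path in $R$ --- this is where the argument that "each counter value is carried by exactly one thread, and threads do not merge going forward from a reset" does the real work. Once this simulation lemma is established, the $\omega$-language case is identical: the Büchi acceptance condition is untouched by the construction, and the decidability of limitedness for R-automata with Büchi conditions likewise reduces to universality of R-automata from~\cite{aky08r-automata}, so $L^\omega_B(R)=L^\omega(R)$ is decidable by the same reduction.
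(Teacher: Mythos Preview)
Your high-level strategy matches the paper's: reduce limitedness of extended R-automata to limitedness of ordinary R-automata by replacing each copy instruction $*j$ with a reset, tracking in the finite state which physical counter currently carries each value, and then relating the bounds in both directions. You also correctly isolate the hard direction --- showing that a $B$-bounded accepting run of the simulating automaton yields a $B'$-bounded accepting run of $R$ for some controlled $B'$ --- as the crux.

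The gap is precisely there. Your stated key observation, that ``each counter value is carried by exactly one thread, and threads do not merge going forward from a reset,'' does not bound the number of simulation resets a single value trace can suffer. After a copy $e_i=*j$, the value in $j$ now lives in both $i$ and $j$; your $R'$ keeps it in one counter and resets the other. If $R'$ repeatedly guesses the short-lived branch, one long value trace in $R$ may be chopped by unboundedly many simulation resets, and no finite $B'$ follows from $B$. What is missing is (a) a device that lets $R'$ reject wrong guesses, and (b) an argument that along any value trace at most $f(n)$ simulation resets occur. The paper supplies both via \emph{parent pointers} and a \emph{rank} argument: at each simulated copy, the reset counter gets a parent pointer to the kept one, encoding the guess ``my value dies first''; if a later effect witnesses the parent dying while the child survives, $R'$ enters an error sink. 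One then proves that a correct sequence of guesses always exists (pick the branch with larger \emph{expectancy}, i.e.\ maximal remaining value-trace length), that the parent-pointer relation stays acyclic, and that the induced rank of a counter (its depth in the pointer forest, bounded by $n-1$) never decreases along a value trace and strictly increases at every simulation reset. Hence at most $n-1$ simulation resets hit any value trace, which yields $w\in L_B(\hat R)\Rightarrow w\in L_{n\cdot B}(R)$. This rank bound is the idea your proposal is missing.
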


The rest of this subsection proves this lemma. In order to avoid unnecessary technical complications in the main part of the proof, we restrict ourselves to extended R-automata with at most one copy instruction in each effect. We show how to extend the proof to the general model at the end of this subsection. We reduce the universality problem for extended R-automata to the universality problem of R-automata, for which this problem has been shown decidable in~\cite{aky08r-automata}.

\paragraph{\bf Construction.}
As the first step, we equip each R-automaton with a variable called \emph{parent pointer} for each counter and with the ability to swap the values of the counters. The parent pointers range over $\{\Null\} \cup \{1, \dots, n\}$, where $n$ is the number of the counters. We shall use them to capture (a part of) the history of copying. We observe that for each R-automaton one can encode the value swapping and the parent pointers into the states. To express properties of this encoding more formally, let us assume that the transitions in the semantics LTS are labeled also by the counter values (in the order encoded by the automaton) and the parent pointers. For each R-automaton $\hat{R}$ with parent pointers and value swapping, we can build an R-automaton $\bar{R}$ with $|S|\cdot n!\cdot 2^{n}$ states bisimilar to $\hat{R}$, where $|S|$ is the number of the states of $\hat{R}$. Moreover, any number of value swaps and parent pointer operations can be encoded along each transition of $\bar{R}$ together with standard updates (increments, resets). $\bar{R}$ can also branch upon the values of the parent pointers.

\revision{PARENT POINTER MOTIVATION}

Before presenting the construction, we give some informal motivation for using parent pointers and counter swapping. When an automaton copies a value of a counter $i$ to a counter $j$ then, from this time point on, the values in these two counters develop independently. Any of them might eventually exceed an imposed bound. The simulating automaton has, however, only one copy of this value stored in the counter $i$. Therefore, the best the simulating automaton can do is to use this value to track the evolution of one of the two values from the original automaton. For the other value, we start simulating its evolution from $0$ (which is easily done by a reset), hoping that the loss of the value accumulated in the counter $i$ can be bounded in some way.

Let us look a bit closer on what do we mean by evolution of a value (formalized as a value trace in Definition~\ref{Def:value-trace} below). A value contained in a counter $i$ after $t$ computation steps is \emph{alive} after $k$ additional steps of computation (i.e., at the time point $t+k$) if there is a counter whose value at the time point $t+k$ was obtained from the original value (i.e., the value contained in the counter $i$ at the time point $t$) by incrementing and copying ($0, 1, *k$ operations). Each sequence of these operations which witnesses that a value is alive constitutes an evolution of this value. A value dies if all of its copies are reset or overwritten by a copy of some other value.

The simulating automaton has to choose which of the two counters does it want to simulate with the original value accumulated in the counter $i$. We want the automaton to choose the counter whose value stays alive longer. The reason is as follows. There has to be an evolution which witnesses this property. This evolution occupies at least one counter during the whole lifetime of this value. Because the other value lives shorter, it has strictly fewer counters for copying itself. This gives us an inductive argument resulting in an upper bound on the number of simulation resets, i.e., resets introduced to simulate copy operations, along each value evolution (being the number of the counters).

Technically, the automaton chooses a counter non-deterministically (by possibly swapping the values) and it uses parent pointers to verify the correctness of all choices. After each choice, it updates the parent pointers so that a pointer pointing from a counter $i$ to a counter $j$ expresses the guess that the value which is currently in the counter $j$ will live longer than the value in the counter $i$. We are interested only in relations between values which have the same origin (one value was created as a copy of another). Therefore, it is enough to have only one parent pointer for each counter. One can then detect from the parent pointers and an effect whether applying this effect would violate the guesses.

Figure~\ref{fig:parent-pointers-example} depicts a sample run of an extended R-automaton with three counters initialized with zeros. The solid (blue) line denotes an evolution of the initial value of counter $3$ (its value trace). Other lines denote alternative evolutions of the same value, but they are all shorter than the solid (blue) one. Crosses at the ends of the lines show the points where the alternative value dies. The arrows depict parent pointers along a correct run of the simulating automaton. They always connect the traces with the same splitting point and they point from the shorter to the longer one. As an illustration of how do parent pointers serve for detecting wrong guesses, imagine that the parent pointer between counters $2$ and $3$ in the third step (the first one with the effect $(1, *3, 0)$) has been set the other way round, i.e., pointing from the counter $2$ to the counter $3$. At the fifth step (the first one with the effect $(1,1,r)$), the automaton knows directly from the effect that the value in the counter $3$ dies and the value in counter $2$ is still alive. This is not consistent with the parent pointer and the automaton would enter an error state.

\begin{figure}
\begin{tikzpicture}

\foreach \x/ \n in {0/1,1/1,2/1,3/1,4/1,5/*2,6/1,7/1,8/1,9/0,10/1,11/r}{
\node at (\x,0) {\n};
}

\foreach \x/ \n in {0/0,1/1,2/*3,3/0,4/1,5/0,6/0,7/1,8/*3,9/*1,10/1,11/0}{
\node at (\x,-0.5) {\n};
}

\foreach \x/ \n in {0/1,1/1,2/0,3/1,4/r,5/0,6/*2,7/1,8/0,9/1,10/r,11/1}{
\node at (\x,-1) {\n};
}

\node[anchor=east] at (-0.5,-0.5){Effects};

\node at (-1,-2) {Counters};

\node[draw] at (-1.4,-3) {1};
\node[draw] at (-1.4,-4.5) {2};
\node[draw] at (-1.4,-6) {3};

\foreach \x/ \n in {-0.8/0,0/1,1/2,2/3,3/4,4/5,5/3,6/4,7/5,8/6,9/6,10/7,11/0}{
\node[text=red] at (\x,-3) {\n};
}

\foreach \x/ \n in {-0.8/0,0/0,1/1,2/2,3/2,4/3,5/3,6/3,7/4,8/4,9/6,10/7,11/7}{
\node[text=red] at (\x,-4.5) {\n};
}

\foreach \x/ \n in {-0.8/0,0/1,1/2,2/2,3/3,4/0,5/0,6/3,7/4,8/4,9/5,10/0,11/1}{
\node[text=red] at (\x,-6) {\n};
}

\draw[blue,thick] (-1,-6.2) -- (1.2,-6.2) -- (2.2,-4.7) -- (4.2,-4.7) -- (5.2,-3.2) -- (8.2,-3.2) -- (8.8,-4.7) -- (11.2,-4.7);

\draw[green,thick,dashed] (1.2,-6.2) -- (3.8,-6.2);
\draw[green] (3.8,-6.2)--+(-0.1,-0.1) +(0,0)--+(0.1,0.1) +(0,0)--+(-0.1,0.1) +(0,0)--+(0.1,-0.1);

\draw[green,thick,dashed] (4.2,-4.7) -- (5.2,-4.7) -- (5.8,-6.2) -- (9.8,-6.2);
\draw[green] (9.8,-6.2)--+(-0.1,-0.1) +(0,0)--+(0.1,0.1) +(0,0)--+(-0.1,0.1) +(0,0)--+(0.1,-0.1);

\draw[green,thick,dashed] (8.2,-3.2) -- (10.8,-3.2);
\draw[green] (10.8,-3.2)--+(-0.1,-0.1) +(0,0)--+(0.1,0.1) +(0,0)--+(-0.1,0.1) +(0,0)--+(0.1,-0.1);

\draw[thick,dotted] (5.2,-4.7) -- (7.8,-4.7);
\draw (7.8,-4.7)--+(-0.1,-0.1) +(0,0)--+(0.1,0.1) +(0,0)--+(-0.1,0.1) +(0,0)--+(0.1,-0.1);

\draw[thick,dotted] (7.5,-6.2) -- (8.2,-4.7) -- (8.7,-4.7);
\draw (8.7,-4.7)--+(-0.1,-0.1) +(0,0)--+(0.1,0.1) +(0,0)--+(-0.1,0.1) +(0,0)--+(0.1,-0.1);

\draw[->] (2.2,-6.0) .. controls (2.4,-5.4) .. (2.2,-4.8);
\draw[->] (3.2,-6.0) .. controls (3.4,-5.4) .. (3.2,-4.8);

\draw[->] (5.2,-4.5) .. controls (5.4,-3.9) .. (5.2,-3.3);

\draw[->] (6.2,-4.8) .. controls (6.4,-5.4) .. (6.2,-6);
\draw[->] (6.3,-6) .. controls (6.7,-4.65) .. (6.3,-3.3);

\draw[->] (7.2,-4.8) .. controls (7.4,-5.4) .. (7.2,-6);
\draw[->] (7.3,-6) .. controls (7.7,-4.65) .. (7.3,-3.3);

\draw[->] (8.2,-4.8) .. controls (8.4,-5.4) .. (8.2,-6);
\draw[->] (8.3,-6) .. controls (8.6,-4.8) .. (8.3,-3.6);

\draw[->] (9.2,-6.0) .. controls (9.4,-5.4) .. (9.2,-4.8);

\draw[->] (9.2,-3.3) .. controls (9.4,-3.9) .. (9.2,-4.5);

\draw[->] (10.2,-3.3) .. controls (10.4,-3.9) .. (10.2,-4.5);

\end{tikzpicture}
\caption{An example run of an extended R-automaton with an illustration of value evolution and parent pointers. The twelve effects above are applied to the three counters below in twelve consecutive steps. The solid (blue), dashed (green) and dotted (black) lines depict value evolutions ending with a cross. Arrows show parent pointers in a correct simulating run.}
\label{fig:parent-pointers-example}
\end{figure}
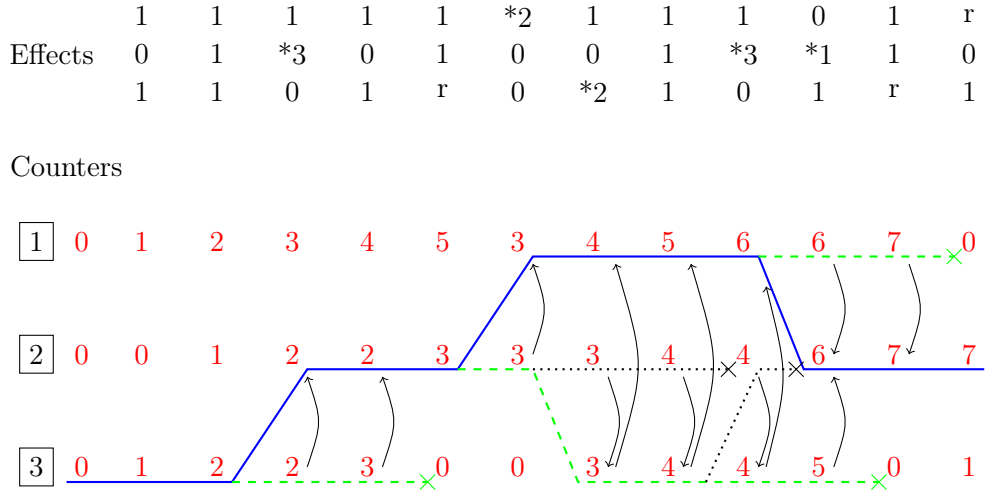

The simulating automaton uses the counter value for the longer value trace (by possibly swapping the counter values) and resets the other counter to $0$. In our example, a value trace splits in two with each copy operation. The value trace which keeps the style (color) is simulated by the counter value, while the one denoted by a different style (color) resets the counter value. The key observation for the simulation correctness is that when the value is reset twice (in our example with three counters) in copy simulations then it cannot be copied to another counter, because it would violate some parent pointers. Therefore, it cannot be reset in another copy simulation anymore. This is the case for the dotted (black) value traces.

\revision{END OF PARENT POINTER MOTIVATION}

Now we can present the reduction by constructing an R-automaton $\hat{R}$ which uses counter value swapping and the parent pointers for each extended R-automaton $R$ such that $\hat{R}$ is limited if and only if $R$ is limited. $\hat{R}$ has all the states of $R$ together with an error sink and it has the same initial state $s_0$ and the same set of accepting states as $R$. The error sink is a non-accepting state with no outgoing transitions except for self-loops labeled by $\Sigma$ and effects $(0, \dots, 0)$ which do not swap any counter values and do not manipulate the parent pointers. The automaton starts in the initial state with all parent pointers set to $\Null$. To define the transitions of $\hat{R}$, we need to encode copying by resets, value swapping and updates of parent pointers. To do this, we replace each copy by a reset, possibly with some (non-deterministic) value swapping and bookkeeping of the parent pointers. 

For each transition of $R$ we either construct simulating transitions or a transition going to the error sink. Let us denote the simulated transition of $R$ by $s \arrow{a,t} s'$, where $t = (e_1, \dots, e_n)$. If there are counters $k,l$ such that $e_k \in \{0,1\}$, $e_l \notin \{0,1\}$, and the parent pointer of $k$ points to (is set to) $l$ then we create a transition going to the error sink. Otherwise, we build simulating transitions $s \arrow{a, t', sp} s'$ in $\hat{R}$ labeled by an effect $t' = (e_1', \dots, e_n')$, which might also swap some counter values and manipulate the parent pointers (denoted by $sp$).

If $t$ does not contain any copy instruction then there is one simulating transition with $t' = t$ and for all $i$ such that $e_i = r$, we set $i$'s parent pointer to $\Null$. No counter values are swapped.

If $t$ contains a copy instruction $e_i = *j$ then we create two simulating transitions. Each of them has the same effect $t' = (e_1', \dots, e_n')$, where $e'_k = e_k$ if $k\neq i$ and $e'_i = r$. These two transitions give the simulating automaton a non-deterministic choice between the counters $i$ and $j$. The first transition corresponds to the choice of $j$. Along this transition, we perform the effect and set $i$'s parent pointer to $j$. No counter values are swapped. Along the other transition (corresponding to the choice of $i$), we perform the effect, swap the values of the counters $i$ and $j$, we copy the value of $j$'s parent pointer into $i$'s parent pointer, we change the value of all parent pointers with value $j$ to $i$,  and finally we set $j$'s parent pointer to $i$.
Both transitions also set the $k$'s parent pointer to $\Null$ for all $k$ such that $e_k = r$. An example of the construction of simulating transitions for a transition with an effect containing a copy instruction is depicted in Figure~\ref{Fig:copy-encoding}.

\begin{figure}[htbp]
\centering
\begin{tikzpicture}[scale=1.2]

\draw(-.1,0) node [name=pre]{$s, (2,5,9,1,8)$};
\draw[xshift= 5cm](0.1,0) node [name=post]{$s', (3,5,5,0,9)$};
\draw[->] (1cm,0cm) -- (4cm,0cm) node[midway,above]{$a, (1,0,*2,r,1)$};

\draw[yshift= -2.25cm](-.1,0) node [name=pre]{$s, (2,5,9,1,8)$};
\draw[xshift= 5cm,yshift= -1.5cm](0.1,0) node [name=post]{$s', (3,5,0,0,9)$};
\draw[->] (1cm,-2.25cm) -- (4cm,-1.5cm) node[midway,above,sloped]{$a, (1,0,r,r,1)$};

\draw[xshift= 5cm,yshift= -3cm](0.1,0) node [name=post]{$s', (3,0,5,0,9)$};
\draw[->] (1cm,-2.25cm) -- (4cm,-3cm) node[midway,below,sloped]{$a, (1,0,r,r,1)$};
\draw [<->,line width=1pt,xshift=2.35cm,yshift=-2.65cm,rotate=-15] (0,0) -- ++(0cm,.3cm) -- ++(+0.32cm,0cm) -- ++(0cm,-0.3cm);
\draw[xshift=4.05cm,yshift=-2.25cm](0,0) node{\small{swap the counter values}};



\foreach \y in {0cm,-0.4cm,...,-1.6cm} {
  \draw (-1.6cm,\y-1.4cm) +(-.3,-.2) rectangle ++(.3,.2);
  }

\draw (-1.6cm,-1.4cm) node{$2$};
\draw (-1.6cm,-1.8cm) node{$5$};
\draw (-1.6cm,-2.2cm) node{$\Null$};
\draw (-1.6cm,-2.6cm) node{$1$};
\draw (-1.6cm,-3.0cm) node{$\Null$};

\foreach \y in {0cm,-0.4cm,...,-1.6cm} {
  \draw (6.8cm,\y-0.6cm) +(-.3,-.2) rectangle ++(.3,.2);
  }

\draw (6.8cm,-0.6cm) node{$2$};
\draw (6.8cm,-1.0cm) node{$5$};
\draw (6.8cm,-1.4cm) node{$2$};
\draw (6.8cm,-1.8cm) node{$\Null$};
\draw (6.8cm,-2.2cm) node{$\Null$};

\foreach \y in {0cm,-0.4cm,...,-1.6cm} {
  \draw (7.8cm,\y-1.8cm) +(-.3,-.2) rectangle ++(.3,.2);
  }

\draw (7.8cm,-1.8cm) node{$3$};
\draw (7.8cm,-2.2cm) node{$3$};
\draw (7.8cm,-2.6cm) node{$5$};
\draw (7.8cm,-3.0cm) node{$\Null$};
\draw (7.8cm,-3.4cm) node{$\Null$};

\draw[->] (6.2cm,-1.5cm) -- (6.4cm,-1.4cm);
\draw[->] (6.2cm,-3cm) -- (7.4cm,-2.7cm);

\draw (.5cm,-1.0cm) node{{\bf choice of $2$:}};
\draw (.5cm,-3.5cm) node{{\bf choice of $3$:}};

\end{tikzpicture}
  \caption{An example of the construction of simulating transitions for a transition from $s$ to $s'$ labeled with an effect $(1,0,*2,r,1)$. In this example, the simulating transitions start from the state $s$ with the parent pointers set to the values $2,5,\Null,1,\Null$ (for the counters $1, 2, \dots, 5$) and the counter values set to $(2,5,9,1,8)$. The parent pointer values and the counter values only illustrate the parent pointer manipulations and the application of the effects, they might differ in actual runs.}
  \label{Fig:copy-encoding}
\end{figure}
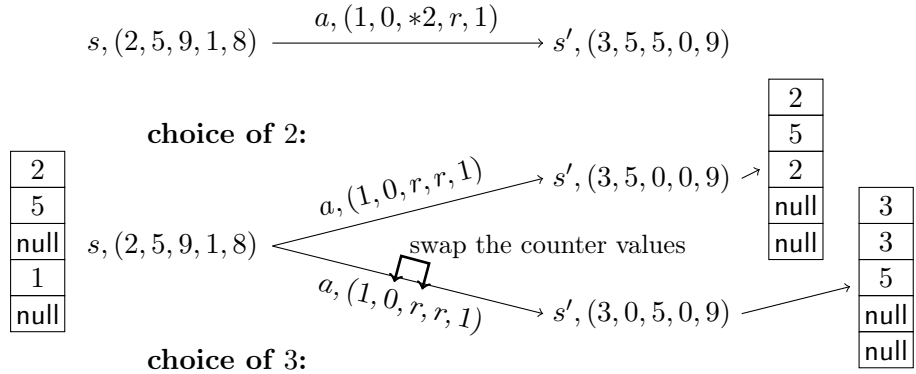

\paragraph{\bf Proof of Correctness.}
Intuitively, the choice of a counter in the copy instruction tells that the value in this counter will be destroyed by a reset or overwritten by a copy instruction later than in the counter which was not chosen. The structure of the copies is captured by the parent pointers in the following sense. If the counter $i$ points to the counter $j$ then $i$ contains an immediate copy of $j$ (but possibly modified by increments) and its value will be destroyed earlier than the value in $j$. The automaton ends in the error sink if it witnesses a violation of some of these implicit claims, i.e., the value in the counter $i$ is destroyed earlier than the value in the counter $j$.

First, we formalize the concept of the evolution of a value and define the corresponding runs. Then we show existence of corresponding accepting runs. Later on we use the fact that the parent pointers along the simulating traces have a special structure to show the correctness of the simulation.


\begin{defi}
\label{Def:value-trace}
  For a path $\sigma$ of length $|\sigma|$ in the extended R-automaton (considered as a graph) with $n$ counters and for two natural numbers $1\leq i < j \leq |\sigma|$, a total function $vt:\{i, i+1, \dots, j\} \arrow{} \{1, \dots, n\}$ is a \emph{value trace} if for all $k$ such that $i \leq k < j$, $t$ is the effect on the transition between the $k$-th and $k+1$-st state on $\sigma$, $vt(k) = a$, $vt(k+1) = b$, the following holds: if  $a\neq b$ then $\pi_b(t) = *a$ and if $a = b$ then $\pi_b(t) \in \{0, 1\}$.
\end{defi}

\forget{
\begin{defi}
  For a path $p$ of length $|p|$ and two natural numbers $1\leq i < j \leq |p|$, a total function $vt:\{i, i+1, \dots, j\} \arrow{} \{1, \dots, n\}$ is a \emph{value trace} if for all $k$ such that $i \leq k < j$, $t$ is the effect on the transition between the $k$-th and $k+1$-st state on $p$, $vt(k) = a$, $vt(k+1) = b$, the following holds:
  \begin{enumerate}[$\bullet$]
  \item if  $a\neq b$ then $\pi_b(t) = *a$ and
  \item if $a = b$ then $\pi_b(t) \in \{0, 1\}$.
  \end{enumerate}
\end{defi}
}

A value trace follows a value from some time point during its evolution (increments, copying) in an extended R-automaton. A value trace ends before the value is overwritten by a copy instruction or reset. We also talk about a value trace along a run. Then we mean a value trace along a path which has induced the run. We order value traces by the set inclusion on their domains (e.g., $vt: \{2,3\} \arrow{} \{1, \dots, n\}$ is smaller than $vt: \{2,3,4\} \arrow{} \{1, \dots, n\}$ regardless of the actual function values). We define the \emph{length} of a value trace as the size of its domain.

Now we define the correspondence between accepting runs in an extended R-automaton $R$ and in its corresponding R-automaton $\hat{R}$. We say that a run $\rho$ of $R$ over $w$ and a run $\rho'$ of $\hat{R}$ over $w$ are corresponding if for all $i$ the $i$-th transitions of $\rho, \rho'$ are obtained by executing the transitions $s \arrow{a,t} s'$ and $s \arrow{a,t',sp} s'$, where $s \arrow{a,t',sp} s'$ is a simulating transition of $s \arrow{a,t} s'$. We show that for each accepting run of one automaton there is an accepting corresponding run of the other automaton. It follows immediately from the definitions that for each accepting run of $\hat{R}$ there is exactly one accepting corresponding run of $R$.

\forget{
\begin{lemma}
  For each accepting run of $\hat{R}$ there is exactly one accepting corresponding run of $R$.
\end{lemma}

\begin{proof}
  Immediately from the definitions.
\end{proof}
}

The other direction is more complicated, because we have to show that $\hat{R}$ can choose correct values for non-deterministic choices in the copy instruction so that it does not end up in the error sink. For each accepting run $\rho$ of $R$, we construct an accepting run $\rho'$ of $\hat{R}$ as follows. We label each counter $j$ in the $k$-th state of $\rho$ (for all $k \leq |\rho|$) by the length of a maximal value trace $vt$ with domain being a subset of $\{k, k+1, \dots, |\rho|\}$ and $vt(k) = j$ (this label is called \emph{expectancy}). $\hat{R}$ takes the simulating transition for each transition of $\rho$ (according to the rules above) and when it has to choose between $i$ and $j$ ($e_i = *j$) along a transition ending in the $k$-th state, then it chooses $i$ if and only if the expectancy of $i$ in $k$ is greater than the expectancy of $j$ in $k$ \emph{(expectancy rule)}. We show that this is a valid definition, i.e., the corresponding run of $\hat{R}$ does not end up in the error sink. The main step in the proof is to show that the parent pointers always point to the counters with expectancy which is greater than or equal to the expectancy of the counter which owns the parent pointer.

\begin{lemma}
\label{Lemma:existence-of-corr-runs}
  For each accepting run $\rho$ of $R$ there is an accepting corresponding run $\rho'$ of $\hat{R}$.
\end{lemma}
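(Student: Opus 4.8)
The plan is to build $\rho'$ so that it mirrors $\rho$ transition by transition and then to argue that the expectancy rule keeps $\hat R$ out of the error sink. For a transition $s \arrow{a,t} s'$ of $\rho$ with no copy instruction there is a single simulating transition and $\hat R$ takes it; for a copy transition $\pi_i(t)=*j$ that ends in the $k$-th state of $\rho$, $\hat R$ takes the ``choice of $i$'' branch iff the expectancy of $i$ in $k$ is strictly larger than the expectancy of $j$ in $k$, and the ``choice of $j$'' branch otherwise. The remaining non-determinism (which counter values get swapped) plays no role here, since counter values never guard a transition nor decide whether a transition enters the error sink. Because every simulating transition preserves the pair of underlying $R$-states, $\rho'$ visits exactly the $R$-states that $\rho$ does, and in particular ends in an accepting state; so the whole statement reduces to showing that along $\rho'$ the error-sink condition is never met.

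For that I would work with the invariant highlighted in the text: in every configuration reached by $\rho'$, if a counter $k$ has parent pointer $l\neq\Null$, then the expectancy of $l$ is at least the expectancy of $k$, both evaluated at the step of $\rho$ to which the configuration corresponds. Granting the invariant, suppose toward a contradiction that $\rho'$ is forced into the error sink, and consider the first transition at which this happens, say one mirroring an $R$-effect $t$ out of a configuration at step $p$ and witnessed by counters $\kappa,\lambda$ with $\pi_\kappa(t)\in\{0,1\}$, $\pi_\lambda(t)\notin\{0,1\}$ and parent pointer of $\kappa$ equal to $\lambda$. Minimality gives the invariant at step $p$, hence $\mathrm{exp}(\lambda,p)\geq\mathrm{exp}(\kappa,p)$. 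On the other hand, since $\pi_\kappa(t)\in\{0,1\}$, any value trace starting at $\kappa$ in step $p+1$ can be prolonged one step backwards to $\kappa$ at $p$, so $\mathrm{exp}(\kappa,p)\geq 2$; and since $\pi_\lambda(t)\notin\{0,1\}$, no value trace from $\lambda$ at $p$ may stay at $\lambda$, and by the restriction to at most one copy per effect such a trace can be prolonged at all only by passing the unique copy of $t$. A short case split then yields $\mathrm{exp}(\lambda,p)<\mathrm{exp}(\kappa,p)$: if $\lambda$ is simply reset with no counter copying it, or $\pi_\lambda(t)$ is itself that copy, then $\mathrm{exp}(\lambda,p)=1$; in the one remaining sub-case ($\lambda$ reset and simultaneously the source of the single copy of $t$), a separate argument -- tracing back to the step where the pointer $\kappa\to\lambda$ was installed and using that any reset of $\lambda$ in the meantime would already have triggered the error condition -- rules it out by minimality. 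Either way the invariant is contradicted.

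The remaining task, which I would carry out by induction on the length of $\rho'$, is to prove that invariant. The inductive step is a case analysis over the form of the mirrored $R$-effect and, for copy effects, over which branch the expectancy rule takes. A parent pointer in the new configuration is either fresh -- installed by the copy-choice bookkeeping, hence pointing from the copied counter to the counter the expectancy rule just declared to be no smaller in expectancy, so the bound holds by the very definition of the rule -- or inherited, possibly redirected by the swap bookkeeping, which re-targets every pointer aimed at the swapped counter. For inherited pointers one has to show the bound survives: a reset of the child sends its pointer to $\Null$ and the obligation disappears; a reset of the parent together with a surviving pointer is exactly the situation the error-sink test catches, so by the minimality set-up it does not arise on the prefix under consideration; and when both counters carry only increments one must track how expectancy changes over the step. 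It may be cleanest to strengthen the invariant to carry not just the numerical inequality but witnessing value traces for $k$ and $l$ that descend from a common ancestor value and have been only incremented since they diverged, so that the inequality propagates structurally; I would formulate it that way and push the traces forward.

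The main obstacle is precisely this preservation step: phrasing the invariant so that it is self-propagating through the swap bookkeeping (a single swap can redirect many pointers at once) and through steps where a pointed-to counter is reset or copied into, while still being strong enough to kill the error-sink condition in the previous paragraph. Everything else -- defining $\rho'$, the elementary monotonicity of expectancy under a single $\{0,1\}$ instruction, and the final conclusion that $\rho'$ is accepting -- is routine bookkeeping.
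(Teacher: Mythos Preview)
Your approach is the paper's: resolve the copy non-determinism by the expectancy rule and carry the invariant ``if the parent pointer of $k$ is $l$ then $\mathrm{exp}(l)\ge\mathrm{exp}(k)$'' through by induction. The paper folds ``no error sink'' into the induction step with the single sentence ``because of the induction hypothesis and the definition of expectancy, there are always simulating transitions''; you unfold that implication into a contradiction argument with a case split on the shape of $e_\lambda$. That is a presentational difference, not a mathematical one.

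Your case~3, however, is a genuine weak spot, and the ``separate argument'' you sketch does not dispose of it. Take three counters and a two-step accepting run with effects $(0,*1,0)$ then $(r,0,*1)$. At state~$2$ the expectancies of counters $1$ and $2$ are both~$2$ (counter~$1$ survives via the copy into~$3$, counter~$2$ via $e_2=0$), so the rule ``choose $i$ iff $\mathrm{exp}(i)>\mathrm{exp}(j)$'' picks $j=1$ and installs the pointer $2\to 1$. On the second effect you then have $e_2=0\in\{0,1\}$, $e_1=r\notin\{0,1\}$, and the pointer $2\to 1$: the error-sink condition fires, the invariant holds ($2\ge 2$), and your trace-back argument is vacuous because the pointer was just installed and there is no intermediate reset of~$\lambda$. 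The lemma itself survives --- choosing $i=2$ at the first step avoids the sink --- but the bare numerical invariant is too weak to force that choice at a tie. Your own proposed strengthening, carrying witnessing value traces from a common ancestor rather than just their lengths, is the right direction: it is what lets you see that the trace through counter~$2$ and the trace through counter~$1$ (continuing into~$3$) are actually the \emph{same} evolution of the value copied at step~$1$, so either choice should be acceptable, and the error-sink test is overreacting to a harmless situation. Working that out is where the real content is; the paper's one-liner elides the same point.
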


\proof{
  We prove by induction that for each prefix of $\rho$ there is a simulating run which does not contain the error state such that for any state along $\rho'$ and any two counters $i,j$ in this state, if the parent pointer of $i$ points to $j$ then the expectancy of $j$ is not smaller than that of $i$. Such a simulating run for  $|\rho|$ will also be accepting.

  The basic step (i.e., the prefix length is $0$) is trivial.
  For the induction step, let us assume that there is a simulation of the prefix of length $k$ satisfying IH. To simulate the $k+1$-st transition, we follow the expectancy rule.

  Because of the induction hypothesis and the definition of expectancy, there are always simulating transitions (and not a transition leading to the error sink). If there is a copy instruction $e_i = *j$ in the transition, the non-deterministic choice is performed according to the $vt$ function, so the result again satisfies the induction hypothesis. The transfer of the parent pointers does not violate it either, because expectancy of $j$ in $k$ is equal to $1$ plus the maximum of the expectancies of $i$ and $j$ in $k+1$.
  The resets do not establish any new parent pointers, so the result again satisfies the induction hypothesis.
  The other instructions result in decrementing the expectancy, which preserves the induction hypothesis for all the pointers inherited from the previous state as well as for the pointers changed by the copy instruction.
\forget{
  For the induction step, let us assume that there is a simulation of the prefix of length $k$ satisfying the induction hypothesis. To simulate the $k+1$-st transition, we follow the expectancy rule.

  Copy: we do not go to the error sink, because of IH. Choice according to the $vt$ function, so satisfying IH. The transfer of the parent pointers is OK (at this point, we compare with the expectations in the $k$-th state and later show that it was OK), because expectancy of $j$ in $k$ is equal to $1$ plus the maximum of the expectancies of $i$ and $j$ in $k+1$.

  Resets: we do not go to the error sink, because of IH. No contribution to the parent pointer structure.

  Other: decrement of the expectancy, which makes it OK for all the pointers from before and also for the pointers changed by the copy.
}
\qed
}

Let us introduce the parent pointer relation $\ppr$ for a state of $\hat{R}$ as a relation on counters where $i \ppr j$ if and only if the parent pointer of $i$ is set to $j$.

\begin{lemma}
\label{Lemma:antireflexivity}
  Let $\rho$ be a run of $\hat{R}$. The transitive closure of $\ppr$ is antireflexive in all states of $\rho$.
\end{lemma}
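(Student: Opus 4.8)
The plan is to prove the statement by induction on the length of the run $\rho$ of $\hat R$, showing that in every state the relation $\ppr$ is in fact a \emph{forest} (each counter has at most one outgoing parent pointer, by construction), and that no directed cycle is ever created, so that the transitive closure of $\ppr$ stays antireflexive. The base case is the initial state, where all parent pointers are $\Null$, so $\ppr$ is empty and its transitive closure is trivially antireflexive.

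For the induction step, assume the transitive closure of $\ppr$ is antireflexive in the state before the $(k{+}1)$-st transition, and examine how each kind of simulating transition modifies $\ppr$. A transition with no copy instruction only resets some counters' parent pointers to $\Null$; removing edges from an acyclic relation keeps it acyclic, so the claim is preserved. The only transitions that \emph{add} edges are those simulating a copy $e_i = *j$. In the ``choice of $j$'' branch, we set $i$'s parent pointer to $j$; since $i$'s old outgoing edge is overwritten, the only way a cycle could arise is if there were already a directed $\ppr$-path from $j$ back to $i$ — but then, following parent pointers, $j$ would be (transitively) an immediate copy-descendant of $i$, whereas $i$ is now being made a fresh reset copy, and one checks that the edge from $i$ that is being removed is precisely the one that such a path would have had to use, or else antireflexivity of the old closure is contradicted. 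In the ``choice of $i$'' branch, we swap the values of $i$ and $j$, copy $j$'s parent pointer into $i$'s, redirect every parent pointer that pointed to $j$ so that it points to $i$, and finally set $j$'s parent pointer to $i$; I would argue that this operation is, up to the renaming that swaps the roles of $i$ and $j$, the same as the ``choice of $j$'' branch, so acyclicity is preserved for the same reason. (It is worth noting here that both branches of the copy simulation also reset to $\Null$ the parent pointer of every counter $k$ with $e_k = r$, which only removes edges and hence cannot hurt.)

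The main obstacle I anticipate is the bookkeeping in the ``choice of $i$'' branch: several parent pointers are rewritten simultaneously (the swap, the inheritance of $j$'s pointer by $i$, and the global redirection of pointers aimed at $j$), and one must verify that after \emph{all} of these updates are applied in the prescribed order no directed cycle through the newly-created edge $j \ppr i$ (or through the redirected edges) exists. The clean way to handle this is the symmetry observation above — present the ``choice of $i$'' case as the image of the ``choice of $j$'' case under the value-swap bijection on counter indices — together with the invariant, easy to carry alongside, that each counter has exactly one outgoing parent pointer, so that $\ppr$ is always a functional graph whose acyclicity is equivalent to antireflexivity of its transitive closure. With that invariant in hand, the argument reduces to checking that in each step the (unique) new outgoing edge of the affected counter does not close a cycle, which follows from the induction hypothesis applied to the state just before the step.
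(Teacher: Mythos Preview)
Your inductive framework is right, and the base case and the no-copy case are fine. The gap is in the copy case. In the ``choice of $j$'' branch you correctly observe that a cycle would require a $\ppr$-path from $j$ back to $i$ in the post-transition state, but the reason you give for why no such path exists --- that ``the edge from $i$ that is being removed is precisely the one that such a path would have had to use'' --- does not work: a path from $j$ to $i$ ends with an edge \emph{into} $i$, not with $i$'s outgoing edge, so overwriting $i$'s outgoing pointer does nothing to break such a path. Nothing in your argument rules out some surviving edge $k \ppr i$.

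The ingredient you are missing is the error-sink condition, which you never invoke. A simulating transition (rather than a transition to the error sink) is created only when there is \emph{no} counter $k$ with $e_k \in \{0,1\}$ whose parent pointer is set to a counter $l$ with $e_l \notin \{0,1\}$. Since $e_i = *j \notin \{0,1\}$, this guarantees that no counter $k$ with $e_k \in \{0,1\}$ points to $i$. Counters $k$ with $e_k = r$ have their pointer cleared to $\Null$ along the transition, and under the standing assumption of at most one copy instruction per effect the only remaining counter is $i$ itself. Hence in the resulting state \emph{nothing points to $i$}, so the new edge $i \ppr j$ cannot close a cycle. The same observation --- nothing points to $i$ before the pointer manipulations, hence after the redirection nothing points to $j$ --- handles the ``choice of $i$'' branch directly; your symmetry reduction is reasonable in spirit, but as written it inherits the gap from the first branch.
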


\proof{
  We prove by induction that for each prefix of $\rho$, the transitive closure of $\ppr$ is antireflexive in all states of the prefix.

  The basic step is trivial, $\ppr$ is empty in $s_0$.
  For the induction step, we need to check that a single transition does not violate the antireflexivity. If the transition leads to the error sink then $\ppr$ is not changed. Otherwise, it is a simulating transition defined by the rules above. The resets make $\ppr$ smaller and $0, 1$ do not change it. In the copy instruction $e_i = *j$, we introduce one new pointer, but we know that nothing points to $i$, because of the condition on creating the simulating transitions and the fact that the parent pointers of all reset counters are set to $\Null$. In the first case ($j$ has been chosen), we set $i$'s parent pointer to $j$, which cannot introduce a loop, since nothing points to $i$. In the second case ($i$ has been chosen), since we have redirected all the pointers pointing to $j$ to $i$, there is nothing pointing to $j$ and newly introduced $j\ppr i$ cannot create a loop. Also, since there was nothing pointing to $i$ previously, the only pointers pointing to $i$ now are those that previously pointed to $j$. 
  \qed

}

This leads to the following definition of ranks.
  For a counter $i$ in a state $s$ of $\hat{R}$ we define $\rank(s,i)$ inductively by $\rank(s,i) = 0$ if the parent pointer of $i$ in $s$ is $\Null$ and $\rank(s,i) = \rank(s,j)+1$ if $i\ppr j$ in $s$.
\forget{
\begin{defi}
  For a counter $i$ in a state of $\hat{R}$ we define $\rank(i)$ inductively by $\rank(i) = 0$ if the parent pointer of $i$ is $\Null$ and $\rank(i) = \rank(j)+1$ if $i\ppr j$.
\end{defi}
}
From Lemma~\ref{Lemma:antireflexivity}, we have that the ranks are well-defined and it follows directly from the definition that the rank of a counter is always bounded by the number of the counters. Now we formulate a lemma saying that the ranks never decrease along a value trace.

\begin{lemma}
\label{Lemma:ranks}
  Let $\rho$ be a run of $\hat{R}$ and $vt$ be a value trace. Then for $k\leq l$ such that $vt(k), vt(l)$ are defined, $\rank(s_k, vt(k)) \leq \rank(s_l, vt(l))$.
\end{lemma}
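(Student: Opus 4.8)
The plan is to prove the inequality by induction on $l-k$. The base case $l=k$ is trivial, and the inductive step follows by composing the induction hypothesis applied to $k\le l$ with a single transition step, so everything reduces to the one-step claim: whenever $vt(m)$ and $vt(m+1)$ are both defined, $\rank(s_m,vt(m))\le\rank(s_{m+1},vt(m+1))$. If the $(m{+}1)$-st transition of $\rho$ goes into or stays inside the error sink it leaves the parent pointers unchanged, hence all ranks (well defined by Lemma~\ref{Lemma:antireflexivity}) are unchanged and there is nothing to show; otherwise it is a simulating transition built from some $s\arrow{a,t}s'$ of $R$. Write $c=vt(m)$ and $c'=vt(m+1)$.

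By Definition~\ref{Def:value-trace} either $c=c'$ and $\pi_c(t)\in\{0,1\}$, or $c\neq c'$ and $\pi_{c'}(t)=*c$ (a value trace following a live value also has $\pi_c(t)\in\{0,1\}$ at such a step, since a reset of $c$ would destroy the traced value). The key structural fact, used in both cases, is that the whole parent chain $c\ppr q_1\ppr\cdots\ppr q_r\ppr\Null$ in $s_m$ consists only of counters whose instruction in $t$ is $0$ or $1$: if some link $q_i\ppr q_{i+1}$ had $\pi_{q_i}(t)\in\{0,1\}$ but $\pi_{q_{i+1}}(t)\notin\{0,1\}$, the simulating transition would have led to the error sink, and iterating from $c$ upward propagates $\pi\in\{0,1\}$ along the whole chain. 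Consequently none of $c,q_1,\dots,q_r$ is reset (so none of their pointers is nulled by the transition) or is the target of the copy instruction of $t$, and by antireflexivity of $\ppr$ no $q_i$ has parent $c$.

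If $c=c'$: the only way the chain above $c$ could be disturbed is that some $q_i$ is the \emph{source} of the copy instruction and the simulator picks the target; then redirecting the pointers with that value and letting the target inherit the source's old parent just replaces $q_i$ by the target on the chain without changing its length (the chain can only lengthen, if $c$ itself is the source), so $\rank(s_{m+1},c)\ge\rank(s_m,c)$. If $c\neq c'$ the simulating transition has two branches. When the simulator chooses the source $c$, it sets $c'$'s parent to $c$ and leaves $c$'s chain intact, so $\rank(s_{m+1},c')=\rank(s_{m+1},c)+1=\rank(s_m,c)+1$. When it chooses the target $c'$, the counter $c'$ inherits $c$'s old parent $q_1$ (or $\Null$, in which case $\rank(s_m,c)=0$ and we are done), while the chain $q_1\ppr\cdots\ppr q_r\ppr\Null$ survives verbatim, because the only pointers the copy redirects are those whose value is $c$ and no $q_i$ points to $c$; hence $\rank(s_{m+1},c')=\rank(s_m,q_1)+1=\rank(s_m,c)$. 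In every branch the one-step inequality holds.

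The hard part is precisely this chain bookkeeping: showing that the transition never shortens the parent chain sitting above the counter the value trace currently occupies. Such shortening could occur only through a reset nulling one of the chain's pointers or through a copy redirecting one of them; the error-sink condition rules out the first and the antireflexivity of $\ppr$ (Lemma~\ref{Lemma:antireflexivity}) rules out the second. Threading these two facts through all the branches of the copy simulation, and through the degenerate situations where $c$ is itself the copy source or target or where a relevant parent pointer is already $\Null$, is where the attention to detail is needed.
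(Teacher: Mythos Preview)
Your proof is correct and follows the same inductive strategy as the paper's: induction on $l-k$, reduction to a single step, and a case analysis on the instruction type showing that the parent chain above the traced counter is never shortened. Your argument is considerably more detailed than the paper's terse sketch, making explicit the chain-preservation bookkeeping that the paper only gestures at with phrases like ``because of the careful manipulation with the pointers, no ranks which depend on the rank of the longer branch change either.''
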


\proof{
  We show this claim by induction on $l-k$.
  The basic step is that $l = k$ and then $\rank(s_k, vt(k)) = \rank(s_l, vt(l))$.
  For the induction step we have two cases. If the transition leads to the error sink then $\ppr$ is not changed and therefore the ranks do not decrease. Otherwise, it is a simulating transition defined by the rules above. Because of the condition on creating the simulating transitions, we never decrease any rank by a reset. The instructions $0, 1$ also do not decrease any rank. Copy increases the rank of the branch with smaller expectancy (and the counter is reset) and keeps the rank for the branch with bigger expectancy (the one which keeps the value) unchanged. Because of the careful manipulation with the pointers, no ranks which depend on the rank of the longer branch change either.\qed
}

\forget{
The correctness of the reduction is stated in the following lemma.

\begin{lemma}
  Let $R$ be an extended R-automaton with at most one copy instruction in each effect and $\hat{R}$ be the simulating R-automaton constructed as above. There is a $B$ such that $L_B(R) = \Sigma^*$ if and only if there is a $B'$ such that $L_B'(\hat{R}) = \Sigma^*$.
\end{lemma}

\proof{
  The "if" direction: from Lemma~\ref{Lemma:existence-of-corr-runs} we have that for each accepting run $\rho$ of $R$ there is a corresponding accepting run $\rho'$ of $\hat{R}$. It is easy to see from the construction that for all $k \leq |\rho|$, the counter values in the $k$-th state of $\rho'$ are not bigger than the counter values in the $k$-th state of $\rho$. The counter instructions are simulated faithfully, except for possibly more resets (in the copy instructions) along $\rho'$.

  The "only if" direction: for each $B$, let us take a counterexample $w$ to the universality of $R$ in the $n\cdot B$ semantics ($n$ is the number of the counters). Any accepting run $\rho'$ of $\hat{R}$ over $w$ must satisfy Lemma~\ref{Lemma:ranks} (because it corresponds to an accepting run $\rho$ of $R$ over $w$). Let $vt$ be a maximal value trace for a value which exceeds $n\cdot B$ in $\rho$. We study the evolution of this value in $\rho'$. It is simulated faithfully except for some possible resets in the copy instructions. But for each such reset, the rank of the counter strictly decreases. Therefore, there can be at most $n-1$ such resets and there must be a state in which this value exceeds $B$.
}
}

The main property of the reduction is stated in the following lemma. The correctness of Lemma~\ref{Lemma:RC-automata} is then a direct corollary of this lemma.

\begin{lemma}
\label{Lemma:decidability-of-copy}
  Let $R$ be an extended R-automaton with $n$ counters and with at most one copy instruction in each effect and $\hat{R}$ be the simulating R-automaton constructed as above. For each $B$ and for each word $w$, $w\in L_B(R) \Rightarrow w\in L_B(\hat{R})$ and $w\in L_B(\hat{R}) \Rightarrow w\in L_{n\cdot B}(R)$.
\end{lemma}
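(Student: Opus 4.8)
The plan is to prove the two implications separately; the first is essentially bookkeeping, while the second is where the rank machinery developed above does the real work. For the forward direction, $w\in L_B(R)\Rightarrow w\in L_B(\hat R)$, I would start from an accepting run $\rho$ of $R$ over $w$ in the $B$-semantics and invoke Lemma~\ref{Lemma:existence-of-corr-runs} to obtain a corresponding accepting run $\rho'$ of $\hat R$ over $w$ (which avoids the error sink, by that lemma). Then I would argue, by a straightforward induction on the length of the prefix, that in the $k$-th state of $\rho'$ every counter value is bounded above by the value of the same counter in the $k$-th state of $\rho$. The point is that $\hat R$ simulates the effects of $R$ faithfully, the only discrepancy being that a copy instruction $*j$ is replaced by a reset on the target counter; a reset can only lower a counter value relative to what it would have been. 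Hence if $\rho$ stays within the bound $B$ (i.e.\ lives in $\hat S_B$), so does $\rho'$, and $\rho'$ witnesses $w\in L_B(\hat R)$.

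For the reverse direction, $w\in L_B(\hat R)\Rightarrow w\in L_{n\cdot B}(R)$, I would take an accepting run $\rho'$ of $\hat R$ over $w$ in the $B$-semantics and let $\rho$ be the unique corresponding accepting run of $R$ over $w$ (this correspondence is immediate from the definitions, as noted before Lemma~\ref{Lemma:existence-of-corr-runs}). I must bound each counter value along $\rho$ by $n\cdot B$. Fix a position $l$ and a counter $j$; I want to bound the value of $j$ in the $l$-th state of $\rho$. The idea is to follow a maximal value trace $vt$ ending at $(l,j)$ — i.e.\ $vt(l)=j$ with domain a subset of $\{i,\dots,l\}$ reaching back as far as possible — and track the value along it. Along this trace the value in $R$ evolves only by $0$, $1$, and copy steps, so its value is exactly the number of $1$-increments accumulated along the trace (since the value at the start of a maximal trace is $0$, being created by a reset or a copy-from-a-fresh-value, or the trace starts at the beginning of the run). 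In $\hat R$, the same trace is simulated faithfully except that at each copy step the counter may be reset to $0$ instead of receiving the copied value. So the value of the corresponding counter in $\hat R$ equals the number of $1$-increments accumulated since the last such simulation-reset. The crucial observation is Lemma~\ref{Lemma:ranks}: the rank is non-decreasing along a value trace, and each simulation-reset on a copy step strictly increases the rank of the counter carrying the trace (the copied-into counter is the one with smaller expectancy, which gets its parent pointer set and its rank bumped up). Since ranks are bounded by $n$, there are at most $n$ maximal segments of $vt$ between consecutive simulation-resets, and on each such segment $\hat R$ accumulates the same increments as $R$ does on that segment. If the value in $R$ at position $l$ exceeds $n\cdot B$, then by pigeonhole one of these at most $n$ segments contributes more than $B$ increments, so at the end of that segment the corresponding counter value in $\hat R$ already exceeds $B$, contradicting $\rho'\in\hat S_B$. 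Hence every counter stays below $n\cdot B$ along $\rho$, so $\rho$ lives in $\hat S_{n\cdot B}$ and witnesses $w\in L_{n\cdot B}(R)$.

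The main obstacle is making the second argument precise: one has to correctly identify what the "value trace ending at $(l,j)$" is, handle the case where the value present at position $l$ in counter $j$ was itself produced by copy steps (so the relevant trace may split from other traces at copy points), and confirm that at each copy step of the simulation the counter which gets reset — rather than the one which keeps the value — is exactly the one whose rank strictly increases. This requires carefully combining the expectancy rule used to build $\rho'$ in Lemma~\ref{Lemma:existence-of-corr-runs} with Lemma~\ref{Lemma:ranks}, and checking that the "at most $n-1$ simulation-resets along a value trace" claim really follows from the rank being strictly increased at each such reset and bounded by $n$. Everything else — the faithful-simulation bookkeeping in the forward direction and the counting of increments — is routine once this is in place.
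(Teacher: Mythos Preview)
Your proposal is correct and follows essentially the same approach as the paper: the forward direction via Lemma~\ref{Lemma:existence-of-corr-runs} plus the observation that replacing copies by resets only lowers counter values, and the reverse direction via the rank machinery (Lemma~\ref{Lemma:ranks}) to bound the number of simulation-resets along a maximal value trace by $n-1$, then pigeonhole. One clarification: in the second implication you start from an \emph{arbitrary} accepting run $\rho'$ of $\hat R$, so the expectancy rule from Lemma~\ref{Lemma:existence-of-corr-runs} is irrelevant there; the fact that the reset counter's rank strictly increases at a copy step is a property of the construction itself (the reset counter gets its parent pointer set to the surviving one), not of any particular strategy for choosing $\rho'$.
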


\proof{
  The first implication: we know from Lemma~\ref{Lemma:existence-of-corr-runs} that for each accepting run $\rho$ of $R$ over $w$ there is a corresponding accepting run $\rho'$ of $\hat{R}$ over $w$. It follows directly from the construction that for all $k \leq |\rho|$, the counter values in the $k$-th state of $\rho'$ are bounded by the counter values in the $k$-th state of $\rho$. All instructions are simulated faithfully except for replacing copy instructions by resets along $\rho'$.

  The second implication: by contraposition, let us for each $B$ consider a word $w$ such that $w\notin L_{n\cdot B}(R)$. Any accepting run $\rho'$ of $\hat{R}$ over $w$ must satisfy Lemma~\ref{Lemma:ranks}. Let $vt$ be a maximal value trace for a value which exceeds $n\cdot B$ in $\rho$. We study the evolution of this value in $\rho'$. It is simulated faithfully except for some possible resets in the copy instructions. But for each such reset, the rank of the counter strictly increases. Therefore, there can be at most $n-1$ such resets and there must be a state in which this value exceeds $B$.\qed
}


Now we show that the result holds also for extended R-automata with any number of copying in each step. Let us view the relation "$i$ is copied to $j$" induced by an effect $t$ as a directed graph (counters are nodes, there is an edge from $i$ to $j$ if $\pi_j(t) = *i$). Because each node can have at most one incoming edge, such a graph is a collection of simple loops with isolated paths outgoing from them (nodes with no incoming edge are considered as degenerated loops). We can split application of such an effect $t$ into an equivalent sequence of effects with at most one copy instruction and some swapping of the values and the parent pointers as follows. First, we perform $\hat{t}$ (all increments and resets). Then we pick one of the counters $j$ such that $j$ has no outgoing edge and it has an (exactly one) incoming edge from $i$. We copy the value of $i$ to $j$ and leave all other counters unchanged, which can be described by the effect $(0, \dots, *i, \dots, 0)$, where $*i$ is on the $j$-th position. Then we remove the edge connecting $i$ and $j$ and continue to pick another such counter. When there is no node $j$ with no outgoing edge and with an incoming edge, there still might be loops in the copying graph. We simply swap the counter values and the parent pointers in the loops. Because of the order in which we have copied the counters, the effect of this sequence of transitions with at most one copy instruction and swaps is the same as that of the original transition. Also, the correctness does not depend on the order in which we choose the edges. A careful analysis shows that this sequence of transitions can be encoded into one simulating transition in R-automata with value swapping and parent pointers.

\subsection{Limiting Maxima in Extended R-automata}
\label{Sec:Reduction-maxplus-R}


\forget{
\begin{lemma}[\cite{aky08rc-automata}]
\label{Lemma:RC-automata}
  Limitedness and universality are decidable for extended R-automata.
\end{lemma}
}

Let for a state $\langle s, (\bar{P}, \bar{M}, \bar{N}), \lesssim \rangle$ in a run of an extended R-automaton with $n$ counters, the $N$-value ($M$-value, $P$-value) of this state be $\max\{N_i | 1\leq i \leq n\}$ ($\max\{M_i | 1\leq i \leq n\}$, $\max\{P_i | 1\leq i \leq n\}$, respectively). Let for a run $\rho$ of this automaton, the $N$-value ($M$-value, $P$-value) of the run be the maximum state $N$-value ($M$-value, $P$-value) over all states along the run. We denote this value by $N(\rho)$ ($M(\rho)$, $P(\rho)$).

\begin{lemma}
\label{Lemma:counter-maxs}
  Let $R$ be an extended R-automaton with $n$ counters and let $B\in\Nat$. For all runs $\rho$ of $R$, if $N(\rho) \leq B$ then $M(\rho) \leq B^n$.
\end{lemma}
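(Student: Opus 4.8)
The plan is to prove, by induction on the number of counters $n$, the sharper statement: along any run $\rho$ of an extended R-automaton with $n$ counters, if $N(\rho)\le B$ then for every state $\langle s,(\bar P,\bar M,\bar N),\lesssim\rangle$ on $\rho$ and every counter $i$ we have $M_i\le B^{\,r_i+1}$, where $r_i$ is the "height'' of $i$ in the preorder $\lesssim$, i.e.\ the length of the longest strict chain $i_0\lnsim i_1\lnsim\cdots\lnsim i_m=i$ below $i$. Since $r_i\le n-1$ this yields $M(\rho)\le B^n$. The intuition is exactly the one already flagged in the text: the $\max$ step (Step~\ref{Sem:max}) can only push $M_j$ up to $M_i+1$ when $i\lnsim j$, so the $M$-value of a counter at height $r$ is bounded by $B$ plus the largest $M$-value among counters of strictly smaller height, and unwinding this recurrence costs one multiplicative factor of (at most) $B$ per level — but one has to be careful that "height'' and the bound are maintained as the preorder is rebuilt along transitions.

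First I would set up the right induction invariant. Fix $B$ with $N(\rho)\le B$. For a configuration on $\rho$ and a counter $i$, let $h(i)$ be the height of $i$ in $\lesssim$ as above (so $h(i)=0$ iff no counter is strictly below $i$). I claim the invariant
\[
  M_i \;\le\; (h(i)+1)\cdot B^{\,h(i)} \quad\text{??}
\]
— actually the cleaner bound to carry is $M_i\le B^{\,h(i)+1}$ when $B\ge 2$, and one handles $B\le 1$ (hence all counters $0$) trivially; I would just state the invariant as $M_i\le B^{h(i)+1}$ and absorb small cases. The base case is the initial configuration, where $\bar M=0^n$ and $\lesssim$ is empty, so the invariant holds.

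For the induction step, consider one transition with effect $t$, producing $\bar M'$ and $\lesssim'$ via Steps \ref{Sem:effects}--\ref{Sem:max}. After Step~\ref{Sem:effects} each $M_i$ changes by $0$, $+1$, a reset to $0$, or a copy from another counter, so the values after Step~\ref{Sem:effects} are bounded by (old bound)$+1$; simultaneously the preorder is rebuilt in Step~\ref{Sem:preorder}. The crucial observation — which I would extract as a small sublemma — is that the rebuilt preorder $\lesssim'$ does not create "cheap'' height: if $i\lnsim' j$ then either (a) already $i\lnsim j$ and both counters only underwent $0/1$, so $M'_j\ge M'_i$ is inherited and heights are controlled, or (b) $j$ received a reset pointing at $i$ with $N'_i>0$, whence $i$'s old $M$ controls the jump, or (c) a copy occurred, in which case the two counters have (a fortiori) the same $M$-value. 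In each case one checks $h'(j)$ relates to the heights of the counters genuinely below $j$ in the \emph{previous} configuration in a way that lets the old bound $M_i\le B^{h(i)+1}$ survive. Finally Step~\ref{Sem:max} is the place the bound is actually consumed: processing counters in increasing $\lesssim'$-height, when $i\lnsim' j$ with $h'(i)=\ell$ we set $M'_j\le\max\{M'_j, M'_i+1\}\le \max\{M'_j, B^{\ell+1}+1\}$, and iterating over all $i\lnsim' j$ gives $M'_j\le B^{\,h'(j)}+1\le B^{\,h'(j)+1}$ (using $B\ge 2$), which is the invariant for $j$ at the new configuration.

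The main obstacle I anticipate is not Step~\ref{Sem:max} itself but the bookkeeping around Step~\ref{Sem:preorder}: one must verify that every way $\lesssim$ is modified along a transition — broken equalities from $0$-vs-$1$, a reset dropping a counter below the non-zero counters named in the reset, copies identifying two counters, and then transitive-reflexive closure — keeps the invariant $M_i\le B^{h(i)+1}$ intact for the freshly defined heights $h'$, including for counters whose height changed only through the transitive closure. I would handle this by a case analysis mirroring the "another view on the preorder'' paragraph in the excerpt (the three ways $i\lnsim j$ can arise), checking in each case that no counter ends up at height $\ell$ while carrying an $M$-value exceeding $B^{\ell+1}$, and that the heights are bounded by $n-1$ throughout (immediate from $\lesssim'$ being a preorder on $n$ elements). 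Combining the invariant at the final configuration of $\rho$ with $h(i)\le n-1$ gives $M_i\le B^n$ everywhere, hence $M(\rho)\le B^n$.
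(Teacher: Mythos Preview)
Your invariant $M_i\le B^{h(i)+1}$, with $h(i)$ the height of $i$ in the \emph{current} preorder $\lesssim$, is not preserved along runs, and this is not just a bookkeeping difficulty in Step~\ref{Sem:preorder}: it is false. The point is that the height of a counter can drop (when the chain below it is destroyed) while its $M$-value stays put. Concretely, take two counters, $B=2$, and apply the effects $(0,1)$, $(r(\{2\}),0)$, $(1,0)$, $(1,0)$, $(r(\emptyset),0)$. After the first four steps one has $N=(2,1)$, $1\lnsim 2$, and the max rule gives $M_2=3$; here $h(2)=1$ and $3\le B^2=4$, so the invariant holds. After the last step counter~$1$ is reset with an empty set, so $\lesssim'$ is empty, $h'(2)=0$, but $M_2$ is still $3>B=2$. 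Thus the invariant fails precisely in the case you flagged as ``the main obstacle'': a reset $r(\emptyset)$ removes a counter from below $j$ and lowers $h(j)$ without touching $M_j$.

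The paper's proof avoids tying the bound to the dynamically changing height. Instead it runs an induction on $n$ with the stronger hypothesis ``from a state with $M$-value $b$, if $N(\rho)\le B$ then $M(\rho)\le b+B^n$'', fixes a counter (WLOG $n{+}1$) attaining the maximal $M$-value, argues (by a renaming / restarting argument) that one may assume it is never reset or copied into, and then observes that between two consecutive increments of $N_{n+1}$ the value $M_{n+1}$ can grow only via $\max$ with counters strictly below it; those counters cannot exploit counter $n{+}1$ without first becoming $\simeq n{+}1$ and then being reset or overwritten, so they behave like an $n$-counter system and the inductive bound applies. Since there are at most $B$ increments of $N_{n+1}$, the bound multiplies by $B$. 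If you want to rescue a height-style argument, the height you track must be a quantity that \emph{cannot decrease along a value trace} (akin to the ranks used later in the paper), not the instantaneous height in $\lesssim$.
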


\proof{
  We show a stronger claim, namely that if a run $\rho$ starts in a state with the $M$-value equal to $b$ and $N(\rho) \leq B$ then $M(\rho) \leq b + B^n$, by induction on the number of counters $n$. The basic step ($n=1$) is trivial, because Step~\ref{Sem:max} will never change the counter value and thus $M(\rho) = N(\rho) \leq B$.

  Let us assume that the claim holds for automata with $n$ counters. We show that it holds for automata with $n+1$ counters. Let us fix a run $\rho$ and a $B \in \Nat$. Let us without loss of generality assume that the counter which reaches the greatest $M$ value is the counter $n+1$. First, we argue that there is an extended R-automaton and a run of this automaton starting with the same counter values as $\rho$ which has the same $M$-value as $\rho$, along which the counter $n+1$ is never updated by a copy instruction and never reset.

  The argument for the copy instructions is straightforward, each copy instruction makes the source and the target counter equivalent both in the values which it contains (Step~\ref{Sem:effects}) and in the preorder $\lesssim$ (Step~\ref{Pre:copy}). Therefore, we can permute the instructions in the effects (intuitively, rename the counters) in the prefix of the run leading to the copy instruction so that the value is accumulated in the counter $n+1$ and then copied to the other counter.

  If the counter $n+1$ is reset then its values can be incremented only by $1$ and via the $\max$ operation with other counters which are reset later. This follows from the fact that $n+1$ is a minimal element of $\lesssim$ after it is reset. This is the same situation as if the run started with all counter values equal to zero ($\bar{C}_0$) and $\lesssim$ empty.

  Therefore, the counter $n+1$ can be updated only by $1$ and $0$ (where $0$ does not increase $M_{n+1}$ and there can be at most $B$ increases by $1$) and $M_{n+1}$ can be increased by the $\max$ operation. We show that $M_{n+1}$ can grow by at most $B^n$ between any two increments by $1$.

  Between any two increments by $1$, the value $M_{n+1}$ can grow only by application of the $\max$ operation with the counters $i$ such that $i \lnsim n+1$ (Step~\ref{Sem:max}). These counters cannot make use of the counter $n+1$ (cannot increase their $M$-values more than if there was no counter $n+1$). The only way for a counter $i$ to use the counter $n+1$ is to apply $i = *(n+1)$, but this would set $i\simeq n+1$ (Step~\ref{Pre:copy}). To set $i \lnsim n+1$ back again, we would have to reset $i$ (instruction $r(\{n+1\} \cup A)$) or copy some other counter $j$ such that $j \lnsim n+1$ into $i$ (instruction $i = *j$) (follows from Step~\ref{Sem:preorder}). But this would have the same effect as if $i$ was updated by $0$ until this state and then reset or copied. Hence, the claim that $M_{n+1}$ can grow by at most $B^n$ between any two increments by $1$ follows from IH.\qed

}

Now we show that the $P$-values are bounded by an exponent of the $M$-values.

\begin{lemma}
\label{Lemma:counter-sums}
  Let $R$ be an extended R-automaton with $n$ counters and let $B\in\Nat$. For all runs $\rho$ of $R$, if $M(\rho) \leq B$ then $P(\rho) \leq 2^B$.
\end{lemma}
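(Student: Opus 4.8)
The plan is to establish a configuration invariant that yields the bound directly: along every run of $R$, each reachable configuration $\langle s, (\bar P, \bar M, \bar N), \lesssim\rangle$ satisfies $P_i \le 2^{M_i}$ for all $1\le i\le n$. Granting this, the lemma is immediate: if $M(\rho)\le B$ then at every state along $\rho$ and every counter $i$ we have $P_i \le 2^{M_i} \le 2^{M(\rho)} \le 2^B$, hence $P(\rho)\le 2^B$.

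The invariant is proved by induction on the number of transitions. The base case holds since $\bar P = \bar M = 0^n$ initially and $2^0 = 1 \ge 0$. For the inductive step I process the three steps of the operational semantics in order. Step~\ref{Sem:effects} is checked counter by counter: the instruction $0$ leaves $P_i$ and $M_i$ unchanged; $1$ sends $(P_i, M_i)$ to $(P_i+1, M_i+1)$ and $P_i+1 \le 2^{M_i}+1 \le 2^{M_i+1}$; a reset sets both to $0$; and a copy $*j$ sets $(P_i, M_i)$ to the values of $(P_j, M_j)$ as updated by the non-copy part of Step~\ref{Sem:effects} (since copies read the $\hat t$-updated values), and those already satisfy the invariant by the previous cases. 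Step~\ref{Sem:preorder} modifies only $\lesssim$, so the invariant is untouched; write $\lesssim'$ for the resulting preorder and $(\bar P^\circ, \bar M^\circ)$ for the values entering Step~\ref{Sem:max}.

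The only real work is Step~\ref{Sem:max}. Since its two updates are monotone maxima that only increase counter values and never change $\lesssim'$, the configuration they produce is the least fixed point, above $(\bar P^\circ, \bar M^\circ)$, of the inequalities $M_j \ge \max(M_j^\circ, M_i+1)$ for $i \lnsim' j$ and $P_j \ge \max(P_j^\circ, P_k+P_l)$ for $k,l \lnsim' j$. As $\lnsim'$ is a strict partial order on the $\simeq'$-classes (a routine check gives transitivity), this least fixed point is obtained by solving the inequalities bottom-up along $\lnsim'$, which invites an inner induction on the rank of a counter $j$ in $\lnsim'$, i.e.\ the length of the longest $\lnsim'$-chain below $j$. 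Note that $M_j^\circ \le M_j$ always, and that at the fixed point $i \lnsim' j$ forces $M_j \ge M_i+1$. If $j$ has rank $0$ then no $i$ satisfies $i \lnsim' j$, so $P_j$ and $M_j$ stay at $P_j^\circ, M_j^\circ$ and $P_j = P_j^\circ \le 2^{M_j^\circ} = 2^{M_j}$. If $j$ has rank $\ge 1$ then $M_j \ge 1$ and every $k$ with $k \lnsim' j$ has strictly smaller rank, so by the inner induction $P_k \le 2^{M_k} \le 2^{M_j-1}$ at the fixed point; hence $P_j = \max\bigl(P_j^\circ, \max\{P_k+P_l : k,l \lnsim' j\}\bigr) \le \max\bigl(2^{M_j^\circ}, 2^{M_j-1}+2^{M_j-1}\bigr) \le 2^{M_j}$. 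This closes both inductions and establishes the invariant.

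The main obstacle is Step~\ref{Sem:max}: its $\max$-of-sums update is applied repeatedly, so it cannot be analysed in a single shot, and one must pick the right well-founded order — the rank in $\lnsim'$ — for the auxiliary induction. What makes the estimate tight enough is exactly that $i \lnsim j$ always "costs" at least one unit of $M$, so the height of $\lnsim'$ beneath a counter $j$ is bounded by $M_j$; doubling along a chain of that length, starting from a base value itself bounded by the base $M$-value, still stays below $2^{M_j} \le 2^B$. This is also why the bound here is independent of the number of counters $n$, in contrast with the $B^n$ bound of Lemma~\ref{Lemma:counter-maxs}.
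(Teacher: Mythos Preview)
Your proof is correct and follows essentially the same approach as the paper: both establish the pointwise invariant $P_i \le 2^{M_i}$ by induction on the run length, with the same case analysis for the instructions $0,1,r,*j$. The only difference is organisational: for Step~\ref{Sem:max} the paper simply checks that a single max update (setting $P_j' = P_k + P_l$ together with $M_j' \ge M_k+1$) preserves the invariant and lets repetition do the rest, whereas you make the well-founded structure explicit via an inner induction on the $\lnsim'$-rank; both arrive at the same bound $P_k + P_l \le 2\cdot 2^{M_k} \le 2^{M_j}$.
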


\proof{
  We show by induction on the length of the run that for all states $\langle s, (P,$ $M$, $N), \lesssim \rangle$ along the run and for all $1\leq i \leq n$, $P_i \leq 2^{M_i}$. The basic step is trivial. We check that the claim is preserved by every update of the counters. Let us denote the values before the transition by unprimed letters $P, M$ and after the effect takes place with primed letters $P', M'$. Let the instruction (update) applied to the counter $i$ be:
  \begin{enumerate}[$\bullet$]
    \item 0 : The values of the counters do not change, the claim holds from IH.
    \item 1 : We have that $P'_i = P_i+1$, $M_i' = M_i+1$. From IH, we know that $P_i\leq 2^{M_i}$. From this we have that $P_i' = P_i+1 \leq 2^{M_i} + 1$. Because $2^{M_i} \geq 1$ for all $M_i\geq 0$, we have that $2^{M_i} + 1 \leq 2 \cdot 2^{M_i} = 2^{M_i+1} = 2^{M_i'}$.
    \item r : This case is clear, $P_i' = M_i' = 0$.
    \item *j : The claim follows from IH.
    \item $\max$ : Let us discuss one application of the $\max$ operation (Step~\ref{Sem:max}) where the value of $P_i'$ is increased (if it is not the case then the claim holds from IH). If $k,l \lnsim i$ then $P'_i = \max\{P_i, P_k + P_l\} = P_k + P_l$ and $M'_i = \max\{ M_i, M_k+1, M_l+1\}$. Without loss of generality, let us assume that $P_k \geq P_l$. Thus, $P'_i \leq 2\cdot P_k \leq 2\cdot 2^{M_k} = 2^{M_k+1}$. Since $M'_i \geq M_k + 1$, we have that $2^{M_k + 1} \leq 2^{M'_i}$.

  \end{enumerate}

  \noindent Update of all counters along each transition consists only of these updates.\qed
}

\section{Encoding of Timed Automata to Extended R-automata}
\label{Sec:Encoding-TA-RC}

\revision{MOTIVATION FOR COUNTERS AND PREORDER}

Now we are ready to show the translation of timed automata into extended R-automata. Intuitively, we equip the region graph induced by a given timed automaton with counters whose values are updated as we move along a path in the region graph. 
The constructed extended R-automaton is equipped with two counters, $C_{xy}$ and $C_{yx}$, for each pair of clocks $x, y$. These counters keep the information about the minimal distances between the fractional parts of the clocks. The distance is not characterized in an absolute manner, but relatively to a sampling unit $\epsilon$. Let the counter values be obtained after following a path in the region graph. The counters say how many $\epsilon$'s at least have to be there between the fractional parts of two clocks in any state reachable by a concrete run of the timed automaton along this path in the region graph. 

If the fractional parts coincide then both $C_{xy}$ and $C_{yx}$ are equal to $0$. If the fractional parts are not equal then $C_{xy}$ contains a lower bound (as a number of $\epsilon$ steps) on the distance from $x$ to $y$ ($\numodels{xy}$) and $C_{yx}$ a lower bound on the distance from $y$ to $x$ ($\numodels{yx}$). This lower bound is also tight -- up to factor $2$. 
If the extended $R$-automaton reaches a state where $C_{xy}$ contains $12$ along some path then each run in an $\epsilon$-sampled semantics along the corresponding path in the region graph will end up in a state where $\numodels{xy} \geq 12\cdot \epsilon$. If $\epsilon = 0.01$ then the distance between the fractional parts has to be at least $0.12$. If $\epsilon = 0.1$ then this state is unreachable along this region graph path, because the difference
between the fractional parts has to be always smaller than $1$. Also, states where $\numodels{xy} \geq 2\cdot C_{xy} \cdot \epsilon$ holds for all clocks $x,y$ can be reached along the corresponding path in $\epsilon$-sampled semantics.

If the extended $R$-automaton is limited then we can choose a sufficiently small $\epsilon$ such that for each untimed word there is an accepting state which can be reached in $\epsilon$-sampled semantics (while reading this word). On the other hand, if the extended $R$-automaton is unlimited then for each $\epsilon$ we can pick a word which is accepted only with some counter exceeding $1/\epsilon$. This means that there will not be any runs in $\epsilon$-sampled semantics accepting this word.

The following examples illustrate how do we update the counters. First, we look at counter incrementing. If we start in the region where $0 < x = y < 1$ then 
the counters $C_{xy}$ and $C_{yx}$ are equal to $0$. Assume that the automaton resets the clock $x$. Then the distance between $x$ and $y$ has to be at least $\epsilon$ (and the same holds for the distance from $y$ to $x$). Hence, we increment both counters. After a (symbolic) time pass transition in the region graph, we come to the region where $0 < x < y < 1$. If the automaton now resets $x$ again then the distance from $x$ to $y$ has to be at least $2\cdot\epsilon$, while the lower bound on the distance from $y$ to $x$ can be arbitrarily small (but at least $\epsilon$). Therefore, we increment the counter $C_{xy}$, reset the counter $C_{yx}$ and immediately increment it.

Secondly, we describe a scenario where we need counter copying. Assume that the clocks $x$ and $y$ have different fractional parts and there is some value in the counter $C_{xy}$, say $54$. We can reset clocks $u$ and $v$ within one time unit so that the fractional parts of $u$ and $x$ are the same and that the fractional parts of $v$ and $y$ are the same. Then we know that the difference between the fractional parts of $u$ and $v$ has to be at least $54 \cdot \epsilon$. To remember this fact, we copy the information from the counter $C_{xy}$ to the counter $C_{uv}$. Note that it is not enough to track both distances with one counter, because these distances can from now on develop independently.

Finally, the last example motivates the maximum operation. Assume that we have three clocks $x,y$ and $z$ in a region where $0 < x < y < z < 1$ and $C_{xy} = 10, C_{yz} = 13$. This means that the difference between the fractional parts of $x$ and $y$ has to be at least $10 \cdot \epsilon$ and the difference between the fractional parts of $y$ and $z$ has to be at least $13 \cdot \epsilon$. It follows that the difference between the fractional parts of $x$ and $z$ has to be at least $23 \cdot \epsilon$. This fact has to be reflected in the value of $C_{xz}$, which has to be at least $C_{xy} + C_{yz}$. If the distance between the clocks $y$ and $z$ increases and the counter $C_{yz}$ is incremented (as described above) then we have to update the counter $C_{xz}$ so that it contains the value $\max\{ C_{xz}, C_{zy} + C_{yz} \}$. Symmetrically, the same holds for increments of the counter $C_{xy}$. In our model, this is ensured by maintaining the pre-order $\lesssim$ in such a way that $C_{xy} \lnsim C_{xz}$ and $C_{yz} \lnsim C_{xz}$ hold if and only if $xy$ and $yz$ are subintervals of $xz$ (formally, $D \models \overline{xyz}$). Then the automaton automatically sets $C_{xz}$ to $C_{xy} + C_{yz}$ if this value becomes greater than $C_{xz}$. We update the pre-order $\lesssim$ along the transitions by supplying additional information to reset operations. We track the "is subinterval of" relation for intervals between pairs of clocks, which is available directly from the region.


\revision{END OF MOTIVATION FOR COUNTERS AND PREORDER}

\revision{Restricting max operation}

In order to avoid summing up overlapping intervals 
we restrict the max operation of extended R-automata in the following way. The $P$ counters are updated by a sum of two other $P$ counters only if the other two counters do not have a lower bound in the $\lesssim$ ordering. Formally, Step~\ref{Sem:max} in the unparameterized semantics definition now reads:

\begin{enumerate}[$\bullet$]
\item[\ref{Sem:max}.] Repeat the following until a fixed point is reached:
    \begin{enumerate}[$-$]
        \item if $i \lnsim j$ then set $M_j' = \max\{M_j', M_i' + 1\}$ and
        \item if $k,l \lnsim j$ and $\nexists m. m \lesssim k \wedge m \lesssim l$ then set $P_j' = \max\{P_j', P_k' + P_l'\}$
    \end{enumerate}
\end{enumerate}

We need this restriction in order to count each subinterval only once. As an example, consider a region where $0 < a < b < c < d < 1$. Then, $C_{ac} \lnsim C_{ad}$ and $C_{bd} \lnsim C_{ad}$, and therefore according to the original definition, $C_{ad}$ has to be at least as big as $C_{ac} + C_{bd}$. This includes the difference between fractional parts of clocks $b$ and $c$ (represented by the counter $C_{bc}$) twice. This is not possible with the new definition, because $C_{bc} \lesssim C_{ac}$ and $C_{bc} \lesssim C_{bd}$.

Clearly, Lemma~\ref{Lemma:counter-sums} holds also for this restriction, because the $P$-values will be always smaller than or equal to the $P$-values calculated according to the original definition.

\revision{End of: Restricting max operation}

The rest of this section is organized as follows. First, we describe how to translate a timed automaton with at most one clock reset along each transition into an extended R-automaton. Then we show three technical properties of the constructed extended R-automaton (Lemma~\ref{Lemma:constr-preorder}, Lemma~\ref{Lemma:counter-transitivity}, and Lemma~\ref{Lemma:counter-max-one-increase}). In the rest of this section we prove the correspondence between the counter values along runs of the extended R-automaton and the minimal distances between the fractional parts of the clock values in the timed automaton (Lemma~\ref{Lemma:TA-RC-correspondence} and Lemma~\ref{Lemma:cost-is-real}).

\paragraph{\bf Construction.} Let $G$ be the region graph induced by a given timed automaton $A'$ with at most one reset in each transition. We build an extended R-automaton $R$ from this region graph $G$. The extended R-automaton $R$ has a state corresponding to each node in the region graph $G$ and two auxiliary states for each edge in the region graph $G$ corresponding to a discrete transition (an edge labeled by $\alpha\in\Sigma$). The initial state is the state corresponding to the node $\langle q_0, \{\nu_0\}\rangle$. Accepting states are the states corresponding to the nodes $\langle q, D \rangle$, where $q\in F$. We introduce two counters $C_{xy}, C_{yx}$ for each pair of clocks $x,y\in\clocks$ where $x$ is different from $y$. 
We use only the $P$ values from the extended R-automaton and in the following we will refer to them simply by $C_{xy}, C_{yx}$.

Since encoding of a single edge might need to perform multiple counter updates, we introduce a sequence of three transitions and two auxiliary states between them for each edge in $G$ corresponding to a discrete transition of the timed automaton $A'$. These transitions are labeled by the same letter as the original edge. More precisely, let us have an edge in $G$ from $\langle q', D' \rangle$ to $\langle q, D \rangle$ labeled by $\alpha$, where $\alpha\in \Sigma$. Then we create two auxiliary states $q_1,q_2$ (these states are unique for this transition, formally we should write $q_{1}^{(q,D,\alpha,q',D')}, q_{2}^{(q,D,\alpha,q',D')}$, but without confusion, we skip the superscript) and three transitions from $\langle q', D' \rangle$ to $q_1$, from $q_1$ to $q_2$, and from $q_2$ to $\langle q, D \rangle$, all of them labeled by $\alpha$.

For edges corresponding to a time pass transition in $A$ (edges labeled by $\delta$), we introduce only one transition labeled by $\delta$ directly leading to the state corresponding to the target node. More precisely, let us have an edge from $\langle q', D' \rangle$ to $\langle q, D \rangle$ labeled by $\delta$ in $G$. We create a transition in $R$ from $\langle q', D' \rangle$ to $\langle q, D \rangle$ labeled by $\delta$. Later on, we show how to get rid of these transitions (and of the letter $\delta$) while preserving the counter bounds. In fact, the standard construction for showing that regular languages are closed under projection works, because transitions labeled by $\delta$ do not affect the counter values.

Now we show how to label the transitions by effects. The transitions labeled by $\delta$ and the transitions corresponding to an edge in $G$ from $\langle q', D' \rangle$ to $\langle q, D \rangle$ labeled by $\alpha$, $\alpha\in \Sigma$, where either $D = D'$ (no clock is reset) or a clock $x$ is reset such that $\Dpmodels{x}{=}{0}$ (the clock had zero fractional part before reset) are labeled by the effect $(0, \dots, 0)$ (all counters are left unchanged).


In other cases, we have transitions corresponding to an edge in $G$ from $\langle q', D' \rangle$ to $\langle q, D \rangle$ labeled by $\alpha$ where a clock with non-zero fractional part is reset. Let us denote this clock by $x$. These transitions are labeled by effects created according to the following four cases. Counters which are not mentioned are left unchanged (the instruction is $0$ on all three transitions). The instructions are denoted by pairs $C : e_1, e_2, e_3$, where $C$ is the counter, to which the instructions are applied and $e_1, e_2, e_3$ are the instructions ($e_i$ is a part of the effect on the $i$-th transition).


\begin{enumerate}[(1)]
  \item \label{Constr:regA} The region $D'$ has a clock $a$ with zero fractional part (depicted in Figure~\ref{Fig:regA}).
    \begin{enumerate}[(a)]
        \item $C_{ax}, C_{xa} : r(\emptyset), 0, 0$
        \item $u \neq a$. $C_{ux} : *C_{ua}, 0, 0$ and $C_{xu} :             *C_{au}, 0, 0$.
    \end{enumerate}

  \item \label{Constr:regB} The region $D'$ has clocks $a, d$ such that the fractional part of $a$ is smaller than or equal to the fractional part of $x$ and  the fractional part of $d$ is greater than or equal to the fractional part of $x$ (depicted in Figure~\ref{Fig:regB}).
    \begin{enumerate}[(1)]
        \item $u \neq a$. $C_{xu} : *C_{au}, 1, 0$
        \item $u \neq d$. $C_{ux} : *C_{ud}, 1, 0$
        \item $C_{xa} : 0, r(\{C_{da}, C_{xu} | \forall u \neq a,x \}), 1$
        \item $C_{dx} : 0, r(\{C_{da}, C_{ux} | \forall u \neq d,x \}), 1$
    \end{enumerate}

  \item \label{Constr:regC} The clock $x$ has strictly smaller fractional part than other clocks in $D'$ (depicted in Figure~\ref{Fig:regC}). We denote a clock with the smallest fractional part greater than the fractional part of $x$ by $a$ and a clock with the greatest fractional part by $d$.
    \begin{enumerate}[(1)]
        \item $u \neq x$. $C_{xu} : 1, 0, 0$
        \item $u \neq d$. $C_{ux} : *C_{ud}, 1, 0$
        \item $C_{dx} : 0, r(\{C_{da}, C_{ux} | \forall u \neq d,x \}), 1$
    \end{enumerate}

  \item \label{Constr:regD} The clock $x$ has strictly greater fractional part than other clocks in $D'$ (depicted in Figure~\ref{Fig:regD}). We denote a clock with the greatest fractional part smaller than the fractional part of $x$ in $D'$ by $d$ and a clock with the smallest fractional part in $D'$ by $a$.
    \begin{enumerate}[(1)]
        \item $u \neq x$. $C_{ux} : 1, 0, 0$
        \item $u \neq a$. $C_{xu} : *C_{au}, 1, 0$
        \item $C_{xa} : 0, r(\{C_{da}, C_{xu} | \forall u \neq a,x \}), 1$

    \end{enumerate}
\end{enumerate}

  \begin{figure}[htbp]
    \centering
    \begin{tikzpicture}[>=stealth, scale=1.3]

      \path(-2,0) coordinate (0);
      \path(2,0) coordinate (1);
      \path(-2,.1) coordinate (0u);
      \path(-2,-.1) coordinate (0l);
      \path(2,.1) coordinate (1u);
      \path(2,-.1) coordinate (1l);
      \path(-1,.1) coordinate (bu);
      \path(-1,-.1) coordinate (bl);
      \path(-.2,.3) coordinate (xu);
      \path(-.4,0) coordinate (xl);
      \path(.3,.1) coordinate (cu);
      \path(.3,-.1) coordinate (cl);
      \path(1,.1) coordinate (du);
      \path(1,-.1) coordinate (dl);
      \draw(-2,-.3) node [name=zero]{$0$};
      \draw(2,-.3) node [name=one]{$1$};
      \draw(-2,.3) node [name=x]{$x,a$};
      \draw(-1,.3) node [name=b]{$b$};
      \draw(-.14,.5) node [name=b]{$x'$};
      \draw(.3,.3) node [name=c]{$c$};
      \draw(1,.3) node [name=d]{$d$};

      \draw (0)--(1);
      \draw (0u)--(0l);
      \draw (1u)--(1l);
      \draw (bu)--(bl);
      \draw[->,dashed] (xu)--(xl);
      \draw (cu)--(cl);
      \draw (du)--(dl);

    \end{tikzpicture}
    \caption{The region $D'$ has a clock $a$ with zero fractional part. The letter $x'$ denotes the position of the clock $x$ in $D'$ (before it was reset).}
    \label{Fig:regA}
  \end{figure}
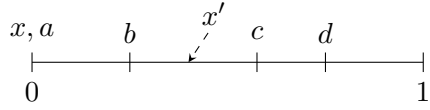

  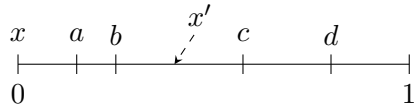
\begin{figure}[htbp]
    \centering
    \begin{tikzpicture}[>=stealth, scale=1.3]

      \path(-2,0) coordinate (0);
      \path(2,0) coordinate (1);
      \path(-2,.1) coordinate (0u);
      \path(-2,-.1) coordinate (0l);
      \path(2,.1) coordinate (1u);
      \path(2,-.1) coordinate (1l);
      \path(-1.4,.1) coordinate (au);
      \path(-1.4,-.1) coordinate (al);
      \path(-1,.1) coordinate (bu);
      \path(-1,-.1) coordinate (bl);
      \path(-.2,.3) coordinate (xu);
      \path(-.4,0) coordinate (xl);
      \path(.3,.1) coordinate (cu);
      \path(.3,-.1) coordinate (cl);
      \path(1.2,.1) coordinate (du);
      \path(1.2,-.1) coordinate (dl);
      \draw(-2,-.3) node [name=zero]{$0$};
      \draw(2,-.3) node [name=one]{$1$};
      \draw(-2,.3) node [name=x]{$x$};
      \draw(-1,.3) node [name=b]{$b$};
      \draw(-1.4,.3) node [name=a]{$a$};
      \draw(-.14,.5) node [name=b]{$x'$};
      \draw(.3,.3) node [name=c]{$c$};
      \draw(1.2,.3) node [name=d]{$d$};

      \draw (0)--(1);
      \draw (0u)--(0l);
      \draw (1u)--(1l);
      \draw (au)--(al);
      \draw (bu)--(bl);
      \draw[->,dashed] (xu)--(xl);
      \draw (cu)--(cl);
      \draw (du)--(dl);

    \end{tikzpicture}
    \caption{The region $D'$ has no clock with zero fractional part and the fractional part of $x$ is neither strictly smaller nor strictly greater than all other clocks. The letter $x'$ denotes the position of the clock $x$ in $D'$ (before it was reset).}
    \label{Fig:regB}
  \end{figure}

  \begin{figure}[htbp]
    \centering
    \begin{tikzpicture}[>=stealth, scale=1.3]

      \path(-2,0) coordinate (0);
      \path(2,0) coordinate (1);
      \path(-2,.1) coordinate (0u);
      \path(-2,-.1) coordinate (0l);
      \path(2,.1) coordinate (1u);
      \path(2,-.1) coordinate (1l);
      \path(-.8,.1) coordinate (bu);
      \path(-.8,-.1) coordinate (bl);
      \path(-1.5,.3) coordinate (xu);
      \path(-1.6,0) coordinate (xl);
      \path(.3,.1) coordinate (cu);
      \path(.3,-.1) coordinate (cl);
      \path(1.2,.1) coordinate (du);
      \path(1.2,-.1) coordinate (dl);
      \draw(-2,-.3) node [name=zero]{$0$};
      \draw(2,-.3) node [name=one]{$1$};
      \draw(-2,.3) node [name=x]{$x$};
      \draw(-.8,.3) node [name=b]{$a$};
      \draw(-1.44,.5) node [name=b]{$x'$};
      \draw(.3,.3) node [name=c]{$b$};
      \draw(1.2,.3) node [name=d]{$d$};

      \draw (0)--(1);
      \draw (0u)--(0l);
      \draw (1u)--(1l);
      \draw (bu)--(bl);
      \draw[->,dashed] (xu)--(xl);
      \draw (cu)--(cl);
      \draw (du)--(dl);

    \end{tikzpicture}
    \caption{The clock $x$ has the smallest (strictly) fractional part in the region $D'$. The letter $x'$ denotes the position of the clock $x$ in $D'$ (before it was reset).}
    \label{Fig:regC}
  \end{figure}

  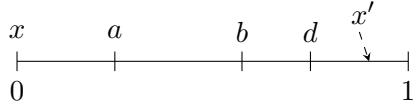
\begin{figure}[htbp]
    \centering
    \begin{tikzpicture}[>=stealth, scale=1.3]

      \path(-2,0) coordinate (0);
      \path(2,0) coordinate (1);
      \path(-2,.1) coordinate (0u);
      \path(-2,-.1) coordinate (0l);
      \path(2,.1) coordinate (1u);
      \path(2,-.1) coordinate (1l);
      \path(-1,.1) coordinate (bu);
      \path(-1,-.1) coordinate (bl);
      \path(1.5,.3) coordinate (xu);
      \path(1.6,0) coordinate (xl);
      \path(.3,.1) coordinate (cu);
      \path(.3,-.1) coordinate (cl);
      \path(1,.1) coordinate (du);
      \path(1,-.1) coordinate (dl);
      \draw(-2,-.3) node [name=zero]{$0$};
      \draw(2,-.3) node [name=one]{$1$};
      \draw(-2,.3) node [name=x]{$x$};
      \draw(-1,.3) node [name=b]{$a$};
      \draw(1.54,.5) node [name=b]{$x'$};
      \draw(.3,.3) node [name=c]{$b$};
      \draw(1,.3) node [name=d]{$d$};

      \draw (0)--(1);
      \draw (0u)--(0l);
      \draw (1u)--(1l);
      \draw (bu)--(bl);
      \draw[->,dashed] (xu)--(xl);
      \draw (cu)--(cl);
      \draw (du)--(dl);

    \end{tikzpicture}
    \caption{The clock $x$ has the greatest (strictly) fractional part in the region $D'$. The letter $x'$ denotes the position of the clock $x$ in $D'$ (before it was reset).}
    \label{Fig:regD}
  \end{figure}


\noindent Let us by a \emph{complete} transition denote a transition of $R$ which simulates a time pass transition or a sequence of three transitions of $R$ which simulate a discrete transition. We call the states of $R$ which are not auxiliary, i.e., the states reached by complete transitions, \emph{complete} states. Figure~\ref{Fig:construction-example} shows the result of this construction applied to the timed automaton from Figure~\ref{Fig:no-sampling}.

An informal alternative description of the updates by effects is that a counter is incremented if the distance between the two corresponding clocks grows and a counter is reset to $1$ if the distance between the two corresponding clocks decreases and then the counters are updated to satisfy $D \models \overline{xyz} \Rightarrow C_{xy} + C_{yz} \leq C_{xz}$) by the $\max$ operations. We take the liberty to apply the $\max$ operations only at the end of each complete transition. This does not affect validity of Lemma~\ref{Lemma:counter-sums}, is sufficient for correctness of our construction, and it will simplify the proofs.

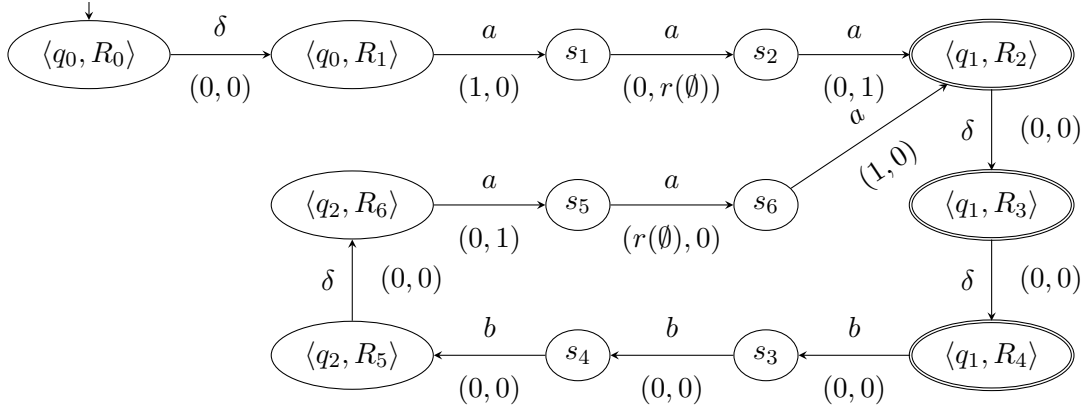
\begin{figure}[htbp]
\centering
\begin{tikzpicture}[>=stealth,every node/.style={ellipse}]

\draw(0,0) node [ellipse,draw,name=s0]{$\langle q_0, R_0 \rangle$};
\draw[xshift= 3.5cm](0,0) node [draw,name=s1]{$\langle q_0, R_1 \rangle$};
\draw[xshift= 6.5cm](0,0) node [draw,name=s2]{$s_1$};
\draw[xshift= 9cm](0,0) node [draw,name=s3]{$s_2$};
\draw[xshift= 12cm,yshift= 0cm](0,0) node [double,draw,name=s4]{$\langle q_1, R_2 \rangle$};
\draw[xshift= 12cm,yshift= -2cm](0,0) node [double,draw,name=s5]{$\langle q_1, R_3 \rangle$};
\draw[xshift= 12cm,yshift= -4cm](0,0) node [double,draw,name=s6]{$\langle q_1, R_4 \rangle$};
\draw[xshift= 9cm,yshift= -4cm](0,0) node [draw,name=s7]{$s_3$};
\draw[xshift= 6.5cm,yshift= -4cm](0,0) node [draw,name=s8]{$s_4$};
\draw[xshift= 3.5cm,yshift= -4cm](0,0) node [draw,name=s9]{$\langle q_2, R_5 \rangle$};
\draw[xshift= 3.5cm,yshift= -2cm](0,0) node [draw,name=s10]{$\langle q_2, R_6 \rangle$};
\draw[xshift= 6.5cm,yshift= -2cm](0,0) node [draw,name=s11]{$s_5$};
\draw[xshift= 9cm,yshift= -2cm](0,0) node [draw,name=s12]{$s_6$};

\draw[->] (0,0.7cm) -- (s0);

\draw[->] (s0) -- (s1) node[midway,above]{$\delta$}
                       node[midway,below]{$(0,0)$};
\draw[->] (s1) -- (s2) node[midway,above]{$a$}
                       node[midway,below]{$(1,0)$};
\draw[->] (s2) -- (s3) node[midway,above]{$a$}
                       node[midway,below]{$(0, r(\emptyset))$};
\draw[->] (s3) -- (s4) node[midway,above]{$a$}
                       node[midway,below]{$(0,1)$};
\draw[->] (s4) -- (s5) node[midway,left]{$\delta$}
                       node[midway,right]{$(0,0)$};
\draw[->] (s5) -- (s6) node[midway,left]{$\delta$}
                       node[midway,right]{$(0,0)$};
\draw[->] (s6) -- (s7) node[midway,above]{$b$}
                       node[midway,below]{$(0,0)$};
\draw[->] (s7) -- (s8) node[midway,above]{$b$}
                       node[midway,below]{$(0,0)$};
\draw[->] (s8) -- (s9) node[midway,above]{$b$}
                       node[midway,below]{$(0,0)$};
\draw[->] (s9) -- (s10) node[midway,left]{$\delta$}
                       node[midway,right]{$(0,0)$};
\draw[->] (s10) -- (s11) node[midway,above]{$a$}
                       node[midway,below]{$(0,1)$};
\draw[->] (s11) -- (s12) node[midway,above]{$a$}
                       node[midway,below]{$(r(\emptyset), 0)$};
\draw[->] (s12) -- (s4) node[midway,sloped,above]{$a$}
                       node[midway,sloped,below]{$(1,0)$};

\end{tikzpicture}

\caption{The extended R-automaton constructed for the timed automaton from Figure~\ref{Fig:no-sampling}. It has two counters $C_{xy}$ and $C_{yx}$ which are updated by effects in this order, i.e., an effect $(1,0)$ increments the counter $C_{xy}$. Complete states are labeled by a location and a region, whereas auxiliary states are labeled by $s_1, s_2, \dots$ Regions are characterized by the following constraints: $R_0: 0=x=y, R_1: 0<x=y<1, R_2: 0=x<y<1, R_3: 0<x<y<1, R_4: 0<x<y=1, R_5: 0=y<x<1, R_6: 0<y<x<1$. The automaton is not bounded, because $C_{yx}$ is incremented and never reset in the loop.}
\label{Fig:construction-example}
\end{figure}

\noindent Since we use the $\max$ operation, we need to take care of the preorder $\lesssim$. In order to do this, we need all the copy instructions in Items~\ref{Constr:regB} -- \ref{Constr:regD} and resets. Copying already assigns the desired value to the counter, which speeds up the applications of the $\max$ operation (as shown in Lemma~\ref{Lemma:counter-max-one-increase} below).

The important property of $\lesssim$ is formalized in the following lemma. The proof is rather technical and analyzes the items in the construction and the semantics of extended R-automata.

\begin{lemma}
\label{Lemma:constr-preorder}
  For all reachable complete states $\langle \langle q, D \rangle, \bar{C}, \lesssim \rangle$ of $R$, the following holds:
  \begin{enumerate}[\em(i)]
    \item \label{Lemma:constr-preorder-lnsim} $C_{bc} \lnsim C_{ad}$ if and only if for all $\nu \in D$, $\numodels{bc} < \numodels{ad}$, and

    \item \label{Lemma:constr-preorder-simeq} $C_{ab} \simeq C_{cd}$ if and only if for all $\nu \in D$, $\numodels{ab} = \numodels{cd} > 0$.
  \end{enumerate}
\end{lemma}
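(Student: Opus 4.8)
The plan is to prove (i) and (ii) simultaneously by induction on the number of complete transitions of a run leading to the complete state $\langle\langle q,D\rangle,\bar C,\lesssim\rangle$. The base case is the initial configuration: $D=\{\nu_0\}$, every fractional part is $0$, so $\numodels{xy}=0$ for all $x,y$, while $\lesssim$ is empty; hence both sides of (i) and (ii) are vacuously false and the equivalences hold. Before the inductive step I would record a convenient reformulation of the right-hand sides. Arrange the distinct fractional values of the clocks of a region $D$ in cyclic order and call the gaps between consecutive values the \emph{elementary arcs}; then $\numodels{xy}$ is the sum of the lengths of the arcs on the clockwise path from $x$ to $y$, and since inside a region these lengths range over all positive reals compatible with the cyclic order, one checks that ``$\numodels{bc}<\numodels{ad}$ for all $\nu\in D$'' holds exactly when the arc-path $b\!\to\!c$ is a \emph{proper} subset of the arc-path $a\!\to\!d$, and ``$\numodels{ab}=\numodels{cd}>0$ for all $\nu\in D$'' holds exactly when these two arc-paths are equal and nonempty. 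So it suffices to show that in every reachable complete state $\lesssim$ tracks, respectively, the proper-inclusion and the equality order of arc-paths between clock pairs.

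For a complete transition labelled $\delta$ the effect is $(0,\dots,0)$, so $\lesssim$ is unchanged, while $D$ moves to its immediate time successor. A time step rotates all fractional parts uniformly, at most sliding one clock across the $0/1$ seam; this neither creates nor destroys elementary arcs, and the arc-path between any two clocks stays the same sequence of arcs (an arc merely starts, or stops, containing the seam). Hence the proper-inclusion and equality orders on arc-paths are unchanged and the invariant carries over unchanged.

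The discrete case is the heart of the proof and the expected main obstacle. If the simulated edge resets no clock, or resets a clock $y$ with $\Dpmodels{y}{=}{0}$, the effect is $(0,\dots,0)$ and $D$ differs from $D'$ only by identifications already present, so the claim is immediate from the induction hypothesis together with the reformulation. Otherwise a clock $x$ with $\Dpmodels{x}{\neq}{0}$ is reset, which in the target region $D$ puts $x$ at $0$ and leaves all other fractional parts fixed. For any pair $u,v$ of clocks both different from $x$, neither the relevant arc-paths nor the $\lesssim$-relations among the associated counters are affected by the effects of Items~\ref{Constr:regA}--\ref{Constr:regD} (these only reset, copy into, or increment counters indexed by $x$), so the equivalences for such pairs are inherited. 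It remains to handle, for each of the four construction cases and each clock $u$, the relations between $C_{xu},C_{ux}$ and an arbitrary counter, by unwinding the three-step definition of the updated preorder — breaking of equalities by an increment (Step~\ref{Pre:plus}), insertion below the nonzero counters named in a reset (Step~\ref{Pre:reset}), equalisation by a copy (Step~\ref{Pre:copy}), then transitive–reflexive closure — across the three consecutive transitions produced by the construction. The real work is to check that the reset sets $r(\{\dots\})$ and the copy targets chosen in Items~\ref{Constr:regA}--\ref{Constr:regD} produce exactly the arc-path inclusions holding in $D$: reading off Figures~\ref{Fig:regA}--\ref{Fig:regD}, after $x$ is placed at $0$ one has $C_{xu}\lnsim C_{xv}$ iff $\Dmodels{u}{<}{v}$, $C_{vx}\lnsim C_{ux}$ iff $\Dmodels{u}{<}{v}$, $C_{xu}\simeq C_{xv}$ iff $\Dmodels{u}{=}{v}$ with $u,v\not\equiv 0$, and the cross-relations with a counter $C_{bc}$ ($b,c\neq x$) hold iff $bc$ is a proper sub-arc of the arc from $0$ to $u$, resp. from $u$ to $0$. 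The two delicate points, which determine the precise shape of the reset sets, are that the increment on the third transition must not destroy a $\lnsim$ created by the reset on the second transition (checked directly from Step~\ref{Pre:plus}, using that $C_{xu}\lnsim C_j$ already rules out $C_j\lesssim C_{xu}$), and that the transitive–reflexive closure must not introduce any spurious relation — which is exactly where the presence of $C_{da}$ and of the $C_{xu}$ (resp. $C_{ux}$) in the reset sets, and the copies in Items~\ref{Constr:regB}--\ref{Constr:regD}, are used.
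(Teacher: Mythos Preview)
Your proposal is correct and follows essentially the same route as the paper: induction on the length of the run, with the inductive step split according to the construction cases (Items~\ref{Constr:regA}--\ref{Constr:regD}) and a separate, trivial treatment of $\delta$-transitions and effect-free discrete steps. Your arc-path reformulation of the right-hand sides is exactly the paper's observation that ``$\numodels{bc}<\numodels{ad}$ for all $\nu\in D$'' is equivalent to $(D\models\overline{abc}\wedge D\models\overline{cda})\vee(\Dmodels{a}{=}{b}\wedge D\models\overline{acd})\vee(\Dmodels{c}{=}{d}\wedge D\models\overline{abc})$ and that equality of the distances is $\Dmodels{a}{=}{c}\wedge\Dmodels{b}{=}{d}$, only phrased geometrically; the case analysis you outline for the $x$-indexed counters, including the two ``delicate points'' about the order of reset and increment across the three auxiliary transitions and about the transitive closure not creating spurious relations, matches precisely what the paper checks.
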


\proof{
  We show by induction on the length of a shortest path reaching $\langle \langle q, D \rangle$, $\bar{C}$, $\lesssim \rangle$ that the claim holds. The basic step is trivial. For the induction step, observe that the claim that for all $\nu \in D$, $\numodels{bc} < \numodels{ad}$ is equivalent to $(D \models \overline{abc} \wedge D \models \overline{cda}) \vee (\Dmodels{a}{=}{b} \wedge D \models \overline{acd}) \vee (\Dmodels{c}{=}{d} \wedge D \models \overline{abc})$ and the claim that for all $\nu \in D$. $\numodels{ab} = \numodels{cd} > 0$ is equivalent to $\Dmodels{a}{=}{c} \wedge \Dmodels{b}{=}{d}$.

  Point~(\ref{Lemma:constr-preorder-lnsim}), "$\Rightarrow$": Correctness of all inequalities introduced by Item~\ref{Constr:regA} of the construction follows from IH.

  Item~\ref{Constr:regB} of the construction introduces inequalities $C_{xu} \lnsim C_{au}$ and $C_{ux} \lnsim C_{ud}$, because of the copy instruction (Point~\ref{Pre:copy} in the semantics introduces equality) and then $C_{xu}, C_{ux}$ are incremented by the instruction $1$, which breaks the equality into inequality (Point~\ref{Pre:plus} in the semantics). But it is clear from the analysis of the region $D'$ and the observations above that the claim is satisfied. Item~\ref{Constr:regB} also introduces inequalities by resets. The reset instructions are delayed by one transition (they take place on the second transition in the sequence) and therefore the inequalities $C_{xu} \lnsim C_{au}$ and $C_{ux} \lnsim C_{ud}$ are already established. This prevents the inequalities $C_{xa} \lnsim C_{au}$, $C_{dx} \lnsim C_{ud}$ to appear in the preorder. It is easy to verify from the region that the remaining inequalities which are established satisfy the claim. It follows from IH that the inequalities introduced by the transitive closure satisfy the claim.


  Item~\ref{Constr:regC} does not introduce any new inequalities for $C_{xu}$, because there is no other counter $C_{ab}$ such that $C_{ab} \simeq C_{xu}$ (IH, Point~(\ref{Lemma:constr-preorder-simeq})). The argument for the inequalities created by copying and resets is the same as for the previous item.

  Item~\ref{Constr:regD} is dual to the previous item. 

  Point~(\ref{Lemma:constr-preorder-lnsim}), "$\Leftarrow$": The fact that all required inequalities are created by Item~\ref{Constr:regA} follows from IH.

  Items~\ref{Constr:regB} -- \ref{Constr:regD} have to create new inequalities for counters containing the clock $x$ (we can find all of them by inspecting the regions). The copy instructions put $C_{xu} \simeq C_{au}$ and $C_{ux} \simeq C_{ud}$ (Point~\ref{Pre:copy} in the semantics). The counters $C_{xu}, C_{ux}$ are then incremented by the instruction $1$, while the counters $C_{au}, C_{ud}$ stay unchanged (instruction $0$). This results in the inequalities $C_{xu} \lnsim C_{au}$ and $C_{ux} \lnsim C_{ud}$. The clocks $C_{xa}, C_{dx}$ are reset by an instruction which contains all the important clocks. This (as defined in Point~\ref{Pre:reset} of the semantics, together with the transitive closure) creates all the necessary inequalities.

  Point~(\ref{Lemma:constr-preorder-simeq}), "$\Rightarrow$": Item~\ref{Constr:regA} creates equalities by the copy instruction (Point~\ref{Pre:copy}) and the transitive closure, but the correctness follows immediately from the fact that $\Dmodels{a}{=}{x}$ and from IH (for the transitive closure).

  Items~\ref{Constr:regB} -- \ref{Constr:regD} introduce equalities by the copy instructions and the transitive closure, but because the clocks $C_{xu}, C_{ux}$ are incremented by $1$ and the clocks $C_{au}, C_{ud}$ are left unchanged, the equalities introduced by the copy instructions are broken. The equalities introduced by the transitive closure satisfy the claim (IH).

  Point~(\ref{Lemma:constr-preorder-simeq}), "$\Leftarrow$": New equalities required by the region in Item~\ref{Constr:regA} are created by the copy instructions and the transitive closure. The other required equalities follow from IH. Note that $\numodels{ax} = \numodels{xa} = 0$ for all $\nu \in D$ and therefore the equality $C_{ax} \simeq C_{xa}$ is not required.

  Items~\ref{Constr:regB} -- \ref{Constr:regD} do not move any two clocks together and therefore the claim holds from IH.\qed

}

The following two lemmas formulate the essential properties of the construction we need for the proof of the correctness of the reduction. Because of these lemmas, we do not have to refer to $\lesssim$ and $\max$ operations anymore.

\begin{lemma}
\label{Lemma:counter-transitivity}
  For all reachable complete states $\langle \langle q, D \rangle, \bar{C}, \lesssim \rangle$ of $R$, the following holds:
  \begin{enumerate}[\em(i)]
    \item \label{Lemma:counter-transitivity:i}
        if $D \models \overline{xyz}$ then $C_{xy} + C_{yz} \leq C_{xz}$,
    \item \label{Lemma:counter-transitivity:coincidence}
        if $\Dmodels{x}{=}{y}$ then $C_{xy} = 0$ and $C_{xu} = C_{yu}$, $C_{ux} = C_{uy}$ for all clocks $u$.
    \item \label{Lemma:counter-transitivity:min-distance}
        if $\Dmodels{x}{\neq}{y}$ then $C_{xy} \geq 1, C_{yx} \geq 1$.
  \end{enumerate}
\end{lemma}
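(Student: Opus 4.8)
The plan is to prove claims~(\ref{Lemma:counter-transitivity:i})--(\ref{Lemma:counter-transitivity:min-distance}) simultaneously, as one invariant, by induction on the length of a shortest path reaching the complete state $\langle\langle q,D\rangle,\bar C,\lesssim\rangle$, in the same style as the proof of Lemma~\ref{Lemma:constr-preorder}. It is convenient to first replace~(\ref{Lemma:counter-transitivity:coincidence}) and~(\ref{Lemma:counter-transitivity:min-distance}) by the single equivalence ``$C_{xy}\geq 1$ iff $\Dmodels{x}{\neq}{y}$'' (equivalently, $C_{xy}=0$ iff $\Dmodels{x}{=}{y}$), keeping apart the statement that $\Dmodels{x}{=}{y}$ implies $C_{xu}=C_{yu}$ and $C_{ux}=C_{uy}$ for all clocks $u$. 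The base case is immediate, since in the initial node all counters are $0$ and all fractional parts are $0$.

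For the induction step I would split on the shape of the complete transition into the current state from a complete predecessor $\langle\langle q',D'\rangle,\bar C',\lesssim'\rangle$, which satisfies the invariant by the induction hypothesis. If it is a $\delta$-transition, then $D$ is the immediate time successor of $D'$, the effect is $(0,\dots,0)$, and neither the counters nor $\lesssim$ change; the claims transfer because time passing preserves equality and inequality of fractional parts and because any time successor of $D$ is a time successor of $D'$, so $D\models\overline{xyz}$ implies $D'\models\overline{xyz}$. A discrete transition with effect $(0,\dots,0)$ (no reset, or the reset of a clock already at fractional part $0$) is handled the same way, only more easily: the cyclic order of the fractional parts and the counters are both unchanged. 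The substantial case is a discrete transition resetting a clock $w$ whose fractional part is nonzero in $D'$; then in $D$ the clock $w$ has fractional part $0$ and every other clock keeps its position. Here I would run through the four construction cases~\ref{Constr:regA}--\ref{Constr:regD} and compute the net effect of the three component transitions on every counter. The point is that for $u$ not at position $0$ the counter $C_{wu}$ is either a copy of some $C'_{au}$ (hence $\geq 1$ by the induction hypothesis, as $a$ and $u$ are at different positions) possibly followed by an increment, or it is reset and then incremented, so in any case $C_{wu}\geq 1$, and symmetrically for $C_{uw}$; in case~\ref{Constr:regA} a counter $C_{wu}$ with $u$ at position $0$ is set to $C'_{au}=0$, or reset without increment, hence remains $0$. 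Counters not mentioning $w$ are untouched by the effects, and since the $\max$ operation (Step~\ref{Sem:max}) only increases counters and, by Lemma~\ref{Lemma:constr-preorder}(\ref{Lemma:constr-preorder-lnsim}), cannot increase a counter $C_{pq}$ with $\Dmodels{p}{=}{q}$ (that would require $C_{ab}\lnsim C_{pq}$ for some $a,b$, i.e.\ $\numodels{ab}<\numodels{pq}=0$, impossible), the equivalence ``$C_{xy}\geq 1$ iff $\Dmodels{x}{\neq}{y}$'' holds at the new state.

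Granting this, the remaining part of~(\ref{Lemma:counter-transitivity:coincidence}) needs no further induction: if $\Dmodels{x}{=}{y}$ and $u$ is at a different position, then $\numodels{xu}=\numodels{yu}>0$ for every $\nu\in D$, so $C_{xu}\simeq C_{yu}$ by Lemma~\ref{Lemma:constr-preorder}(\ref{Lemma:constr-preorder-simeq}); since $\lesssim$ relates counters only with ordered $P$-values and $\simeq$ forces these equal, $C_{xu}=C_{yu}$, while if $u$ is at the same position as $x,y$ then $C_{xu}=C_{yu}=0$ by the equivalence above, and $C_{ux}=C_{uy}$ is symmetric. For~(\ref{Lemma:counter-transitivity:i}), assume $D\models\overline{xyz}$; then $x,y,z$ occupy three distinct positions in the cyclic order determined by $D$, so $\numodels{xz}=\numodels{xy}+\numodels{yz}$ and hence $\numodels{xy}<\numodels{xz}$ and $\numodels{yz}<\numodels{xz}$ for every $\nu\in D$, which by Lemma~\ref{Lemma:constr-preorder}(\ref{Lemma:constr-preorder-lnsim}) gives $C_{xy}\lnsim C_{xz}$ and $C_{yz}\lnsim C_{xz}$. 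Moreover no counter $C_{pq}$ has $C_{pq}\lesssim C_{xy}$ and $C_{pq}\lesssim C_{yz}$: by Lemma~\ref{Lemma:constr-preorder} such a counter would satisfy $\numodels{pq}\leq\numodels{xy}$ and $\numodels{pq}\leq\numodels{yz}$ for all $\nu\in D$, and since the gaps between consecutive clock positions in a region may be chosen as arbitrary positive reals summing to $1$, this forces the cyclic interval from $p$ to $q$ to be contained in both the cyclic interval from $x$ to $y$ and that from $y$ to $z$, which intersect only in the point $y$, so $p=q$, a contradiction. Therefore the restricted $\max$ of Step~\ref{Sem:max}, run to a fixed point at the end of the complete transition, applies to the triple $(C_{xy},C_{yz},C_{xz})$ and forces $C_{xz}\geq C_{xy}+C_{yz}$, which is~(\ref{Lemma:counter-transitivity:i}).

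The main obstacle I anticipate is the bookkeeping in the substantial case above: one must carefully determine the net effect of the three component transitions on every counter in all four construction cases~\ref{Constr:regA}--\ref{Constr:regD}, including degenerate configurations (for instance when the auxiliary clocks $a$ and $d$ coincide), and one must make sure that the copy instructions and the $\max$ operation, which reference counters through the preorder $\lesssim$, do not disturb the invariant; this is precisely the reason why Lemma~\ref{Lemma:constr-preorder} is invoked throughout and why cases~\ref{Constr:regB}--\ref{Constr:regD} use copying rather than a bare reset.
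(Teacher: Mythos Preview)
Your approach is essentially the same as the paper's. The paper argues the three items separately rather than as a single joint induction: item~(\ref{Lemma:counter-transitivity:i}) is dispatched in one line by invoking Lemma~\ref{Lemma:constr-preorder} together with the $\max$ fixed point of Step~\ref{Sem:max}; the second half of item~(\ref{Lemma:counter-transitivity:coincidence}) is obtained exactly as you do, via Lemma~\ref{Lemma:constr-preorder}(\ref{Lemma:constr-preorder-simeq}) and the observation that $\simeq$-equivalent counters carry equal $P$-values; and the first half of~(\ref{Lemma:counter-transitivity:coincidence}) together with item~(\ref{Lemma:counter-transitivity:min-distance}) is handled by the same short inductive case split over Items~\ref{Constr:regA}--\ref{Constr:regD} that you sketch. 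Your added verification that the restricted $\max$ actually applies to the pair $C_{xy},C_{yz}$ (no common lower bound in $\lesssim$) is something the paper leaves implicit; one small point to watch there is that your argument concludes ``$p=q$'', but what you really obtain is that $p$ and $q$ occupy the same position in $D$, and you then need Lemma~\ref{Lemma:constr-preorder} (or the fact that such counters were reset with $r(\emptyset)$) to rule out $C_{pq}\lesssim C_{xy}$ in that degenerate case.
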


\proof{
  Point~(\ref{Lemma:counter-transitivity:i}) follows directly from Lemma~\ref{Lemma:constr-preorder} and Step~\ref{Sem:max} in the definition of the semantics of extended R-automata.

  The first part of Point~(\ref{Lemma:counter-transitivity:coincidence}) follows from Item~\ref{Constr:regA} in the construction of $R$ and the fact that this counter can be changed only along a transition which leads to a state $\langle \langle q', D' \rangle, \bar{C}', \lesssim' \rangle$, where $\Dpmodels{x}{\neq}{y}$ (follows straightforwardly from the construction). The second part follows from Lemma~\ref{Lemma:constr-preorder} and an observation that counters equivalent with respect to $\lesssim$ contain the same values.

  Point~(\ref{Lemma:counter-transitivity:min-distance}) follows from a simple inductive
  argument. If $\Dmodels{x}{\neq}{y}$ holds and it did not hold in the previous state then $C_{xy}, C_{yx}$ is either updated by a copy from a counter with value greater than or equal to $1$ (Item~\ref{Constr:regA}) or by a copy or reset followed by an increment (Item~\ref{Constr:regB}). Especially, Items~\ref{Constr:regC} and~\ref{Constr:regD} cannot be applied. If $\Dmodels{x}{\neq}{y}$ holds and it held also in the previous state then $C_{xy}, C_{yx}$ is either incremented (Items~\ref{Constr:regC} and~\ref{Constr:regD}) or updated by a copy from a counter with value greater than or equal to $1$ (Item~\ref{Constr:regA}) or by a copy or reset followed by an increment (Items~\ref{Constr:regB},~\ref{Constr:regC}, and~\ref{Constr:regD}).\qed

}

The property formalized in Point~(\ref{Lemma:counter-transitivity:i}) of the previous lemma is the reason for extending the R-automata with the $\max$ operations. The preorder $\lesssim$ is a technical construction thanks to which we are able to reduce limitedness for R-automata with $\max$ operations to limitedness of R-automata.

\revision{Upper bound on max operations}

Lemma~\ref{Lemma:counter-transitivity} shows that $\max$ operations ensure a lower bound on counters. The following lemma shows that applications of the $\max$ operation do not increase the counters too much. In fact, it says that $\max$ operations can increase a counter at most by $1$ in each complete step and this only if it has not been affected by other operations. 

\begin{lemma}
\label{Lemma:counter-max-one-increase}

  Let $\langle \langle q', D' \rangle, \bar{C}', \lesssim' \rangle$ and $\langle \langle q, D \rangle, \bar{C}, \lesssim \rangle$ be two consecutive complete states in a run of $R$. Only counters $C_{uv}$ such that $\Dmodels{v}{<}{u}$ and $\Dmodels{v}{\neq}{0}$ 
  can be affected by $\max$ operation.\note{Maybe we should have the counters which can cause max here?} 
  Moreover, if $C_{uv} \neq C'_{uv}$ then $C_{uv} = C_{ux} + C_{xv} = C'_{uv} + 1$.
\end{lemma}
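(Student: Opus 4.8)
The plan is to reduce to the only interesting kind of complete transition and then do a case analysis over the four construction items. First I would dispose of the easy transitions: if the complete transition is a $\delta$-transition, or a discrete transition resetting no clock, or a clock with zero fractional part, then its effect is $(0,\dots,0)$; by Step~\ref{Sem:preorder} a zero effect leaves the preorder unchanged, nothing in $\bar C$ changes either, so the configuration is still a fixed point of Step~\ref{Sem:max} and the statement holds vacuously. So assume a clock $x$ with $\Dpmodels{x}{\neq}{0}$ is reset; then $\Dmodels{x}{=}{0}$, i.e.\ $x$ sits at position $0$ in $D$. Recall that the $\max$ step is applied once, at the end of the complete transition, to the values $\bar C''$ reached after the one-or-three direct effects, and that the direct effects do not change the preorder's $\max$-relevant structure beyond what Step~\ref{Sem:preorder} prescribes.

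Next I would record what the direct effects do: inspecting Items~\ref{Constr:regA}--\ref{Constr:regD}, the only counters whose instruction is not $0$ on all three sub-transitions are the $C_{xu},C_{ux}$ with $u\neq x$, and each of these is a copy of a counter incident to the minimal- or maximal-fractional-part clock $x$ was placed next to (or to $x$ itself), or a reset, followed by at most one increment --- the $+1$ accounting for the tiny arc between $x$'s old and new position. In particular $C''_{uv}=C'_{uv}$ for $u,v\neq x$. Now the geometric core: look at the first counter $C_{uv}$ that $\max$ strictly increases, say to $C''_k+C''_l$ for an eligible pair $C_k,C_l\lnsim C_{uv}$ with no common $\lesssim$-lower bound. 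By Lemma~\ref{Lemma:constr-preorder} the arcs of $k,l$ are forced proper sub-arcs of the arc from $u$ to $v$ in $D$, are $\lesssim$-incomparable and have no common sub-arc; an overlapping pair is then not eligible (as in the example after the restricted-$\max$ definition), and a pair that leaves a gap satisfies $C''_k+C''_l<C''_{uv}$ by chaining Lemma~\ref{Lemma:counter-transitivity}(i) over $D'$ and using Lemma~\ref{Lemma:counter-transitivity}(iii) on the gap, so it never triggers; hence $C_k=C_{uw}$, $C_l=C_{wv}$ with $D\models\overline{uwv}$, an exact single-point split. For this to raise $C_{uv}$ above $C''_{uv}=C'_{uv}$, one of $C_{uw},C_{wv}$ must already differ from $C'$, hence be incident to $x$, hence $w=x$ --- the options $u=x$ or $v=x$ are impossible because, from the copy-then-increment shape together with Lemma~\ref{Lemma:counter-transitivity}(i) applied in $D'$, no counter of the form $C_{xv}$ or $C_{ux}$ is ever increased by $\max$. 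Since $x$ is at $0$, $D\models\overline{uxv}$ forces the arc from $u$ to $v$ to wrap through $0$, i.e.\ $\Dmodels{v}{<}{u}$, and $v$ to be met strictly after $x$, i.e.\ $\Dmodels{v}{\neq}{0}$; and every later-increased counter inherits this, since any triggering pair must reuse an already-increased (hence $0$-straddling) arc. This gives the first sentence of the lemma.

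For the ``moreover'', fix a $C_{uv}$ that $\max$ actually increases; by the above it is increased via the pair $(C_{ux},C_{xv})$, and these two counters are not themselves touched by $\max$, so in the final state $C_{uv}=C_{ux}+C_{xv}$. Since $u,v\neq x$ and $\Dmodels{v}{<}{u}$, the arc from $u$ to $v$ already wrapped through $0$ in $D'$, and therefore contained $x$'s old position together with the extreme clocks $a,d$ (minimal, resp.\ maximal, fractional part) that $x$ was placed next to; for Item~\ref{Constr:regB} the construction yields $C''_{ux}=C'_{ud}+1$ and $C''_{xv}=C'_{av}+1$, and $D'\models\overline{udv}$, $D'\models\overline{dav}$ together with Lemma~\ref{Lemma:counter-transitivity}(i),(iii) give $C'_{ud}+C'_{av}+1\le C'_{ud}+C'_{da}+C'_{av}\le C'_{uv}$. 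Hence $C_{ux}+C_{xv}\le C'_{uv}+1$; since $\max$ raised $C_{uv}$ strictly above the integer $C'_{uv}$, equality $C_{ux}+C_{xv}=C'_{uv}+1$ follows. Items~\ref{Constr:regA},~\ref{Constr:regC},~\ref{Constr:regD} and the degenerate sub-cases ($u=d$, $v=a$, coinciding clocks) run through the same computation with $a$ and/or $d$ replaced by $x$.

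The hard part, and the only place that needs real work, is the geometric step together with the auxiliary claim that $C_{xv},C_{ux}$ are never increased by $\max$: one must check uniformly across all four items and their degenerate sub-cases that the copy-then-increment shape of the counters incident to $x$ exactly reproduces every pre-reset instance of the triangle inequality of Lemma~\ref{Lemma:counter-transitivity}(i), so that the $\max$ fixed point imposes nothing new beyond $C_{uv}\ge C_{ux}+C_{xv}$ and lifts each affected counter by at most $1$. Everything else is bookkeeping.
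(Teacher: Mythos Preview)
Your approach is essentially the paper's, just reorganised: both arguments hinge on the auxiliary claim that the $x$-incident counters $C_{xu},C_{ux}$ are fixed under $\max$, and then compute the only non-trivial update as $C_{uv}=C_{ux}+C_{xv}=C'_{uv}+1$. You correctly identify this as the crux in your final paragraph, and your ``moreover'' computation matches the paper's.

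There are, however, two places where your write-up uses the auxiliary claim before it is available, and the paper's ordering is cleaner. First, your ``exact split'' step argues that a pair $(k,l)$ leaving a gap cannot trigger because $C''_k+C''_l<C''_{uv}$, citing Lemma~\ref{Lemma:counter-transitivity}(i) over $D'$. That inequality lives among the $C'$ values; if one of $k,l$ is $x$-incident its $C''$ value is a copy-plus-one of a \emph{different} $C'$ counter, and the chained triangle inequality in $D'$ no longer applies directly. The paper sidesteps this by \emph{first} proving, via a short induction on the number of clocks between $x$ and $u$ in $D$, that $C_{xu}$ and $C_{ux}$ are already $\max$-fixed (using only that the non-wrapping counters form a $\lesssim$-downward-closed set and hence are trivially fixed); once that is in hand, every eligible pair for a wrapping $C_{uv}$ has both components at their final values, and the only pair that can exceed $C'_{uv}$ is the split at $x$.

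Second, your sketch does not rule out a second fixed-point iteration. Your sentence ``every later-increased counter inherits this'' correctly observes that a later trigger must reuse a $0$-straddling arc increased in round one, but it does not show that such a pair cannot push $C_{uv}$ beyond $C'_{uv}+1$. The paper devotes a separate paragraph to this: assuming $C_{eb}$ is bumped in a second iteration via $(C_{ec},C_{cb})$ with $C_{cb}$ $0$-straddling and $C_{ec}$ not, it unwinds the first-iteration equality $C_{cb}=C'_{cb}+1$ together with the triangle inequality in $D'$ to show $C_{eb}$ was already set to $C_{ex}+C_{xb}$ in round one, a contradiction. This is exactly the kind of check you defer to ``the $\max$ fixed point imposes nothing new beyond $C_{uv}\ge C_{ux}+C_{xv}$'', but it needs to be carried out.

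In short: right idea and right endgame, but the dependency is inverted --- prove the $x$-incident counters are $\max$-fixed \emph{first} (this is the inductive core the paper spells out for Item~\ref{Constr:regC}), then the exact-split and single-iteration claims fall out without the gap-pair detour.
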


\proof{
    We show that the first fixed-point iteration of taking maxima satisfies this claim. Then we show by contradiction that there are no more fixed-point iterations of taking maxima. 
    %

    To show the first step, we analyze all types of transitions. For Item~\ref{Constr:regA}, the claim holds trivially. We show the claim in full detail for Item~\ref{Constr:regC}. Other items are analogical.

    The counters $C_{uv}$ such that $\Dmodels{u}{<}{v}$ and $\Dmodels{u}{\neq}{0}$ (which implies that $u, v \neq x$) are not affected by the transition. They also form a downward closed set with respect to $\lesssim$, hence they are not updated by the $\max$ operation.

    Counters $C_{xu}$ are incremented along the transition. By induction on the number of clocks with different fractional part between $x$ and $u$ in $D$ we show that these counters are not updated by the $\max$ operation. The basic step is trivial, because $C_{xu}$ is a minimal element in $\lesssim$. For the induction step, let us look at the value of the expression $\max\{C_{xu}, C_{xw} + C_{wu}\}$. From IH and the previous consideration we know that neither of the counters $C_{xw}, C_{wu}$ has been updated by the $\max$ operation in this step. Therefore, $C_{xw} = C'_{xw} + 1$ and $C_{wu} = C'_{wu}$. But since $C_{xu} = C'_{xu} + 1$ and from Lemma~\ref{Lemma:counter-transitivity} we know that $C'_{xu} \geq C'_{xw} + C'_{wu}$, we have that $\max\{C_{xu}, C_{xw} + C_{wu}\} = C_{xu}$.

    Counters $C_{ux}$ are set to $C'_{ud}$ and then incremented along the transition. By induction on the number of clocks with different fractional part between $u$ and $d$ in $D$ we show that these counters are not updated by the $\max$ operation. For the basic step, there are no clocks $C$ such that $C \lnsim C_{ud}$ and $C_{xa} = 1$. Therefore, $C_{ux} = C_{ud} + C_{dx}$. For the induction step, let us look at the value of the expression $\max\{C_{ux}, C_{uw} + C_{wx}\}$. From IH and the previous consideration we know that neither of the counters $C_{uw}, C_{wx}$ has been updated by the $\max$ operation in this step. Therefore, $C_{wx} = C'_{wd} + 1$ and $C_{uw} = C'_{uw}$. But since $C_{ux} = C'_{ud} + 1$ and from Lemma~\ref{Lemma:counter-transitivity} we know that $C'_{ud} \geq C'_{uw} + C'_{wd}$, we have that $\max\{C_{ux}, C_{uw} + C_{wx}\} = C_{ux}$.

    Therefore, the counters possibly affected by the first $\max$ application are $C_{uv}$ such that $\Dmodels{v}{<}{u}$ and $\Dmodels{v}{\neq}{0}$ (which implies that $u, v\neq x$).
    These counters are set to $C_{ux} + C_{xv}$ if $C_{ux} + C_{xv} > C'_{vu}$ (other possible candidates for the $\max$ operation have not been modified along the transition). We know from Lemma~\ref{Lemma:counter-transitivity} that $C'_{uv} \geq C'_{ud} + C'_{dx} + C'_{xv}, C_{dx} \geq 1$ and from the construction we know that $C_{ux} = C'_{ud} + 1, C_{xv} = C'_{xv} + 1$. This gives us that if $C_{uv} \neq C'_{uv}$ then $C_{uv} = C'_{uv} + 1$.

    Now we show that there are no additional iterations of the application of the $\max$ operation. Let us assume that $C_{eb}$ is updated in the second iteration by $C_{ec} + C_{cb}$. Without loss of generality, let us assume that $C_{cb}$ was updated by the $\max$ operation in the first iteration (this also means that $c,b \neq x$). Note that the set of clocks updated by the $\max$ operation in the first iteration has the clock $C_{xa}$ as a lower bound. Then we know that $C_{ec}$ was not updated by the $\max$ operation in the first iteration. Here we use the restriction on extended R-automata introduced at the beginning of this section. We also  know that $\Dmodels{b}{<}{e}, \Dmodels{e}{<}{c}$, and $\Dmodels{c}{<}{b}$. The rest of the argument applies to Item~\ref{Constr:regB}. The other items are analogical.

    We know that $C_{ec} = C'_{ec}$ (from the region and from the construction) and $C_{cb} = C'_{cb} + 1$ (from the previous argument). Also, $C'_{eb} \geq C'_{ec} + C'_{cb}$ from Lemma~\ref{Lemma:counter-transitivity}, which together with the assumption that $C_{eb}$ was updated by the $\max$ operation means that $C'_{eb} = C'_{ec} + C'_{cb}$ (it follows from the region and from the construction that $C_{eb}$ was not affected by any counter operations during this step). We know from the first iteration that $C_{cb} = C_{cx} + C_{xd} = C'_{cd} + C'_{ab} + 2$ and $C'_{cb} = C'_{cd} + C'_{da} + C'_{ab}$, where $C'_{da} = 1$. This means that $C'_{eb} = C'_{ec} + C'_{cd} + C'_{da} + C'_{ab}$. But then $C_{ex} = C'_{ed} + 1 = C'_{ec} + C'_{cd} + 1$, $C_{xb} = C'_{ab} + 1$, and $C_{eb} \geq C_{ex} + C_{xb} = C'_{eb} + 1$. Hence, $C_{eb}$ has been updated by the $\max$ operation in the first iteration and it is equal to $C_{ec} + C_{cb}$ even before the second iteration, which is a contradiction.\qed

}

The previous lemma shows that we did not need the fixed-point calculation in the definition of extended R-automata semantics. On the other hand, fixed-point calculations make these automata a more powerful tool with the same complexity of the limitedness problem as R-automata with copying (which follows from Lemma~\ref{Lemma:counter-sums}). Also, defining extended R-automata with only one fixed-point iteration would make the proof of Lemma~\ref{Lemma:constr-preorder} more complicated.

\revision{End of: Upper bound on max operations}

\paragraph{\bf Correspondence between $A'$ and $R$.} Now we formulate correspondence properties between the timed automaton $A'$ and the extended R-automaton $R$ constructed as above. Let us recall that we ignore $N$ and $M$ values of the counters and denote the $P$ values by $C$. For instance, a state $\langle \langle q, D\rangle, (\bar{N}, \bar{M}, \bar{P}), \lesssim \rangle$ is written as $\langle \langle q, D\rangle, \bar{C}, \lesssim \rangle$. Let for a state in a run of an extended R-automaton, the value of the state be the maximal counter value in this state (the $P$-value). Let for a run $\rho$ of an extended R-automaton, the maximum counter value along this run be the maximal state value along this run. This is the value $P(\rho)$, but to avoid confusion, we denote it by $\max\{\rho\}$ here.

Let us say that a valuation $\nu \in D$ satisfies the counter valuation $\bar{C}$ with the smallest step $\epsilon$ (denoted by $\epsmodels{\nu}{\bar{C}}$) if for each pair of clocks $x,y$, $\numodels{xy}/\epsilon \geq C_{xy}$ (or equivalently, $\numodels{xy} \geq C_{xy}\cdot \epsilon$).

\begin{lemma}
\label{Lemma:TA-RC-correspondence}
Let $R$ be the extended R-automaton constructed from the region graph $G$ induced by a timed automaton $A$. Let $\rho = \langle \langle q_0, \{\nu_0\}\rangle, \bar{C_0}, \emptyset \rangle \arrow{} \langle \langle q, D\rangle$, $\bar{C}$, $\lesssim \rangle$ be a run in $R$ ending in a complete state, $\sigma = \langle q_0, \{\nu_0\} \rangle \arrow{} \langle q,D \rangle$ be the corresponding path in $G$, and $\epsilon \leq 1/(4\cdot \max\{\rho\})$. For all $\nu\in D_{\epsilon}$ such that $\twoepsmodels{\nu}{\bar{C}}$ there is a run $\rho'$ in $\sampledsemantics{A}$ ending in $(q, \nu)$ such that $\rho' \models \sigma$. Also, there is a $\nu\in D_{\epsilon}$ such that $\twoepsmodels{\nu}{\bar{C}}$.
\end{lemma}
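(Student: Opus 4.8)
I will prove the two assertions separately: the existence claim (``there is a $\nu\in D_\epsilon$ with $\twoepsmodels{\nu}{\bar C}$'') directly from the structural properties of $\bar C$, and the realizability claim (``for every such $\nu$ there is a run $\rho'\models\sigma$ in $\sampledsemantics{A}$ ending in $\langle q,\nu\rangle$'') by induction on the number of complete transitions of $\rho$, i.e.\ on the length of $\sigma$. Throughout I use that the fractional parts of a valuation live on a circle of circumference $1$, that $\numodels{xy}$ is the clockwise arc length from $x$ to $y$ on this circle, and hence that adding a common delay to all clocks rotates the circle and leaves every $\numodels{xy}$ unchanged. I also use repeatedly that $\twoepsmodels{\nu}{\bar C}$ together with Lemma~\ref{Lemma:counter-transitivity}(\ref{Lemma:counter-transitivity:min-distance}) forces every gap between two distinct fractional values of $\nu$ to be at least $2\epsilon$, and that $\epsilon\le 1/(4\max\{\rho\})$ makes every quantity $2\epsilon\cdot C$ with $C$ a counter value along $\rho$ at most $1/2$.

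\textbf{Existence.} I would pin one fractional--part class $S_0$ of $D$ --- the class of clocks with fractional part $0$ if there is one, otherwise an arbitrary class, placed at $\epsilon$ --- and put every other class $S$ at clockwise distance $2\epsilon\cdot\max\{C_{xy}\mid x\in S_0,\ y\in S\}$ from $S_0$; let $\nu$ be the resulting valuation. All these distances are multiples of $\epsilon$, so $\nu\in D_\epsilon$; using $\epsilon\le 1/(4\max\{\rho\})$ every chosen position is at most $1/2$, so the classes occupy distinct points of $[0,1)$ and every ``wrap-around'' arc has length at least $1/2\ge 2\epsilon C$. That these positions are strictly increasing in the order prescribed by $D$, and that for classes $S_i,S_j$ the inequality $\numodels{xy}\ge 2\epsilon C_{xy}$ holds, both follow by comparing the two relevant maxima using Lemma~\ref{Lemma:counter-transitivity}(\ref{Lemma:counter-transitivity:i}) (transitivity of the $C$'s), Lemma~\ref{Lemma:counter-transitivity}(\ref{Lemma:counter-transitivity:coincidence}) (counters into clocks of one class agree), and Lemma~\ref{Lemma:counter-transitivity}(\ref{Lemma:counter-transitivity:min-distance}) (distinct classes are $\ge1$ apart). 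Hence $\twoepsmodels{\nu}{\bar C}$.

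\textbf{Realizability.} I induct on the length of $\sigma$; the base case ($\rho$ the initial configuration, $D=\{\nu_0\}$, $\bar C=\bar C_0$) is immediate. For the step, look at the last complete transition of $\rho$, from $\langle\langle q',D'\rangle,\bar C',\lesssim'\rangle$ to $\langle\langle q,D\rangle,\bar C,\lesssim\rangle$, and the given $\nu\in D_\epsilon$ with $\twoepsmodels{\nu}{\bar C}$; the goal is to produce $\nu'\in D'_\epsilon$ with $\twoepsmodels{\nu'}{\bar C'}$ joined to $\nu$ by the last step of $\sigma$, and then apply the induction hypothesis to $\nu'$. If the transition is a $\delta$-step, then $q=q'$, $\bar C=\bar C'$, and $D$ is the immediate time successor of $D'$; here $\nu'$ with $\nu'+r=\nu$ for a suitable $r\in\Dom_\epsilon$ works, the needed $r$ being a multiple of $\epsilon$ (either the common fractional part of the wrapped clocks, or $\epsilon$ itself, which is small enough precisely because $\twoepsmodels{\nu}{\bar C}$ keeps all positive fractional parts $\ge2\epsilon$), and $\twoepsmodels{\nu'}{\bar C'}$ is inherited since rotation preserves arc lengths. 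If the transition is a discrete step with no reset, or with a reset of a clock whose fractional part is $0$ in $D'$, then $\bar C=\bar C'$ and $D,D'$ have the same fractional structure, so $\nu'$ is $\nu$ with the reset clock's integral part restored to the value dictated by $D'$; it satisfies the guard because $D'$ does and $\nu'\in D'$, and $\twoepsmodels{\nu'}{\bar C'}$ holds since fractional parts are unchanged.

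\textbf{The main obstacle} is the remaining case: a discrete step resetting a clock $x$ with nonzero fractional part in $D'$, where $\bar C'$ is turned into $\bar C$ by the effects of Items~\ref{Constr:regA}--\ref{Constr:regD} plus the $\max$ operations. Now $D=\{\mu[x:=0]\mid\mu\in D'\}$, so in $\nu$ the clock $x$ sits at fractional part $0$, and I must pick a new position for $x$ on the $\epsilon$-grid, strictly inside the cyclic slot $D'$ assigns to it, so that the lifted valuation $\nu'$ (equal to $\nu$ except on $x$) satisfies $\twoepsmodels{\nu'}{\bar C'}$. For pairs not involving $x$ the arc lengths are unchanged and $C_{yz}\ge C'_{yz}$ by Lemma~\ref{Lemma:counter-max-one-increase}, so nothing is required; if $D'\models x=y$ for some clock $y$ the position of $x$ is forced and the $x$-constraints collapse, via Lemma~\ref{Lemma:counter-transitivity}(\ref{Lemma:counter-transitivity:coincidence}), to known ones. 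Otherwise, letting $u$ and $v$ be the clocks bounding $x$'s slot in $D'$ (possibly $u=v$), I would place $x$ at clockwise distance $\mathrm{lower}:=\max_{y\ne x}\big(2\epsilon C'_{yx}-\numodels{yu}\big)$ from $u$: this is a multiple of $\epsilon$, lies strictly between $0$ and $\numodels{uv}$ because $C'_{ux},C'_{xv}\ge1$ by Lemma~\ref{Lemma:counter-transitivity}(\ref{Lemma:counter-transitivity:min-distance}), and is $\le\mathrm{upper}:=\min_{y\ne x}\big(\numodels{uy}-2\epsilon C'_{xy}\big)$; the inequality $\mathrm{lower}\le\mathrm{upper}$ reduces, for each $y_1,y_2$, to $2\epsilon(C'_{y_1x}+C'_{xy_2})\le\numodels{y_1y_2}$, which follows from $\numodels{y_1y_2}\ge2\epsilon C_{y_1y_2}\ge2\epsilon C'_{y_1y_2}$ and Lemma~\ref{Lemma:counter-transitivity}(\ref{Lemma:counter-transitivity:i}) (since $D'\models\overline{y_1xy_2}$) when $y_1\ne y_2$, and from $C'_{y_1x}+C'_{xy_1}\le 1/(2\epsilon)$, i.e.\ from $\epsilon\le1/(4\max\{\rho\})$, when $y_1=y_2$. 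This placement makes every $x$-constraint of $\twoepsmodels{\nu'}{\bar C'}$ hold, $\nu'$ meets the guard (the integral part of $x$ being chosen per $D'$), and $\nu'[x:=0]=\nu$, so the induction hypothesis on $\nu'$ gives a run in $\sampledsemantics{A}$ ending at $\langle q',\nu'\rangle$ which extends by the discrete step to $\langle q,\nu\rangle$; as $\nu'\in D'$ and $\nu\in D$ this run is along $\sigma$. The delicate work I anticipate is the cyclic-order bookkeeping (all non-$x$ clocks lie on the arc from $v$ to $u$ avoiding $x$'s slot, which must be verified against each of the four construction cases) and confirming that the single chosen distance meets the constraints for all clocks at once, not merely for the slot endpoints $u,v$.
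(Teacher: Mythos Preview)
Your plan matches the paper's: both prove existence by anchoring one fractional class and spacing the others at $2\epsilon C_{ab}$, and prove realizability by induction on $|\sigma|$, constructing a predecessor $\nu'$ from $\nu$ by rotating back (for $\delta$-steps) or reinserting the reset clock $x$ into its $D'$-slot (for discrete steps). The paper does the reset case by explicit placement formulas for each of Items~\ref{Constr:regA}--\ref{Constr:regD}; your unified cyclic placement via $\mathrm{lower}$ and $\mathrm{upper}$ is the same idea in cleaner clothing, but the argument you give for $\mathrm{lower}\le\mathrm{upper}$ has a gap.

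You reduce $\mathrm{lower}\le\mathrm{upper}$ to $2\epsilon(C'_{y_1x}+C'_{xy_2})\le\numodels{y_1y_2}$. What actually has to be shown is $2\epsilon(C'_{y_1x}+C'_{xy_2})\le\numodels{y_1u}+\numodels{uy_2}$, and this sum equals $\numodels{y_1y_2}$ only when $u$ (equivalently $x$) lies on the clockwise arc from $y_1$ to $y_2$; otherwise it is $\numodels{y_1y_2}+1$. Your transitivity step, which requires $D'\models\overline{y_1xy_2}$, is therefore unavailable precisely when your stated reduction fails. The repair is easy---in the bad cyclic order the sum is $\ge 1\ge 4\epsilon\max\{\rho\}\ge 2\epsilon(C'_{y_1x}+C'_{xy_2})$---but it must be said. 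Relatedly, your formula $\mathrm{upper}=\min_{y\neq x}(\numodels{uy}-2\epsilon C'_{xy})$ is wrong at $y=u$: since $\numodels{uu}=0$ it yields a negative number and forces $\mathrm{lower}>\mathrm{upper}$; the correct contribution from $y=u$ is $1-2\epsilon C'_{xu}$ (because $\nupmodels{xu}=1-d$), and this is never binding by the same $\epsilon$-bound. The paper sidesteps both issues by working linearly rather than cyclically and, within each item, splitting on whether the witnessing clock $z$ satisfies $D'\models\overline{xuz}$ or $D'\models\overline{xzu}$.
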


\paragraph{Remark.} This lemma requires that the valuations satisfy the counters with the smallest step $2\cdot \epsilon$. This enables us smooth time-pass transitions. If the value of a counter $C_{xy}$ is $1$ and we would allow the difference between fractional parts of $x$ and $y$ to be only $\epsilon$ then we would not be able to reach a region where $0 <_D y <_D x$ in the $\epsilon$-sampled semantics by letting the time pass. Another requirement is that each increment of a counter corresponds at most to $\epsilon / 4$ in the sampled semantics. We need this to be able to place disjoint intervals between the fractional parts of the clock values next to each other within the unit interval. In other words, we need that $1 \geq (C_{xy} + C_{yx}) \cdot 2 \epsilon$ always holds for all clocks $x$ and $y$. In the proof, we also use that $2 \geq (C_{uz} + C_{zx} + C_{xu}) \cdot 2 \epsilon$ holds for all clocks $x, u, z$ such that $\overline{uxz}$, which follows from the previous constraints.

\proof{
  By induction on the length of $\sigma$. The basic step is trivial.

  For the induction step, let us first observe that the maximum counter value along $\rho$ is greater than or equal to the maximum counter value along its prefixes. Let $\nu \in D_{\epsilon}$ and $\twoepsmodels{\nu}{\bar{C}}$. We have to find $\nu' \in D'_{\epsilon}$, $\twoepsmodels{\nu'}{\bar{C}'}$, where $\langle \langle q', D' \rangle, \bar{C}', \lesssim' \rangle$ is the  previous complete state of $\rho$, such that $\nu$ can be reached from $\nu'$ along the edge from $\langle q', D' \rangle$ to $\langle q, D \rangle$. 
  We discuss different types of this edge.

  Let us first look at the case where the edge leads to the immediate time successor. Let $x$ be a clock with the smallest fractional part in $\nu$. If $\Dmodels{0}{<}{x}$ (or, equivalently, $\fract(\nu(x)) > 0$) then $\nu'(y) = \nu(y) - \fract(\nu(x))$ for all $y\in \clocks$. If $\Dmodels{0}{=}{x}$ (equivalently, $\fract(\nu(x)) = 0$) then $\nu'(y) = \nu(y) - \epsilon$ for all $y\in \clocks$. Because the minimal distance between two clocks with different fractional parts is $2\cdot \epsilon$ (follows from IH and Lemma~\ref{Lemma:counter-transitivity}), $\nu' \in D'_\epsilon$ in both cases. Also, $\twoepsmodels{\nu'}{\bar{C'}}$, because $\bar{C'} = \bar{C}$ (instructions on all counters are $0$) and the differences between the clocks do not change. 

  We discuss an edge along which a clock (denote $x$) is reset. Then we know that $\nu'(y) = \nu(y)$ for all $y\neq x$. The case where $\Dpmodels{x}{=}{0}$ 
  clearly holds, because neither distances between the fractional parts of the clocks nor the counters change. For the other case, we discuss different types of the regions $D, D'$ corresponding to the cases in the construction of $R$ separately. Let $i$ be the integral part of the clock $x$ in $D'$. If there is a clock $z$ such that $\nupmodels{xz} = 0$ then $\nu'(x) = i + \fract(\nu(z))$.

  Otherwise, there is a clock with a different fractional part than $x$ in $D'$, because $|\clocks| \geq 2$. If there is a clock with a smaller fractional part than $x$ in $D'$ then let $b$ denote a clock with the greatest fractional part smaller than the fractional part of $x$. We place $x$ at the greatest distance from $b$ to the right enforced by some clock $z$ and the counter $C_{zx}$: $$\nu'(x) = i + \max\{\fract(\nu(z) + C'_{zx}\cdot 2 \epsilon)\ |\ \forall z\in \clocks . (z = b) \vee (\nupmodels{zb} < C'_{zx} \cdot 2 \epsilon) \}$$
  If $x$ has the smallest fractional part in $D'$ (the third case in this proof) then let $b$ denote a clock with the smallest fractional part greater than the fractional part of $x$. We place $x$ at the greatest distance from $b$ to the left enforced by some clock $z$ and the counter $C_{xz}$: $$\nu'(x) = i + \min\{\fract(\nu(z) - C'_{xz}\cdot 2 \epsilon)\ |\ \forall z \in \clocks \}$$
   Here we do not need the additional condition on clocks $z$, because they all have the fractional part greater than or equal to the fractional part of $b$. The construction of the valuation $\nu'$ for $x$ is depicted in Figure~\ref{Fig:new-value-for-x}.

  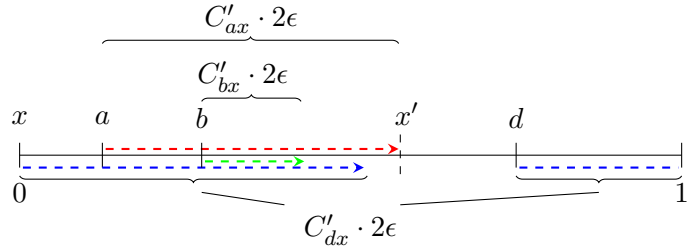
\begin{figure}[htbp]
    \centering
    \begin{tikzpicture}[>=stealth, scale=1.1]

      \path(-4,0) coordinate (0);
      \path(4,0) coordinate (1);
      \path(-4,.15) coordinate (0u);
      \path(-4,-.15) coordinate (0l);
        \path(-3.96,-.15) coordinate (0lr);
        \path(.16,-.15) coordinate (end-d);
      \path(4,.15) coordinate (1u);
      \path(4,-.15) coordinate (1l);
        \path(3.96,-.15) coordinate (1ll);
      \path(-3,.15) coordinate (bu);
        \path(-2.96,.075) coordinate (bur);
      \path(-3,-.15) coordinate (bl);
        \path(-3.1,-.075) coordinate (bll);
      \path(.6,.25) coordinate (xu);
      \path(.6,-.25) coordinate (xl);
      \path(-1.8,.15) coordinate (cu);
      \path(-1.8,-.15) coordinate (cl);
        \path(-1.76,-.075) coordinate (clr);
        \path(-.56,-.075) coordinate (end-c);
      \path(2,.15) coordinate (du);
        \path(2.04,-.15) coordinate (dlr);
        \path(.58,.075) coordinate (new-x);
      \path(2,-.15) coordinate (dl);

      \draw(-4,-.45) node [name=zero]{$0$};
      \draw(4,-.45) node [name=one]{$1$};
      \draw(-4,.45) node [name=x]{$x$};
      \draw(-3,.45) node [name=b]{$a$};
      \draw(.68,.5) node [name=b]{$x'$};
      \draw(-1.8,.45) node [name=c]{$b$};
      \draw(2,.45) node [name=d]{$d$};


      \draw (0)--(1);
      \draw (0u)--(0l);
      \draw (1u)--(1l);
      \draw (bu)--(bl);
      \draw[dashed] (xu)--(xl);
      \draw[->,dashed,thick,red] (bur)--(new-x);
      \draw[->,dashed,thick,green] (clr)--(end-c);
      \draw[dashed,thick,blue] (dlr)--(1ll);
      \draw[->,dashed,thick,blue] (0lr)--(end-d);
      \draw (cu)--(cl);
      \draw (du)--(dl);

      \draw[snake=brace] (-3, 1.35) -- (0.58,1.35) node at (-1.2,1.65)       {$C'_{ax}\cdot 2 \epsilon$};
      \draw[snake=brace] (-1.8, 0.65) -- (-0.6,0.65) node at (-1.3,0.95)      {$C'_{bx}\cdot 2 \epsilon$};

      \draw[snake=brace] (0.2,-0.25) -- (-4, -0.25);
      \draw[snake=brace] (4,-0.25) -- (2, -0.25) node at (0,-.9) {$C'_{dx}\cdot 2 \epsilon$};
      \draw (-1.8,-0.45) -- (-.8,-.6);
      \draw (3,-0.45) -- (.6,-.6);

    \end{tikzpicture}
    \caption{Illustration of the calculation of the value of $x$ in the valuation $\nu'$.     The positions of the clocks correspond to the valuation $\nu$, where $\nu(x) = 0$ ($x$ was reset), $\nu(a) = 0.12, \nu(b) = 0.24, \nu(d) = 0.76$. The values of the counters are $C_{ax} = 10, C_{bx} = 3, C_{dx} = 17$. The sampling rate is $\epsilon = 0.02$ and thus $C_{ax}\cdot\epsilon = 0.4, C_{bx}\cdot\epsilon = 0.12, C_{ax}\cdot\epsilon = 0.68$. Then $\nu'(x)$, depicted by $x'$, is $\max\{0.52, 0.36, 0.44\} = 0.52$.}
    \label{Fig:new-value-for-x}
  \end{figure}

  \noindent As the first case we consider regions $D'$ which have a clock $a$ with zero fractional part (Item~\ref{Constr:regA} in the construction, depicted in Figure~\ref{Fig:regA}). We denote a clock with the greatest fractional part smaller than the fractional part of $x$ by $b$ (there is always one such clock, since $b$ could be the clock $a$). If it exists, then we also denote a clock with the smallest fractional part greater than the fractional part of $x$ by $c$.

\forget{
%
%

  \begin{figure}[htbp]
    \caption{The region $D'$ has a clock $a$ with zero fractional part.}
    \label{Fig:regA}
  \end{figure}
}
  We have to show that $\nu'\in D'_\epsilon$ and that $\twoepsmodels{\nu'}{\bar{C'}}$. First we show that $\nu'\in D'_\epsilon$. If there is a clock $y$ such that $\Dpmodels{x}{=}{y}$ then clearly $\nu'\in D'_\epsilon$. Otherwise, we have to show that $\Dpmodels{b}{<}{x}$ 
  and if $c$ exists then also that $\Dpmodels{x}{<}{c}$. 
  To show that $\Dpmodels{b}{<}{x}$, 
  we need to show that $\fract(\nu(b)) + C'_{bx} \cdot 2 \epsilon < 1$ and then the rest follows from the construction of $\nu'$. Since $C_{ba} = C'_{ba} \geq C'_{bx} + C'_{xa}$ and $C'_{xa} \geq 1$ (Lemma~\ref{Lemma:counter-transitivity}), we have that $C_{ba} > C'_{bx}$ and from the fact that $\twoepsmodels{\nu}{\bar{C}}$ we have that $\numodels{ba} > C'_{bx}\cdot 2 \epsilon$ and thus $\fract(\nu(b)) + C'_{bx}\cdot 2 \epsilon < 1$.  To show that $\Dpmodels{x}{<}{c}$, 
  we discuss the following two cases. Let us denote the clock chosen by the $\max$ function in the construction of the value $\nu'(x)$ by $z$.

  \begin{enumerate}[$\bullet$]
    \item If the clock $z$ has the same fractional part as $c$ in $D'$ then the claim follows from the condition $\nupmodels{zb} < C'_{zx}\cdot 2 \epsilon$ in the construction of $\nu'$ and the observation that $C'_{cx}\cdot 2 \epsilon < 1$.
    \item Otherwise, we have that $C_{zc} = C'_{zc} \geq C'_{zx} + C'_{xc}$ and $C'_{xc} \geq 1$ (Lemma~\ref{Lemma:counter-transitivity}), thus $C_{zc} > C'_{zx}$. From the fact that $\twoepsmodels{\nu}{\bar{C}}$ and from the construction of $\nu'$ we have that $\nupmodels{zc} \geq C'_{zc}\cdot 2 \epsilon$ and $\nupmodels{zx} = C'_{zx}\cdot 2 \epsilon$, which gives that $\nupmodels{zc} > \nupmodels{zx}$. This is a sufficient condition in case that $\Dpmodels{z}{<}{c}$. Otherwise, we need to show that $\Dpmodels{x}{<}{z}$, which is shown in the previous item.
  \end{enumerate}

\noindent Now we show that $\twoepsmodels{\nu'}{\bar{C'}}$. If there is a clock $y$ such that $\Dpmodels{x}{=}{y}$ then the fact that $\twoepsmodels{\nu'}{\bar{C'}}$ follows directly from Lemma~\ref{Lemma:counter-transitivity}. Otherwise, we have to check all the counters. For all counters $C'_{uv}$ such that $u, v \neq x$, $C'_{uv} = C_{uv}$ and from the construction of $\nu'$, $\nupmodels{uv} \geq C'_{uv}\cdot 2 \epsilon$. For counters $C'_{ux}$ (for all clocks $u$), the fact that $\nupmodels{ux} \geq C'_{ux}\cdot 2 \epsilon$ follows directly from the construction of $\nu'$ (and from the fact that $\Dpmodels{b}{<}{x}$ for the clocks which do not satisfy the condition in the construction of $\nu'$). For the counters $C'_{xu}$ we consider two cases. Let us denote the clock chosen by the $\max$ function in the construction of the value $\nu'(x)$ by $z$.

  \begin{enumerate}[$\bullet$]
    \item If the clock $z$ does not have the same fractional part as $u$ in $D'$ then we have again two possibilities.

        \begin{enumerate}[$-$]
          \item If $D' \models \overline{xuz}$ then we have that $C_{zu} = C'_{zu} \geq C'_{zx} + C'_{xu}$ (Lemma~\ref{Lemma:counter-transitivity}). From the fact that $\twoepsmodels{\nu}{\bar{C}}$ and from the construction of $\nu'$ we have that $\nupmodels{zu} \geq C'_{zu}\cdot 2 \epsilon$ and $\nupmodels{zx} = C'_{zx}\cdot 2 \epsilon$,  therefore $\nupmodels{xu} = \nupmodels{zu} - \nupmodels{zx} > C'_{xu}\cdot 2 \epsilon$.
          \item If $D' \models \overline{xzu}$ then we have that $\nupmodels{xu} > \nupmodels{xz}$. From the construction of $\nu'$ we have that $\nupmodels{xz} = 1 - (C'_{zx}\cdot 2 \epsilon)$ and from the condition on $\epsilon$ we have that $1 \geq 2\cdot \max\{C'_{zx}, C'_{xu}\}\cdot 2\epsilon$. This together gives that $\nupmodels{xz} \geq (C'_{xu}\cdot 2 \epsilon)$.
        \end{enumerate}

    \item If the clock $z$ has the same fractional part as $u$ in $D'$ then it suffices to observe that $C'_{xu} + C'_{ux} \leq 2\cdot \max\{\rho\}$ and thus $1 \geq (C'_{xu} + C'_{ux})\cdot 2\epsilon$. From the construction of $\nu'$ we have that $\nupmodels{xu} = 1 - (C'_{ux}\cdot 2 \epsilon)$ and thus $\nupmodels{xu} \geq C'_{xu} \cdot 2 \epsilon$.
  \end{enumerate}

\noindent As the second case we consider regions $D'$ such that $0 <_{D'} \Dpmodels{a}{<}{x} <_{D'} d$ (Item~\ref{Constr:regA} in the construction, depicted in Figure~\ref{Fig:regA}).
  %
  %
  The argument for this case is the same as for the first case, with the only difference that we use the counters $C_{bd}, C_{xd}$ instead of the counters $C_{ba}, C_{xa}$ when showing that $\Dpmodels{b}{<}{x}$.

  As the third case we consider regions $D'$ where $x$ has strictly smaller fractional part than other clocks (Item~\ref{Constr:regC} in the construction, depicted in Figure~\ref{Fig:regC}). We denote a clock with the smallest fractional part greater than the fractional part of $x$ by $a$ (there is always one such clock, since $|\clocks|\geq 2$). 

\forget{
  \begin{figure}[htbp]
    \caption{The clock $x$ has the smallest (strictly) fractional part in the region $D$.}
    \label{Fig:regC}
  \end{figure}
}


  We have to show that $\nu'\in D'_\epsilon$ and that $\twoepsmodels{\nu'}{\bar{C'}}$. First we show that $\nu'\in D'_\epsilon$. We have to show that $(\fract(\nu(z)) - C'_{xz}\cdot 2 \epsilon) > 0$ for all clocks $z$ and that $\Dpmodels{x}{<}{a}$. 
  The first part follows from the fact that $\numodels{xz} \geq C_{xz}\cdot 2 \epsilon$, $\nu(z) = \nu'(z)$, and $C'_{xz} < C_{xz}$. At this place, we use the fact that the value of $C_{xz}$ is incremented along these transitions in the extended R-automaton construction. The second fact follows from the first one and from the fact that $C'_{xa} \geq 1$ (Lemma~\ref{Lemma:counter-transitivity}).

  Now we show that $\twoepsmodels{\nu'}{\bar{C'}}$. The argument is 'dual' to the argument for the first case. For all counters $C'_{uv}$ such that $u, v \neq x$, $C'_{uv} \leq C_{uv}$ and from the construction of $\nu'$, $\nupmodels{uv} \geq C'_{uv}\cdot 2 \epsilon$. For counters $C'_{xu}$ (for all clocks $u$), the fact that $\nupmodels{xu} \geq C'_{xu}\cdot 2 \epsilon$ follows directly from the construction of $\nu'$. For the counters $C'_{ux}$ we consider two cases. Let us denote the clock chosen by the $\min$ function in the construction of the value $\nu'(x)$ by $z$.

  \begin{enumerate}[(1)]
    \item If the clock $z$ does not have the same fractional part as $u$ in $D'$ then we have again two possibilities.

        \begin{enumerate}[(a)]
          \item If $D' \models \overline{xzu}$ then we have that $C_{uz} = C'_{uz} \geq C'_{ux} + C'_{xz}$ (Lemma~\ref{Lemma:counter-transitivity}). From the fact that $\twoepsmodels{\nu}{\bar{C}}$ and from the construction of $\nu'$ we have that $\nupmodels{uz} \geq C'_{uz}\cdot 2 \epsilon$ and $\nupmodels{xz} = C'_{xz}\cdot 2 \epsilon$, therefore $\nupmodels{ux} = \nupmodels{uz} - \nupmodels{xz} > C'_{ux}\cdot 2 \epsilon$.
          \item If $D' \models \overline{xuz}$ then we have that $\nupmodels{ux} > \nupmodels{zx}$. From the construction of $\nu'$ we have that $\nupmodels{zx} = 1 - (C'_{xz}\cdot 2 \epsilon)$ and from the condition on $\epsilon$ we have that $1 \geq 2\cdot \max\{C'_{xz}, C'_{ux}\}\cdot 2 \epsilon$. This together gives that $\nupmodels{zx} \geq (C'_{ux}\cdot 2 \epsilon)$.
        \end{enumerate}

    \item If the clock $z$ has the same fractional part as $u$ in $D'$ then it suffices to observe that $C'_{xu} + C'_{ux} \leq 2\cdot \max\{\rho\}$ and thus $1 \geq (C'_{xu} + C'_{ux})\cdot 2 \epsilon$. From the construction of $\nu'$ we have that $\nupmodels{ux} = 1 - (C'_{xu}\cdot 2 \epsilon)$ and thus $\nupmodels{ux} \geq C'_{ux} \cdot 2 \epsilon$.
  \end{enumerate}

\noindent As the fourth case we consider regions $D'$ where $x$ has strictly greater fractional part than other clocks (Item~\ref{Constr:regC} in the construction, depicted in Figure~\ref{Fig:regC}). We denote a clock with the greatest fractional part smaller than the fractional part of $x$ in $D'$ by $d$ (there is always one such clock, since $|\clocks|\geq 2$). The correctness argument is 'dual' to the argument from the third case.

  We have to show that $\nu'\in D'_\epsilon$ and that $\twoepsmodels{\nu'}{\bar{C'}}$. First we show that $\nu'\in D'_\epsilon$. We have to show that $(\fract(\nu(z)) + C'_{zx}\cdot 2 \epsilon) < 1$ for all clocks $z$ and that $\Dpmodels{d}{<}{x}$. 
  The first part follows from the fact that $\numodels{zx} \geq C_{zx} \cdot 2 \epsilon$, $\nu(z) = \nu'(z)$, and $C'_{zx} < C_{zx}$. At this place, we use the fact that the value of $C_{zx}$ is incremented along these transitions in the extended R-automaton construction. The second fact follows from the first one and from the fact that $C'_{dx} \geq 1$ (Lemma~\ref{Lemma:counter-transitivity}).

  Now we have to show that $\twoepsmodels{\nu'}{\bar{C'}}$. The argument is the same as the argument for the first case, with the difference that for the counters $C'_{uv}$ such that $u, v \neq x$, we have that $C'_{uv} \leq C_{uv}$.

%
%
%
%

  It remains to show that there is a valuation $\nu \in D_\epsilon$ such that $\twoepsmodels{\nu}{\bar{C}}$. We construct $\nu$ in the following way. Let the integral parts of all clocks correspond to $D$.
  Let $a$ be a clock with the smallest fractional part in $D$. If $\Dmodels{a}{=}{0}$ then $\fract(\nu(a)) = 0$, otherwise, $\fract(\nu(a)) = \epsilon$. For all other clocks $b$, let $\fract(\nu(b)) = C_{ab} \cdot 2 \epsilon$. Correctness of this assignment (for all $b$, $C_{ab} \cdot 2 \epsilon < 1 - \epsilon$) follows from the condition on $\epsilon$.
  %

  We need to show that $\nu \in D_\epsilon$ and that $\twoepsmodels{\nu}{\bar{C}}$. The former follows directly from Lemma~\ref{Lemma:counter-transitivity} and the latter from the following consideration. For all clocks $\Dmodels{c}{\leq}{d}$, $\numodels{cd} \geq C_{cd}\cdot 2 \epsilon$, because of the fact that $C_{ac} + C_{cd} \leq C_{ad}$ (Lemma~\ref{Lemma:counter-transitivity}) and $\numodels{dc} \geq C_{dc}\cdot 2 \epsilon$, because $\numodels{dc} > 1 - (C_{ad}\cdot 2 \epsilon)$ and $1 \geq (C_{ad} + C_{dc})\cdot 2 \epsilon$. We also know that $\numodels{ba} \geq C_{ba}\cdot 2 \epsilon$ for all clocks $b$, because $\epsilon \leq 1/(4\cdot \max\{\rho\})$.\qed

}

We also prove that the maximum counter value of a path constrains $\epsilon$ from above.

%
%

\begin{lemma}
\label{Lemma:cost-is-real}
Let $R$ be the extended R-automaton constructed from the region graph $G$ induced by a timed automaton $A$. Let $\rho = \langle \langle q_0, \{\nu_0\}\rangle, \bar{C_0}, \emptyset \rangle \arrow{} \langle \langle q, D\rangle$, $\bar{C}$, $\lesssim \rangle$ be a run in $R$, $\sigma = \langle q_0, \{\nu_0\} \rangle \arrow{} \langle q,D \rangle$ be the corresponding path in $G$ and $\rho' = \langle q_0, \nu_0 \rangle \arrow{}_\epsilon \langle q,\nu \rangle$ be a run in $\sampledsemantics{A}$ for some $\epsilon$ such that $\rho' \models \sigma$. Then for all pairs of clocks $x,y$, $\numodels{xy} \geq C_{xy} \cdot \epsilon$.
\end{lemma}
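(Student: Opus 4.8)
The plan is to prove the statement by induction on the length of the path $\sigma$ (equivalently, on the number of complete transitions of $\rho$ and the length of $\rho'$), isolating the last complete transition. Write the last complete transition of $\rho$ as going from $\langle \langle q', D' \rangle, \bar{C}', \lesssim' \rangle$ to $\langle \langle q, D \rangle, \bar{C}, \lesssim \rangle$, and the matching last portion of $\rho'$ as going from $\langle q', \nu' \rangle$ to $\langle q, \nu \rangle$ with $\nu' \in D'_\epsilon$; the induction hypothesis then reads $\nupmodels{uv} \geq C'_{uv} \cdot \epsilon$ for all clocks $u,v$. The base case ($\sigma$ of length $0$) is immediate, since all counters are $0$ and all clocks coincide. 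If the last transition is a time pass, then $\bar{C} = \bar{C}'$ and $\nu$ is obtained from $\nu'$ by a uniform time shift, so every cyclic distance between fractional parts is preserved, $\numodels{uv} = \nupmodels{uv}$, and the claim follows from the induction hypothesis. The same holds for a discrete transition that resets no clock or resets a clock $x$ with $\Dpmodels{x}{=}{0}$: there again $\bar{C} = \bar{C}'$ and all fractional parts, hence all distances $\numodels{uv}$, are unchanged.

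The heart of the argument is a discrete transition resetting a clock $x$ with $\Dpmodels{x}{\neq}{0}$; then $\fract(\nu(x)) = 0$ and $\nu(y) = \nu'(y)$ for $y \neq x$. First I would observe, using Lemma~\ref{Lemma:counter-max-one-increase}, that the $\max$ operation of this complete transition cannot touch any counter whose index involves $x$ (such a counter $C_{uv}$ would require $\Dmodels{v}{<}{u}$ and $\Dmodels{v}{\neq}{0}$ for a pair containing $x$, impossible since $\fract(\nu(x)) = 0$), so each $C_{xu}$ and $C_{ux}$ has exactly the value assigned by the copy/reset/increment instructions of the applicable Item~\ref{Constr:regA}--\ref{Constr:regD}. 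Then I would verify $\numodels{xu} \geq C_{xu} \cdot \epsilon$ and $\numodels{ux} \geq C_{ux} \cdot \epsilon$ for every $u$ by going through the four configurations of $x$ in $D'$. The key point is that the reference clocks $a,d$ in those items are the extremal clocks of $D'$ (smallest, resp.\ largest, fractional part), so the relevant rightward arcs do not wrap and one gets clean identities, e.g.\ $\numodels{xu} = \nupmodels{au} + \fract(\nu'(a))$ and $\numodels{ux} = \nupmodels{ud} + (1 - \fract(\nu'(d)))$ (in Item~\ref{Constr:regA}, where $x$ merges with the zero clock $a$, there is no shift at all and $\numodels{xu} = \nupmodels{au}$, $\numodels{ux} = \nupmodels{ua}$). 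Combining such an identity with the induction hypothesis on the copied counter ($\nupmodels{au} \geq C'_{au} \cdot \epsilon$, etc.)\ and the elementary fact that the ``gap'' term $\fract(\nu'(a))$ or $1 - \fract(\nu'(d))$ is a positive multiple of $\epsilon$, hence at least $\epsilon$, accounts exactly for the single increment applied after the copy; for the counters $C_{xa},C_{dx}$ that are reset and then incremented to $1$, the bound is simply $\numodels{xa} = \fract(\nu'(a)) \geq \epsilon$, resp.\ $\numodels{dx} = 1 - \fract(\nu'(d)) \geq \epsilon$. Here Lemma~\ref{Lemma:counter-transitivity} supplies the $\geq 1$ lower bounds and the coincidence/triangle relations among the counters of $\bar{C}'$ that make the comparisons go through.

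Finally I would treat the counters $C_{uv}$ with $u,v \neq x$. Those not changed by the $\max$ operation satisfy $C_{uv} = C'_{uv}$ and $\numodels{uv} = \nupmodels{uv}$ (the fractional parts of $u$ and $v$ do not move), so the induction hypothesis applies directly. For one that is changed, Lemma~\ref{Lemma:counter-max-one-increase} forces $C_{uv} = C_{ux} + C_{xv}$, and since $\fract(\nu(x)) = 0$ lies on the rightward arc from $u$ to $v$ (the condition $\Dmodels{v}{<}{u}$, $\Dmodels{v}{\neq}{0}$ means $0 = \fract(\nu(x)) < \fract(\nu(v)) < \fract(\nu(u))$), we get $\numodels{uv} = \numodels{ux} + \numodels{xv} \geq (C_{ux} + C_{xv}) \cdot \epsilon = C_{uv} \cdot \epsilon$ from the bounds established in the previous paragraph. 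This exhausts all pairs of clocks and closes the induction.

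I expect the main obstacle to be the geometric bookkeeping of the reset case: checking uniformly across the four configurations of $x$ in $D'$ (and across the symmetric subcases ``$u$ to the left of $x$'' versus ``$u$ to the right of $x$'', according to whether the arc wraps through $0$) that each copy-then-increment or reset-then-increment pattern of the construction yields a value that genuinely lower-bounds the corresponding cyclic distance in $\nu$ by the right multiple of $\epsilon$ --- i.e.\ that the single ``$+1$'' is always matched by a true distance gain of at least $\epsilon$ coming from $x$'s new position at an integer. This is essentially the soundness mirror of the valuation construction carried out in the proof of Lemma~\ref{Lemma:TA-RC-correspondence}.
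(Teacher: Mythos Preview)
Your proposal is correct and follows essentially the same inductive scheme as the paper: induction on the length of $\sigma$, trivial handling of time-pass and zero-fractional-part resets, and a case analysis over Items~\ref{Constr:regA}--\ref{Constr:regD} for the genuine reset of a clock $x$. The treatment of the counters $C_{xu}$, $C_{ux}$ via ``copied value plus one $\epsilon$ of gap'' matches the paper's argument exactly.

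There is one pleasant difference worth noting. For counters $C_{uv}$ with $u,v\neq x$ that are touched by the $\max$ operation, the paper unpacks the equality $C_{uv}=C'_{uv}+1$ into the constituent summands $C'_{ud}$, $C'_{da}$, $C'_{ab}$ (via Lemma~\ref{Lemma:counter-transitivity}) and then matches each against a piece of the distance $\numodels{uv}$. You instead use the other equality supplied by Lemma~\ref{Lemma:counter-max-one-increase}, namely $C_{uv}=C_{ux}+C_{xv}$, together with the geometric decomposition $\numodels{uv}=\numodels{ux}+\numodels{xv}$ (valid because $\fract(\nu(x))=0$ lies on the rightward arc from $u$ to $v$), and then appeal to the bounds on $C_{ux}$ and $C_{xv}$ that you have just established. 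This is a genuine local simplification: it avoids the somewhat delicate bookkeeping that forces $C'_{da}=1$ in the paper's argument and makes the four reset cases more uniform. Both routes rely on Lemma~\ref{Lemma:counter-max-one-increase} in an essential way; yours just exploits a different clause of its conclusion.
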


\proof{
  By induction on the length of $\sigma$. The basic step is trivial. For the induction step, we show that if the runs of $R$ and $A$ end in the states $\langle \langle q', D'\rangle, \bar{C'}, \lesssim' \rangle$ and $\langle q',\nu' \rangle$, respectively, satisfying the condition, i.e., for all pairs of clocks $x,y$, $\numodels{xy} \geq C_{xy} \cdot \epsilon$, then the condition is also satisfied after transitions leading to the next (complete) states $\langle \langle q, D\rangle, \bar{C}, \lesssim \rangle$ and $\langle q,\nu \rangle$. We discuss the types of transitions.

  We first discuss the case where the edge leads to the immediate time successor. The condition is clearly satisfied, because neither the differences between the clocks nor the counter values change after a time transition.

  We discuss an edge along which a clock (denote $x$) is reset. The case where $\fract(\nu'(x)) = 0$ ($x$ has zero fractional part in $D'$, $\Dpmodels{x}{=}{0}$) clearly keeps the condition satisfied, because neither the differences between the clocks nor the counter values change after reset of $x$. For the other case, we discuss several different types of the regions $D, D'$.

  As the first case we consider the situation where the region $D'$ has a clock $a$ with zero fractional part (depicted in Figure~\ref{Fig:regA}, Item~\ref{Constr:regA} in the construction). For the clocks $u,v$ different from the clock $x$, the distances between the fractional parts do not change and $C_{uv} = C'_{uv}, C_{vu} = C'_{vu}$. For each clock $u$, $C_{xu} = C'_{au}$, $C_{ux} = C'_{ua}$, hence the condition is satisfied from IH.

  As the second case we consider the situation where Item~\ref{Constr:regB} in the construction applies. There, the region $D'$ has clocks $a, d$ such that the fractional part of the clock $a$ is smaller than or equal to the fractional part of $x$ and  the fractional part of the clock $d$ is greater than or equal to the fractional part of $x$ (depicted in Figure~\ref{Fig:regB}). We denote a clock with the greatest fractional part smaller than the fractional part of $x$ by $b$ (there is always one such clock, since $b$ could be the clock $a$). We also denote a clock with the smallest fractional part greater than or equal to the fractional part of $x$ by $c$ (there is always one such clock, since $b$ could be the clock $d$).

  First, we look at the distances $\overline{xa}, \overline{dx}, \overline{da}$. We have that $C_{xa} = C_{dx} = 1$, but already from the region we know that $\numodels{xa} \geq \epsilon, \numodels{dx} \geq \epsilon$. Lemma~\ref{Lemma:counter-transitivity} gives us that $C_{da} = \max\{C'_{da}, C_{xa} + C_{dx} = 2\}$, so the condition either holds from IH ($C'_{da} > 2$) or because $\numodels{xa} \geq 2\cdot \epsilon$ (from the region).

  For the distances between the clocks $a$ and $d$ (avoiding $x$ in $D$), neither distances nor the counter values change.

  For the distances between the clocks $c$ and $b$ different from $x$ such that $\Dmodels{b}{<}{c}$ (alternatively, $\fract(\nu(b)) < \fract(\nu(c))$) 
  we have to analyze the counters carefully. (This is the case where we pass through $x$ in $D$ when going from $c$ to $b$; in the following argumentation we assume that $c$ is different from $a$ and $b$ is different from $d$, but it is easy to see that the same arguments, even a bit simplified, would work if this assumption does not hold.) If $C_{cb} = C'_{cb}$ then the validity of the condition holds from IH. If $C_{cb} > C'_{cb}$ then from Lemma~\ref{Lemma:counter-max-one-increase} we know that $C_{cb} = C_{cx} + C_{xb} = C'_{cb} + 1$. From the construction, $C_{cx} = C'_{cd} + 1$ and $C_{xb} = C'_{ab}  + 1$. From Lemma~\ref{Lemma:counter-transitivity} we have that $C'_{da} \geq 1$ and $C'_{cb} \geq C'_{cd} + C'_{da} + C'_{ab}$. Then $C'_{cb} = C'_{cd} + C'_{da} + C'_{ab}$ and $C'_{da} = 1$. From this it follows that $C_{cb} = C'_{cd} + 2 + C'_{ab}$.
  %
  %
  We also have that $\numodels{cb} = \numodels{cd} + \numodels{da} + \numodels{ab}$. From IH we know that $\numodels{cd} = \nupmodels{cd} \geq C'_{cd}\cdot\epsilon$, $\numodels{ab} = \nupmodels{ab} \geq C'_{ab}\cdot\epsilon$, from the region we have that $\numodels{da} \geq 2\cdot\epsilon$. Together, $\numodels{cb} \geq C_{cb}\cdot \epsilon$.

  Now we look at the distances between $x$ and other clocks in the region denoted  $b$ such that $\Dmodels{a}{<}{b}$. Directly from the construction of $R$ we have that $C_{xb} = 1 + C_{ab} = 1 + C'_{ab}$. From IH we know that $\numodels{ab} = \nupmodels{ab} \geq C'_{ab}\cdot\epsilon$, from the region we have that $\numodels{xa} \geq \epsilon$. Since $\numodels{xb} = \numodels{xa} + \numodels{ab}$, all together gives that $\numodels{xb} \geq C_{xb}\cdot \epsilon$.

%
%

  It remains to check the distances between clocks $b$ such that $\Dmodels{b}{<}{d}$ and $x$. This case is symmetrical to the previous case.

  As the third case we consider the situation where Item~\ref{Constr:regC} in the construction applies. There, $x$ has strictly smaller fractional part than other clocks in the region (depicted in Figure~\ref{Fig:regC}). We denote a clock with the smallest fractional part greater than the fractional part of $x$ by $a$ (there is always one such clock, since $|\clocks|\geq 2$). We also denote a clock with the greatest fractional part in $D'$ by $d$ (there is always one such clock, since it could also be $a$).

  First, we look at the distances between $x$ and other clocks in the region, denoted $b$. From the construction we have that $C_{xb} = C'_{xb} + 1$. From IH we know that $\nupmodels{xb} \geq C'_{xb} \cdot\epsilon$, from the region we have that $\numodels{xb} \geq \nupmodels{xb}+\epsilon$. Together, $\numodels{xb} \geq C_{xb}\cdot \epsilon$.

  Now we check the distances between clocks in the region denoted by $b$ and $x$. $C_{dx} = 1$, but already from the region we know that $\numodels{dx} \geq \epsilon$. For the other clocks we have directly from the construction of $R$ that $C_{bx} = C_{bd} + 1 = C'_{bd} + 1$. From IH we know that $\numodels{bd} = \nupmodels{bd} \geq C_{bd}\cdot\epsilon$, from the region we have that $\numodels{da} \geq \epsilon$. Since $\numodels{bx} = \numodels{bd} + \numodels{dx}$, all together gives that $\numodels{bx} \geq C_{bx}\cdot \epsilon$.

  For the distances between the clocks $c$ and $b$ different from $x$ such that $\Dmodels{b}{<}{c}$ (alternatively, $\fract(\nu(b)) < \fract(\nu(c))$)
  we have to analyze the counters carefully. (In the following argumentation we assume that $c$ is different from $a$ and $b$ is different from $d$, but it is easy to see that the same arguments, even a bit simplified, would work if this assumption does not hold.) If $C_{cb} = C'_{cb}$ then the validity of the condition holds from IH. If $C_{cb} > C'_{cb}$ then we know from Lemma~\ref{Lemma:counter-max-one-increase} that $C_{cb} = C_{cx} + C_{xb} = C'_{cb} + 1$. From the construction, $C_{cx} = C'_{cd} + 1$ and $C_{xb} = C'_{xb} + 1$. From Lemma~\ref{Lemma:counter-transitivity} we have that $C'_{cb} \geq C'_{cd} + C'_{xb} + C'_{dx}$ and that $C'_{dx} \geq 1$. Then $C'_{cb} = C'_{cd} + C'_{dx} + C'_{xb}$ and $C'_{dx} = 1$. From this it follows that $C_{cb} = C'_{cd} + 2 + C'_{xb}$.
  We also have that $\numodels{cb} = \numodels{cd} + \numodels{dx} + \numodels{xb}$. From IH we know that $\numodels{cd} = \nupmodels{cd} \geq C'_{cd}\cdot\epsilon$, we have shown that $\numodels{xb} \geq (C'_{xb}+1)\cdot\epsilon$, from the region we have that $\numodels{dx} \geq \epsilon$. Together, $\numodels{cb} \geq C_{cb}\cdot \epsilon$.

  For the distances between the clocks $a$ and $d$ (avoiding $x$ in $D, D'$), neither distances nor the counter values change.

  As the fourth case we consider the situation where Item~\ref{Constr:regD} in the construction applies. This case is dual to the third case.\qed

}

\section{Decidability Proof}
\label{Sec:proof}

First we show that Theorem~\ref{Thm:Decidability} is true for timed automata with one clock.

\begin{lemma}
\label{Lemma:one-clock}
  For a given timed automaton $A$ with the set of clocks $\clocks$ such that $|\clocks| = 1$, $L_{1/2}(A) = L(A)$ and $L_{1/2}^\omega(A) = L^\omega(A)$.
\end{lemma}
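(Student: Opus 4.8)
The plan is to prove the two inclusions $L_{1/2}(A)\subseteq L(A)$ and $L(A)\subseteq L_{1/2}(A)$ (together with their $\omega$-analogues) separately. The first is immediate: since $\Dom_{1/2}=\{l/2\mid l\in\Nat\}\subseteq\realsnull$ and the transition relation of $\semantics{A}{\Dom}$ is given by the same two rules restricted to $\Dom$, the LTS $\semantics{A}{\Dom_{1/2}}$ is literally a sub-LTS of $\semantics{A}{\realsnull}$. Hence every (finite or infinite) accepting run of the former is an accepting run of the latter, which gives $L_{1/2}(A)\subseteq L(A)$ and $L^\omega_{1/2}(A)\subseteq L^\omega(A)$.

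For the converse I would use an explicit map on valuations. Writing $x$ for the unique clock, define $h(\nu)$ to be the valuation with $h(\nu)(x)=\intg(\nu(x))$ if $\fract(\nu(x))=0$ and $h(\nu)(x)=\intg(\nu(x))+\tfrac12$ otherwise. Then $h(\nu)(x)\in\Dom_{1/2}$, and $\nu$ and $h(\nu)$ have the same integral part and the same "fractional part equal to $0$" status, so $\nu\cong_K h(\nu)$, where $K$ is the maximal constant occurring in a guard of $A$ (the clause about fractional \emph{orderings} in the definition of $\cong_K$ is vacuous for a single clock). Consequently $h(\nu)\models g$ iff $\nu\models g$ for every guard $g\in G(\clocks)$, and $h(\nu[x:=0])=h(\nu)[x:=0]$ because $h$ fixes the value $0$. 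I would also record that $h$ is monotone in the clock value: if $\nu(x)\le\mu(x)$ then $h(\nu)(x)\le h(\mu)(x)$, by a two-case argument according to whether $\intg(\nu(x))=\intg(\mu(x))$ or not; and since $h(\nu)(x)$ and $h(\mu)(x)$ are both multiples of $\tfrac12$, their difference lies in $\Dom_{1/2}$.

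Next I would lift $h$ to runs by replacing each state $\langle q,\nu\rangle$ of a run $\rho$ in $\semantics{A}{\realsnull}$ with $\langle q,h(\nu)\rangle$. A discrete step $\langle q,\nu\rangle\arrow{a}\langle q',\nu[Y:=0]\rangle$ is simulated by $\langle q,h(\nu)\rangle\arrow{a}\langle q',h(\nu)[Y:=0]\rangle=\langle q',h(\nu[Y:=0])\rangle$, the guard being satisfied because $h(\nu)\models g$. A time step $\langle q,\nu\rangle\arrow{r}\langle q,\nu+r\rangle$ is simulated by the single time step $\langle q,h(\nu)\rangle\arrow{r'}\langle q,h(\nu+r)\rangle$ with $r'=h(\nu+r)(x)-h(\nu)(x)$, which is a legal time step since $r'\ge 0$ (monotonicity), $r'\in\Dom_{1/2}$, and $h(\nu)+r'=h(\nu+r)$. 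The image run has exactly the same sequence of actions in $\Sigma$ (only the numeric time labels change, possibly to $0$) and the same sequence of locations, hence the same untimed word. Therefore finite accepting runs map to finite accepting runs (the final location, which lies in $F$, is preserved) and infinite accepting runs map to infinite accepting runs (the image still has infinitely many discrete steps and infinitely many configurations whose location is in $F$). This yields $L(A)\subseteq L_{1/2}(A)$ and $L^\omega(A)\subseteq L^\omega_{1/2}(A)$, which together with the first paragraph proves the lemma.

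The only point that is not pure bookkeeping is that a single dense-time step can always be matched by a single sampled time step; this is exactly where monotonicity of $h$ and the fact that $h$-images lie in $\tfrac12\Nat$ are needed. With only one clock, the treatment of resets and of the region ordering is otherwise trivial, so I expect this to be the hardest (yet still short) part of the argument.
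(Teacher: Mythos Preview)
Your proof is correct and follows essentially the same idea as the paper's: snap every clock value with non-zero fractional part to $\intg(\nu(x))+\tfrac12$ and leave integer values fixed, so that each dense-time run is replaced by a region-equivalent run that lives entirely in $\Dom_{1/2}$. The paper states this in one sentence and appeals to the untimed bisimilarity of region-equivalent configurations; you make the same construction explicit via the map $h$ and verify guard preservation, reset commutation, and monotonicity (hence non-negativity of the adjusted delays) by hand, which is a faithful unpacking of that bisimilarity argument rather than a different route.
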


\proof{
  Let us denote the clock by $x$. For each run over $w$ in $\densesemantics{A}$, we construct a run in $\semantics{A}{1/2}$ as follows. We modify the time delays so that all discrete transitions taken with $\intg(x) = i$ and $\fract(x) \neq 0$ are now taken with $\intg(x) = i$ and $\fract(x)=1/2$. Clearly, there is such a run in $\densesemantics{A}$, because for all $i \in \Nat$, all valuations with $\intg(x) = i$ and $\fract(x) \neq 0$ are untimed bisimilar. Such a run is also a run in $\semantics{A}{1/2}$.\qed
}

For the other cases, we first show how to transform a given timed automaton into a timed automaton which resets at most one clock along each transition and which is equivalent with respect to the sampling problem. For each discrete transition labeled by $a$ with a guard $g$ and reset $Y\subseteq \clocks$, we create a sequence of $|\clocks|$ transitions (and $|\clocks| -1$ auxiliary non-accepting states between them) labeled by $a$. These transitions reset clocks from $Y$ one by one. If $Y\neq \emptyset$ then let us denote the first reset clock by $x$. The first transition is guarded by $g$ and the guards on the other transitions are either $g$ if $|Y|=\emptyset$ or $x=0$ otherwise.
%
%

\begin{lemma}
\label{Lemma:removing-multiple-resets}
  For a given timed automaton $A$ with the set of clocks $\clocks$, the timed automaton $A'$ with at most one reset along each transition constructed as above is equivalent to $A$ with respect to the sampling problem.
\end{lemma}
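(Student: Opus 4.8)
\proof{
The plan is to prove the stronger statement that the construction commutes with the choice of time domain: for every time domain $\Dom$ (in particular for $\realsnull$ and for every $\Dom_\epsilon$) we have $L_\Dom(A') = h(L_\Dom(A))$ and $L^\omega_\Dom(A') = h(L^\omega_\Dom(A))$, where $h\colon \Sigma^*\to\Sigma^*$ (extended to $\Sigma^\omega$) is the homomorphism $h(a) = a^{|\clocks|}$ sending each letter to the block of $|\clocks|$ copies of it introduced by the construction. Since $h$ is injective --- cutting $h(w)$ into consecutive blocks of length $|\clocks|$ recovers $w$ uniquely --- this yields the lemma at once: for any fixed $\epsilon$, $L_\epsilon(A) = L(A)$ iff $h(L_\epsilon(A)) = h(L(A))$ iff $L_\epsilon(A') = L(A')$, and likewise for the $\omega$-languages; hence a sampling rate works for $A$ exactly when it works for $A'$ (indeed, the \emph{same} rate does).

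The first half of the run correspondence is the simulation of $A$ by $A'$, and it is uniform in $\Dom$ because the only delay it uses is $0\in\Dom$. Fix a discrete transition $e=(q,a,g,Y,q')$ of $A$ with $Y=\{y_1,\dots,y_m\}$; its block in $A'$ is the chain whose first transition is guarded by $g$ and resets $y_1$, whose transitions $2,\dots,m$ are guarded by $y_1=0$ and reset $y_2,\dots,y_m$ respectively, and whose transitions $m+1,\dots,|\clocks|$ are guarded by $y_1=0$ (or by $g$, if $Y=\emptyset$) and reset nothing. Running the whole block from a valuation $\nu$ with $\nu\models g$, inserting delay $0$ between the micro-transitions, ends in $\langle q',\nu[Y:=0]\rangle$ exactly as $e$ does: after the first reset $y_1=0$ holds, so every later guard $y_1=0$ is met with no elapsed time, the single-clock resets compose to $[\,Y:=0\,]$, and the trailing resetless transitions do nothing. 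This gives $L_\Dom(A')\supseteq h(L_\Dom(A))$ and $L^\omega_\Dom(A')\supseteq h(L^\omega_\Dom(A))$ in every $\Dom$.

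For the converse I would show that every accepting run of $A'$ decomposes into complete blocks and thus collapses to a run of $A$. The states of $A'$ that are copies of locations of $A$ (call them \emph{real}) occur precisely at block boundaries: each auxiliary state lies strictly inside a unique block and has a single outgoing transition continuing that block, so once a run leaves a real state it must complete the block it entered. Since auxiliary states are non-accepting, an accepting finite run (resp.\ an infinite run visiting accepting states infinitely often), which starts at the real state $q_0$, must also end at (resp.\ pass infinitely often through) real states, hence is a concatenation of complete blocks. Inside a block no positive delay is possible once $y_1$ has been reset, because of the guard $y_1=0$; a delay preceding the first micro-transition, and (if $Y=\emptyset$) any spreading of delays over the micro-transitions, corresponds simply to $A$ delaying by the total amount before taking $e$ --- here one only uses that a guard is tested at the instants its transitions fire, not in between. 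Collapsing each block to the transition of $A$ it encodes therefore turns such a run of $A'$ into a run of $A$ of the same kind reading the $h$-preimage of its label, which gives $L_\Dom(A')\subseteq h(L_\Dom(A))$ and $L^\omega_\Dom(A')\subseteq h(L^\omega_\Dom(A))$.

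The step I expect to demand the most care is precisely this last one --- ruling out ``new'' behaviours of $A'$, i.e.\ runs that are not honest concatenations of blocks, or that smuggle extra time passage inside a block. It is entirely controlled by three facts: auxiliary states are non-accepting, the guard $y_1=0$ forces every intra-block delay after the first reset to $0$, and a guard is checked only when its transition is taken. None of these distinguishes dense time from $\epsilon$-sampled time, and that uniformity --- together with the injectivity of $h$ --- is exactly what makes the construction preserve samplability.\qed
}
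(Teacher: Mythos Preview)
Your proof is correct and follows essentially the same approach as the paper: introduce the homomorphism $h(a)=a^{|\clocks|}$, simulate each transition of $A$ by the corresponding block in $A'$ with zero intermediate delays, and collapse each block of an $A'$-run back to a single $A$-transition by summing the intra-block delays (which are forced to be $0$ when $Y\neq\emptyset$). The paper's proof is terser and only explicitly treats the dense semantics, asserting ``clearly $w\in L(A)$ iff $h(w)\in L(A')$''; you make explicit what the paper leaves implicit---the uniformity of the argument over every time domain $\Dom$, the injectivity of $h$, and why accepting runs of $A'$ necessarily decompose into complete blocks---which is exactly what is needed to conclude that the \emph{same} $\epsilon$ works for $A$ and $A'$.
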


\proof{
  Let $h(\Sigma^* \arrow{} \Sigma^*)$ be a homomorphism with respect to the word concatenation defined by $h(a) = a^{|\clocks|}, a\in \Sigma$. Clearly, $w\in L(A)$ if and only if $h(w) \in L(A')$. For a run $\rho$ over $w$ in $\densesemantics{A}$, we can construct a run $\rho$ over $h(w)$ in $\densesemantics{A'}$ using the same time delays as $\rho$ by taking no delays in the auxiliary states. For a run $\rho$ over $h(w)$ in $\densesemantics{A'}$, we can construct a run $\rho$ over $w$ in $\densesemantics{A}$ using the delays which are sums of the time delays from $\rho$ by adding up all delays from the auxiliary states. Observe that when at least one clock is reset along a transition in $A$ then the delays in the corresponding auxiliary states are zero.\qed
}

The next lemma shows how to remove the transitions labeled by $\delta$ in the extended R-automaton $R$ constructed in Section~\ref{Sec:Encoding-TA-RC}. We use the same algorithm as is used for removing $\epsilon$-transitions in finite automata. Each sequence of transitions $s_1 \arrow{\delta, (0, \dots, 0)} \dots \arrow{\delta, (0, \dots, 0)} s_{k-1} \arrow{a, (e_1, \dots, e_n)} s_k$ is replaced by the transition $s_1 \arrow{a, (e_1, \dots, e_n)} s_k$. Clearly, this construction results in an extended R-automaton. Let for a word $w$ and an extended R-automaton $R$, $c_R(w) = \min\{B | w \in L_B(R)\}$ (where $\min\{\} = \omega$). Let $w\upharpoonright \delta$ for $w\in (\Sigma \cup \{\delta\})^*$ denote the projection of $w$ to $\Sigma^*$ (we skip all letters $\delta$). Let $w \bigvee w'$ denote shuffle of the two words.

\begin{lemma}
\label{Lemma:removing-t-transitions}
  Let $R$ be an extended R-automaton constructed in Section~\ref{Sec:Encoding-TA-RC} and $R'$ be the extended R-automaton constructed as above. Then for each $w \in L(R')$ there is $k\in\Nat$ and $w' \in w \bigvee \delta^k$ such that $w' \in L(R)$ and $c_{R'}(w) = c_R(w')$. Also, for each $w\in L(R)$, $w\upharpoonright \delta \in L(R')$ and $c_{R'}(w\upharpoonright \delta) \leq c_R(w)$.
\end{lemma}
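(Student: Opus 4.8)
The plan is to reduce everything to one elementary observation: in $R$ every $\delta$-transition carries the effect $(0,\dots,0)$, and such a transition leaves a configuration entirely unchanged except for its control state. I would first check this against the semantics of extended R-automata. With an all-zero effect, Step~\ref{Sem:effects} does not move $\bar{N},\bar{M},\bar{P}$; in Step~\ref{Sem:preorder} none of the clauses~(\ref{Pre:plus}), (\ref{Pre:reset}), (\ref{Pre:copy}) can add a new pair — the exception in~(\ref{Pre:plus}) never triggers because it requires $\pi_i(t)=1$ — so the transitive--reflexive closure returns $\lesssim$ unchanged; and since, by the construction convention, the preceding complete state is already a fixed point of the $\max$ step, Step~\ref{Sem:max} does nothing. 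Hence a $\delta$-step is \emph{silent} at the level of configurations: it is enabled in $\dsemantics{R}$ exactly when it is enabled in $\unpsemantics{R}$, it never pushes $\bar{N}$ above $B$, and it preserves both $\bar{C}$ and $\lesssim$.

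Given this, the textbook $\epsilon$-elimination correspondence between $\unpsemantics{R}$ and $\unpsemantics{R'}$ upgrades to a counter-preserving one, and the two claims follow by transporting runs back and forth. For the second claim I would take $w\in L(R)$ with an accepting run $\rho$ of $R$ in $\dsemantics{R}$ for $B=c_R(w)$, and contract each maximal block of $\delta$-transitions into the following non-$\delta$ transition — using, as in the standard construction, the accepting-state adjustment that marks a state of $R'$ accepting whenever it reaches an accepting state of $R$ through $\delta$-transitions, which covers a run ending with a $\delta$-block (and the empty-word case). The resulting run $\rho'$ of $R'$ over $w\upharpoonright\delta$ visits the same sequence of configurations with the silent $\delta$-configurations deleted; since those deletions remove only configurations agreeing with their neighbours on $\bar{N}$ (and on all of $\bar{C},\lesssim$), $\rho'$ is accepting in $\dsemantics{R'}$, so $w\upharpoonright\delta\in L(R')$ and $c_{R'}(w\upharpoonright\delta)\le c_R(w)$.

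For the first claim I would take $w\in L(R')$ with an accepting run $\rho'$ of $R'$ in $\dsemantics{R'}$ for $B'=c_{R'}(w)$, and undo the contraction: each transition $s_1\arrow{a}s_k$ used by $\rho'$ arose from a path $s_1\arrow{\delta}\cdots\arrow{\delta}s_{k-1}\arrow{a}s_k$ of $R$ (with the same effect on the last edge), and expanding every transition of $\rho'$ this way yields a run $\rho$ of $R$ over the word $w'$ obtained from $w$ by prefixing each of its letters with the corresponding (possibly empty) block of $\delta$'s, so $w'\in w\bigvee\delta^{k}$ with $k$ the total number of inserted $\delta$'s. The reinserted transitions are silent, so $\rho$ runs through exactly the configurations of $\rho'$ with stutter steps interspersed; thus $\rho$ is accepting in $\dsemantics{R}$, $w'\in L(R)$, and $c_R(w')\le c_{R'}(w)$. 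Applying the already-proved second claim to $w'\in L(R)$ gives $c_{R'}(w'\upharpoonright\delta)\le c_R(w')$, and since $w$ contains no $\delta$ we have $w'\upharpoonright\delta=w$; combining, $c_{R'}(w)\le c_R(w')\le c_{R'}(w)$, hence $c_{R'}(w)=c_R(w')$, as required.

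The only part needing genuine care is the bookkeeping of the $\epsilon$-elimination — pinning down the accepting-state adjustment for runs finishing in a $\delta$-block, and describing $w'$ precisely as an element of $w\bigvee\delta^{k}$ rather than just ``$w$ with some $\delta$'s added'' — together with the one-line verification that an all-zero effect leaves $\lesssim$ (and, through Step~\ref{Sem:max}, all of $\bar{C}$) untouched. I do not expect any real obstacle: once the silence of $\delta$-moves is established, the counter bounds transfer verbatim, and the lemma is essentially ``$\epsilon$-removal for finite automata, performed while noting that the erased transitions do not touch the counters.''
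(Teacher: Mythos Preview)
Your proposal is correct and follows the same approach as the paper: the paper's own proof is a single sentence stating that the effect $(0,\dots,0)$ changes neither the counter values nor the preorder $\lesssim$, which is exactly the ``silence of $\delta$-moves'' observation you spell out in detail. Your version additionally supplies the routine verifications (Steps~\ref{Sem:effects}--\ref{Sem:max}, the accepting-state bookkeeping of $\epsilon$-elimination, and the two-sided inequality yielding $c_{R'}(w)=c_R(w')$) that the paper leaves implicit.
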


\proof{
  The proof follows directly from the fact that the effect $(0, \dots, 0)$ does not change the counter values and the preorder $\lesssim$.\qed
}

Let $h$ be a homomorphism which triples each letter in the word, i.e., $h(a) = aaa$ for all $a\in\Sigma$. Now we have all tools to prove the main theorem.

\proof[Proof of Theorem~\ref{Thm:Decidability}]{

First, we show the claim for finite words -- decidability of the sampling problem.
Lemma~\ref{Lemma:removing-multiple-resets} allows us to consider only timed automata with at most one reset along each transition. For such a timed automaton $A$, we construct an extended R-automaton $R$ as described in Section~\ref{Sec:Encoding-TA-RC} and an extended R-automaton $R'$ as described above. According to Lemma~\ref{Lemma:RC-automata}, it is decidable whether the language of an extended R-automaton is limited.

If the language of $R'$ is limited by a natural number $B$ then let $\epsilon = 1 / (4 \cdot B)$. For each (untimed) word $w\in L(A)$ there is a run of $R'$ which accepts $h(w)$ with counters bounded by $B$. From Lemma~\ref{Lemma:removing-t-transitions} we know that there is a number $k$ and a word $w' \in h(w) \bigvee \delta^k$ such that $w'$ is accepted by $R$ with counters bounded by $B$. Having an accepting run of $R$, Lemma~\ref{Lemma:TA-RC-correspondence} says that $A$ accepts $w$ in $\epsilon$-sampled semantics (it accepts a timed word whose untimed version is $w$).

Assume that the language of $R'$ is not limited. For each $\epsilon = 1/B$ where $B$ is a natural number we find a word $h(w)$ such that some counter $C_{xy}$ exceeds $B$ along each accepting run of $R'$. From Lemma~\ref{Lemma:removing-t-transitions} we know that for all $k$ and for all $w' \in h(w) \bigvee \delta^k$, there is no accepting run of $R$ over $w'$ with counters bounded by $B$. According to Lemma~\ref{Lemma:cost-is-real}, there is no accepting run of $A$ over $w$ in $\epsilon$-sampled semantics, because it would have to visit a state $(q, \nu)$ with $\numodels{xy} > B\cdot \epsilon = 1$.


The following shows decidability of the $\omega$-sampling problem.
The $\omega$-limitedness problem is decidable for extended R-automata over $\omega$-words with B\"{u}chi acceptance conditions. It has been show that $\omega$-universality is decidable for R-automata in~\cite{aky08r-automata}.  In the same way as for the finite words case, we can use this result to show that $\omega$-limitedness is decidable for R-automata. Then the decidability of $\omega$-limitedness for extended R-automata follows from Lemmas~\ref{Lemma:decidability-of-copy},~\ref{Lemma:counter-maxs}, and~\ref{Lemma:counter-sums}.



If the extended R-automaton $R$ constructed from a given timed automaton $A$ is $\omega$-limited then we show that $A$ can be $\omega$-sampled as follows. From the finite word case we know that there is an $\epsilon$ such that each (finite) prefix of a $w \in L^\omega(R)$ has a corresponding concrete run of $A$ in $\epsilon$-sampled semantics.
We show how to construct all prefixes of an infinite accepting concrete run of $A$ over $w$ in $\epsilon$-sampled semantics.
The basic idea behind this construction is that each $\epsilon$ gives us an equivalence  relation on valuations with finite index (defined formally below). 
This means that there are only finitely many possible transitions from each state. Therefore, we have an infinite tree induced by the runs over prefixes which is finitely branching. According to K\"onig's Lemma, this tree has an infinite branch.


Now we formalize the previous intuition. Let $B$ be a natural number such that $L_B^\omega(R) = L^\omega(R)$. Let $\rho$ be an accepting run over $w \in \Sigma^\omega$ with $\max\{\rho\} \leq B$ and let $\epsilon = 1/(4\cdot B)$. Let us denote by $H$ the set of concrete runs of $A$ along all prefixes of $\rho$ given by Lemma~\ref{Lemma:TA-RC-correspondence}.

First, we define an equivalence relation $\sim_K$ on clock valuations by $\nu \sim_K \nu'$ if for all clocks $x$, $\nu(x) \neq \nu'(x)$ implies $\nu(x) > K$ and $\nu'(x) > K$. Let $K$ be the greatest constant which appears in $A$. It is easy to see that for each $\epsilon$, $\sim_K$ has a finite index on the set of valuations $\{\nu | \forall x\in\clocks \exists k \in\Nat. \nu(x) = k\cdot \epsilon \}$. Also, $\sim_K \subseteq \cong_K$.

We construct the prefixes inductively. We assume that we can build a prefix of length $j$ ending in a state $\langle q, \nu \rangle$ such that there is an infinite subset of $H$ containing only runs whose $j$-th state is $\langle q, \nu' \rangle$ for some $\nu' \sim_K \nu$. The run of length $0$ is just the initial state $\langle q_0, \nu_0 \rangle$ (which is a prefix of all runs in $H$). To build the prefix of length $j+1$, we need to extend the prefix of length $j$. We have infinitely many runs whose $j$-th state is $\langle q, \nu' \rangle$ for some $\nu' \sim_K \nu$. We pick an infinite subset of these runs such that the valuations in their $j+1$-st states are equivalent with respect to $\sim_K$. There is always such an infinite subset, because $\sim_K$ has a finite index in $\epsilon$-sampled semantics. We pick a state $\langle q', \nu' \rangle$ such that it can be reached from $\langle q, \nu \rangle$ and it is equivalent with respect to $\sim_K$ to the states in the infinite subset as the $j+1$-st state. Clearly, there is such a state.

For the other direction, let us assume that for each $B$ there is $w_B \in \Sigma^\omega$ such that $w_B\notin L_B^\omega(R)$. We show that $A$ cannot be $\omega$-sampled. For each $\epsilon$ we pick $B = 1/\epsilon$. There is a counter $C_{xy}$ which exceeds $B$ in each accepting run of $R$ over $w_B$. From Lemma~\ref{Lemma:cost-is-real}, each accepting run of $A$ over $w_B$ requires $\numodels{xy} \geq B\cdot \epsilon = 1$ in some state $\langle q, \nu \rangle$ along this run. But from the definition, $\numodels{xy}$ is always strictly smaller than $1$.\qed

}

Note that if a timed automaton can be sampled then one can also compute a sampling rate $\epsilon$. First, it is possible to determine a limit $B$ for the extended R-automaton $R'$ constructed according to Section~\ref{Sec:Encoding-TA-RC} such that $L_B(R') = L(R')$ or $L_B^\omega(R') = L^\omega(R')$. If we know that the language of $R'$ is limited then this can be done by checking the language equality systematically for all values of $B$. Having a value for $B$, we set $\epsilon$ to be equal to $1/(4\cdot B)$. One can also compute a value for $B$ directly from the parameters of $R'$, which is shown in~\cite{aky08r-automata}.

\section{Conclusions}
\label{Sec:Conclusions}


Timed automata with dense time semantics can enforce behaviors, where time distances between events monotonically grow while being bounded by some integer. We have formulated a property distinguishing timed automata which do not use this ability: the untimed language of an automaton in question can be accepted in a semantics where all time delays are multiples of a fixed rational number. These automata preserve all qualitative behaviors (untimed words) when implemented on a platform with a fixed sampling rate. We have also shown that it is decidable whether a timed automaton enjoys this property. The proof characterizes the time differences enforced along runs by a new type of counter automata -- Extended R-automata. As a technical contribution of its own interest, we have shown that limitedness is decidable for these automata.

In spite of this positive outcome, our results show a high degree of complexity present in dense time behaviors enforced by strict inequalities. Therefore, when we require from our model that it can be turned into a sampled implementation, we have to consider usage of strict inequalities with a great care. It is questionable whether the modeling advantages of strict inequalities outweigh the costs of sampling analysis.

\forget{
It is possible to avoid this type of complex behaviors by disallowing strict inequalities. Closed timed automata preserve all qualitative behaviors with sampling rate equal to $1$. These two facts together put in doubt the argument for dense time semantics of timed automata, saying that one does not have to consider a concrete sampling rate of the implementation in the modeling and verification phase.
}

\forget{

\begin{figure}[tbhp]
  \begin{center}

    \[ \xymatrix{
      \ar[r] & *++[o][F=]{l_0} \ar@/^.5cm/[rrrr]^{a,y=1, y:=0} & & & & *++[o][F-]{l_1}
      \ar@/^.5cm/[llll]^{b,1>y \wedge x>1, x:=0} \\
    }
    \]

    \caption{A timed automaton which does not preserve qualitative behaviors in sampled semantics. It enforces the difference between clock values to grow. If the values of $x,y$ are $0.1, 0.6$, respectively, in the location $l_0$ then the difference between the clock values in the location $l_0$ after reading $ab$ will be strictly greater than $0.5$. This example is adapted from~\cite{AlurM04}.}
    \label{Fig:no-sampling}
  \end{center}
\end{figure}

}

\section*{Acknowledgements} We would like to thank Radek Pel\'{a}nek for fruitful discussions and anonymous reviewers for their constructive comments.

\bibliographystyle{alpha}
\bibliography{sampled-semantics}


\end{document}